\tikzset{
	commutative diagram/.style 2 args={
		matrix of math nodes, row sep=#1,column sep=#2,
		text height=1.5ex, text depth=0.25ex},
	commutative diagram/.default={1cm}{1cm}
}
\tikzset{    
	skip loop/.style n args={3}{to path={-- ++(0,#1) -| node[pos=0.25,#2] {#3} (\tikztotarget)}},
	cross line/.style={preaction={draw=white, -, line width=6pt}}
}
\newcommand{\e}{\epsilon}
\newcommand{\U}{\mathcal{U}}
\newcommand{\B}{\mathcal{B}}
\newcommand{\C}{\mathcal{C}}
\newcommand{\A}{\mathcal{A}}
\newcommand{\KK}{\mathbb{K}}
\newcommand{\E}{\mathcal{E}}
\newcommand{\V}{\mathcal{V}}
\newcommand{\R}{\mathcal{R}}
\newcommand{\lng}{\mathcal L}
\newcommand{\M}{\mathcal{M}}
\newcommand{\N}{\mathcal{N}}
\newcommand{\term}{\mathbb{T}}
\newcommand{\ol}[1]{\overline{#1}}
\newcommand{\ang}[1]{\langle{#1}\rangle}
\newcommand{\md}[1]{\term_{\mathcal K}{#1}}
\newcommand{\pd}{\displaystyle\prod}
\newcommand{\T}{\term}
\newcommand{\naturals}{\mathbb{N}}
\newcommand{\prationals}{\mathbb{Q}_+}
\newcommand{\extreals}{\mathbb{R}_+ \cup \{\infty\}}
\newcommand{\cat}{\mathbf}
\newcommand{\str}[3]{(#1_{#2})_{#2\in #3}}
\tikzset{
	commutative diagram/.style 2 args={
		matrix of math nodes, row sep=#1,column sep=#2,
		text height=1.5ex, text depth=0.25ex},
	commutative diagram/.default={1cm}{1cm}
}
\tikzset{    
	skip loop/.style n args={3}{to path={-- ++(0,#1) -| node[pos=0.25,#2] {#3} (\tikztotarget)}},
	cross line/.style={preaction={draw=white, -, line width=6pt}}
}
\begin{document}
	
	\title{On the Axiomatizability of Quantitative Algebras} %%[Short Title]
	
	\author[R.\ Mardare]{Radu Mardare}	%required
	\address{Dept.\ of Computer Science, Aalborg University, Denmark}	%required
	\email{mardare@cs.aau.dk}  %optional
	%\thanks{}	%optional
	
	\author[P.\ Panangaden]{Prakash Panangaden}	%optional
	\address{School of Computer Science, McGill University, Canada}	%optional
	\email{prakash@cs.mcgill.ca}  %optional
	%\thanks{thanks 2, optional.}	%optional
	
	\author[G.\ Plotkin]{Gordon Plotkin}	%optional
	\address{School of Informatics, University of Edinburgh, Scotland}	%optional
	\email{gdp@inf.ed.ac.uk}  %optional
	%\thanks{thanks 4, optional.}	%optional
	
	%% etc.
	
	%% required for running head on odd and even pages, use suitable
	%% abbreviations in case of long titles and many authors:
	
	%% mandatory lists of keywords and classifications:
	\keywords{Universal algebra, quantitative algebra, equational logic, varieties, quasivarieties, first-order structure}
	\subjclass{G.3,I.1.4,I.6.4} % MANDATORY list of acm classifications
	\titlecomment{An earlier version of this paper appeared as...}

%	\maketitle

%\begin{document}

%%%%%%%%%%%%%%%%
%%%%%%%%%%%%%%%%
%%%%%%%%%%%%%%%%

\begin{abstract}
Quantitative algebras (QAs) are algebras over metric spaces defined
by quantitative equational theories as introduced by the same authors in a related paper presented at LICS 2016. These algebras
provide the mathematical foundation for metric semantics of
probabilistic, stochastic and other quantitative systems. This paper
considers the issue of axiomatizability of QAs. We investigate the
entire spectrum of types of quantitative equations that can be used
to axiomatize theories: (i) simple quantitative equations; (ii) Horn
clauses with no more than $c$ equations between variables as
hypotheses, where $c$ is a cardinal and (iii) the most general case
of Horn clauses. In each case we characterize the class of QAs and
prove variety/quasivariety theorems that extend and generalize
classical results from model theory for algebras and first-order
structures.

\end{abstract}

\maketitle 
\section{Introduction}

	In~\cite{Mardare16} we introduced the concept of a quantitative equational
theory in order to support a quantitative algebraic theory of effects and
address metric-semantics issues for probabilistic, stochastic and
quantitative theories of systems.  Probabilistic programming, in
particular, has become very important recently~\cite{Pfeffer16}, see, for
example, the web site~\cite{probprog}.  The need for semantics and
reasoning principles for such languages is important as well and recently one can witness an increased interest of the research community in this topic.
Equational reasoning is the most basic form of logical
reasoning and it is with the aim of making this available in a metric
context that we began this work.

A quantitative equational theory allows one to write equations of the form
$s =_{\epsilon}t$, where $\epsilon$ is a rational number, in order to
characterize metric structures in an algebraic context.  We developed the
analogue of universal algebras over metric spaces -- called quantitative
algebras (QAs), proved analogues of Birkhoff's completeness theorem and
showed that quantitative equations defined monads on metric spaces.  We
also presented a number of examples of interesting quantitative algebras
widely used in semantics.  We presented variants of barycentric %Barycentric 
algebra
\cite{Stone49} that model the space of probabilistic/subprobabilistic
distributions with either the Kantorovich, Wasserstein or total variation
metrics; the same algebras can also be used to characterize the space of
Markov processes with the Kantorovich metric.  We also gave 
%Did Frank not do this already? Give a reference.
a notion of
quantitative semilattice that characterizes the space of closed subsets of
an extended metric space with the Hausdorff metric.  In all these examples
we emphasized elegant axiomatizations characterizing these well-known
metric spaces. In  \cite{Bacci16} the same tools are used to provide axiomatizations for a fix-point semantics for Markov chains. Of course, some of these examples can be given by ordinary
monads, as shown in~\cite{vanBreugel07,Adamek12a}, but we are aiming to fully integrate
metric reasoning into equational reasoning.

What was left open in our previous work was what kinds of metric-algebraic
structures could be axiomatized.  This is an important issue if we want a
general theory for metric-based semantics, since we will need to understand
whether the class of systems of interest with their natural metrics can, in
fact, be axiomatized.  In the present paper, we discuss the general question
of what classes of quantitative algebras can be axiomatized by quantitative
equations, or by more general axioms like Horn clauses.

The celebrated Birkhoff variety theorem~\cite{Birkhoff35} states that a
class of algebras is equationally definable if and only if it is closed
under homomorphic images, subalgebras, and products.  Many
extensions have been proved for more general kinds of
axioms~\cite{Adamek98} and for coalgebras instead of
algebras~\cite{Adamek01,Goldblatt01}, and see
\cite{Adamek10,Barr94,Manes12} for a categorical perspective.  It is
natural to ask if there are corresponding results for quantitative
equations and quantitative algebras.  Since classical equations $s=t$
define a congruence over the algebraic structure, while quantitative
equations $s=_\e t$ define a pseudometric coherent with the algebraic
structure, the classical results do not apply directly to our case.  One
therefore needs fully to understand how metric structures behave
equationally to answer the question.  This is the challenge we take up here.

The interesting examples that we present in \cite{Mardare16} require not only axiomatizations involving quantitative equations of the form $s=_\e t$, but also conditional equations, i.e., Horn clauses involving quantitative equations. Already the simple case of Horn clauses of the form $\{x_i=_{\epsilon_i}y_i \mid i \in I\}$ as hypotheses, where $x_i,y_i$ are variables only, provides interesting examples. 
All %these 
this  forces us to develop some new concepts and proof techniques that are innovative in a number of ways.

%If one restricts to equations of the form $s=_{\epsilon}t$ the answer extends the classical case and one has to carefully adapt the arguments to include the metric perspective.  However, the interesting examples from \cite{Mardare16} require conditional equations with hypotheses of the form $\{x_i=_{\epsilon_i}y_i|i\in I\}$, where $x_i,y_i$ are variables only.  Extending to this case and to more general Horn clauses forces us to invent some some new ideas and proof techniques.

Firstly, we show that considering a metric structure on top of an
algebraic structure, which implicitly requires one to replace the concept
of \textit{congruence} with a \textit{pseudometric} coherent with the
algebraic structure, is not a straightforward generalization.  Indeed, one
can always think of a congruence $\cong$ on an algebra $\A$ as to the
kernel of the pseudometric $p_{\cong}$ defined by $p_{\cong}(a,b)=0$ iff
$a\cong b$ and $p_{\cong}(a,b)=1$ otherwise.  Nevertheless, many standard
model-theoretic results about axiomatizability of algebras are particular
consequences of the discrete nature of this pseudometric.  
Many of
these results fail when one takes a
more complex pseudometric, even if its kernel remains a congruence.
%When one takes a
%more complex pseudometric, even if its kernel remains a congruence, many of
%these results fail.

Secondly, we show 
that in the case of quantitative algebras, %the
quantitative equation-based axiomatizations behave very similarly to %the case of 
axiomatizations  by %where the axioms are 
\textit{Horn clauses} involving
only quantitative equations between variables as hypotheses.  And this
remains true even when one allows functions of countable arity in the
signature.  %The 
Horn clauses of this type are directly
connected to \textit{enriched Lawvere theories} \cite{Robinson02}.  We give a uniform treatment of all these cases by
interpreting  quantitative equations as Horn clauses with empty sets of
hypotheses.

We discover, in this context, a special class of homomorphisms that we call
\textit{$c$-reflexive homomorphisms}, for a cardinal $c$, that play a
crucial role.  These homomorphisms preserve distances on selected subsets
of cardinality less than $c$ of the metric space, i.e., any $c$-space in
the image pulls back (modulo non-expansiveness) to one in the domain.  This
concept generalizes the concept of homomorphism of quantitative algebras,
since any homomorphism of quantitative algebras is $1$-reflexive.  The
central role of  $c$-reflexive homomorphisms is
demonstrated by %the 
a \textit{weak universality} property, proved below. 

This result also shows that the classical canonical model construction for
classes of universal algebras is mathematically inadequate and works in the
traditional settings only because it is, coincidentally, a model isomorphic
with the more general one that we present here.  However, apart from the
classic settings (of universal algebras and congruences) the standard
construction fails to produce a %an isomorphic 
model  isomorphic with the ``natural'' one
and consequently, it fails to reflect the weak universality properly up to
$c$-reflexive homomorphisms.

Our main result in this first part of the paper is the $c$-variety theorem
for a regular cardinal $c\leq\aleph_1$: a class of quantitative algebras
can be axiomatized by Horn clauses, each axiom having fewer than $c$
equations between variables as hypotheses, if, and only if, the class is
closed under subobjects, products and $c$-reflexive homomorphisms.  In
particular, (i) the class is a $1$-variety (closed under subobjects,
products and homomorphisms) iff it can be axiomatized by quantitative
equations; (ii) it is an $\aleph_0$-variety iff it can be axiomatized by
Horn clauses with finite sets of quantitative equations between variables
as hypotheses; and (iii) it is an $\aleph_1$-variety iff it can be
axiomatized by Horn clauses with countable sets of quantitative equations
between variables as hypotheses.  Notice that in the light of the
previously mentioned relation between congruences and pseudometrics, (i)
generalizes the original Birkhoff result for universal algebras. Without the concept of $c$-reflexivity, one can only state a quasi-variety theorem under the very strong assumption that 
%the 
reduced products always exist, as  %it 
happens, e.g., in \cite{Weaver95}.

%In \cite{Weaver95} the authors are looking for a similar result, but they fail in identifying the concept of $c$-reflexivity; as a result, they state the variety theorem under the very strong assumption that the reduced products always exist.

Thirdly, we also study the axiomatizability of classes of quantitative
algebras that admit Horn clauses as axioms, but which are not restricted to
quantitative equations between variables as hypotheses.  We prove that a
class of quantitative algebras admits an axiomatization of this type,
whenever it is closed under isomorphisms, subalgebras and what we call
\textit{subreduced products}.  
%A subreduced product is a novel concept that
%we have introduced in this context.  It denotes a 
These are quantitative subalgebras
of (a special type of) products of elements in the given class; however,
while these products are always algebras, they are not always quantitative
algebras, and this is where the new concept plays its role.  This new type
of closure condition allows us to generalize the usual quasivariety theorem
of universal algebras.

Since all the isomorphisms of quantitative algebras are $c$-reflexive
homomorphisms, and since a $c$-variety is closed under subalgebras and
products, it is also closed under subreduced products, as they are
quantitative subalgebras of the product.  Hence, a $c$-variety is closed
under these operators for any regular cardinal $c>0$ and so our
quasivariety theorem extends %further 
the $c$-variety theorem further.  These all
are novel generalizations of the classical results. 

Last, but not least, to achieve the aforementioned results for general Horn
clauses, we had to generalize concepts and results from model theory of
first-order structures considering first-order model theory on metric
structures.  Thus, we extended to the general unrestricted case the
pioneering work in~\cite{BenYaakov08} devoted to continuous logic over
complete bounded metric spaces.  We identified the first-order counterpart
of a quantitative algebra, that we %called 
call a \textit{quantitative first-order
	structure}, and prove that the category of quantitative algebras is
isomorphic to the category of quantitative first-order structures.  We have
developed %the 
\textit{first-order equational logic} for these structures
and extended standard model theoretic results for quantitative first-order
structures.  Finally, the proof of the quasivariety theorem, which actively
involves the new concept of subreduced product, is based on a more
fundamental proof pattern that can be further used in model theory for
other types of first-order structures.  We essentially show how one can
prove a quasivariety theorem for a restricted class of first-order
structures that obey infinitary axiomatizations.

We have left behind an open question: the results regarding %the
unrestricted Horn clauses have been proved under the restriction of having
only finitary functions in the algebraic signature.  This was required in order to use
%by the use of 
standard model theoretic techniques.  We believe that a similar
result might also hold  for countable functions.
%true for the case of countable functions as well.

\section{Preliminaries on Quantitative Algebras}

In this section we recall some basic concepts used to define the quantitative algebras, from \cite{Mardare16}, and introduce a couple of new concepts needed in our development. 

%+++++++++++++++++++++++++++++++++++++++++++++++++++++++++++++++++++++++++++++

\subsection{Quantitative Equational Theories}

Consider an \textit{algebraic similarity type} $\Omega$, which is a set containing function symbols of
finite or countable arity (we see constants as functions of arity 0). If $c$ is the arity of the function $f$ in $\Omega$, we write $f:c\in\Omega$. 
\\Given a set $X$ of variables, let $\term X$ be
the $\Omega$-\emph{term algebra} over $X$, i.e., the $\Omega$-algebra having as elements all the terms generated from the set $X$ of variables and the functions symbols in $\Omega$.
\\If $f:c\in\Omega$ and $\str{t}{i}{I}$ is an indexed family
of terms with with $|I|=c$, we write $f(\str{t}{i}{I})$ for the term obtained by applying $f$ to this family of terms in the order given by $I$.

A \emph{substitution} is a function $\sigma:X\to\term X$.  It can be
canonically extended to a homomorphism of $\Omega$-algebras $\sigma \colon \term X\to\term X$ by: 
	$$\text{for any }f:|I|\in\Omega,~~\sigma(f(\str{t}{i}{I}))=f(\sigma(t_i)_{i\in I}).$$ 
In what follows $\Sigma(X)$ denotes the set of substitutions on $\term X$.

If $\Gamma\subseteq\term X$ is a set of terms and $\sigma\in\Sigma(X)$, let
$\sigma(\Gamma)=\{\sigma(t)\mid t\in\Gamma\}$.  

We use $\V(X)$ to denote the set of \emph{indexed equations} of the form $x=_\e y$
for $x,y\in X$ and $\e\in\prationals$; similarly, we use $\V(\term X)$ to denote
the set of indexed equations of the form $t=_\e s$ for $t,s\in\term X$,
$\e\in\prationals$.  We call them \emph{quantitative equations}.

Let $\E(\T X)$ be the class of \emph{conditional quantitative equations} on
$\term X$, which are constructions of the form
$$\{s_i=_{\e_i}t_i\mid i\in I\}\vdash s=_\e t,$$ where $I$ is
a countable\footnote{Anticipating the deduction system, notice that, as usual, the hypotheses containing only variables and functions that are not present in the syntax of $\phi$ can be ignored (e.g., by involving a cut-elimination rule), and since $\phi$ can only contain a countable set of terms and functions, we can safely assume that conditional equations have a countable (possibly finite or empty) set of hypotheses.} index set, $\str{s}{i}{I}, \str{t}{i}{I}\subseteq\T X$ and
$s,t\in\T X$.  \\If $V\vdash\phi\in\E(\term X)$, we refer to the elements of
$V$ as the \emph{hypotheses} and to quantitative equation $\phi$ as the \textit{conclusion} of the conditional equation.   

When the hypotheses are only quantitative equations between variables, the quantitative conditional equation is called \textit{basic conditional equation}. These play a central role in our theory and for this reason it is useful to identify a few subclasses of them.

Given a cardinal $0< c\leq\aleph_1$, a $c$-\emph{basic conditional
	equation} on $\term X$ is a conditional quantitative equation of
the form $$\{x_i=_{\e_i}y_i\mid i\in I\}\vdash s=_\e t,$$ where $|I|<c$,
$\str{x}{i}{I}, \str{y}{i}{I}\subseteq X$ and $s,t\in\T X$.

Note that the \emph{$1$-basic conditional equations} are the conditional
equations with an empty set of hypotheses, i.e., of type $\emptyset\vdash s=_\e t$.  We call them \emph{unconditional equations} and, for simplifying notation, we often write $\vdash s=_\e t$.

The \textit{$\aleph_0$-basic conditional equations} are the conditional equations
with a finite set of hypotheses, all equating variables only.  We 
call them \textit{finitary-basic quantitative equations}.

The \textit{$\aleph_1$-basic conditional equations} are all the basic conditional
equations, hence with countable (including finite, or empty) sets of
equations between variables as hypotheses.

The conditional quantitative equations are used for reasoning, and to this end we define the concept of quantitative equational theory, which, as expected, will generalize the classical one, in the sense that $=_0$ is the classical term equality. However, for $\e\neq 0$, $=_\e$ is not an equivalence: the transitivity is replaced by a rule encoding the triangle inequality. Notice also that the rule (Arch) is infinitary and it reflects the Archimedean property of rationals. For a comprehensive study of the quantitative equational theory, see \cite{Mardare16}.  

\begin{defi}[Quantitative Equational Theory] \label{def:QEtheory}
A \emph{quantitative equational theory of type $\Omega$ over $X$} is a set $\U$ of
conditional equations on $\T X$ closed under the rules stated in Table \ref{axioms}, for
arbitrary $t,s,u \in \term X$, $\str{s}{i}{I}, \str{t}{i}{I}\subseteq\T X$,
$\e,\e'\in\prationals$, $\Gamma,\Gamma'\subseteq\V(\term X)$ and
$\phi,\psi\in\V(\term X)$.
\end{defi}

Given a quantitative equational theory $\U$ and a set $S\subseteq\U$, we say that $S$ is a set of axioms for $\U$, or $S$ axiomatizes $\U$, if $\U$ is the smallest quantitative equational theory that contains $S$.

\begin{table}
	\begin{align*} 
	\text{\textbf{(Refl)}} \quad 
	& \vdash t =_0 t \,, \\
	\text{\textbf{(Symm)}} \quad 
	& \{t=_\e s\} \vdash s=_\e t \,, \\
	\text{\textbf{(Triang)}} \quad 
	& \{t =_\e u, u =_{\e'} s \} \vdash t =_{\e+\e'} s \,, \\
	\text{\textbf{(Max)}} \quad 
	& \{t=_\e s\} \vdash t=_{\e+\e'}s \,, \text{ for all $\e'>0$} \,, \\ 
	\text{\textbf{(Arch)}} \quad 
	& \text{for }\e\geq 0,~~\{t=_{\e'}s\mid \e'>\e\} \vdash t=_\e s \,, \\
	\text{\textbf{(NExp)}} \quad
	& \text{for } f:|I|\in\Omega,~~\{t_i=_\e s_i\mid i\in I\} \vdash f(\str{t}{i}{I}) =_\e f(\str{s}{i}{I}) \,, \\
	\text{\textbf{(Subst)}} \quad
	& \text{for all $\sigma \in \Sigma(X)$, $\Gamma \vdash t =_\e s$ implies $\sigma(\Gamma) \vdash \sigma(t) =_\e \sigma(t)$} \,, \\
	\text{\textbf{(Cut)}} \quad 
	& \text{if $\Gamma \vdash \psi$ for all $\psi\in\Theta$, and $\Theta \vdash t =_\e s$, then $\Gamma \vdash t =_\e s$} \,, \\
	\text{\textbf{(Assumpt)}} \quad
	& \text{If $t =_\e s \in\Gamma$, then $\Gamma \vdash t =_\e s$} \,.
	\end{align*}
	\caption{MetaAxioms}\label{axioms}
\end{table}

A quantitative equational theory $\U$ over $\term X$ is \emph{inconsistent} if $\emptyset\vdash x=_0 y\in\U$, where $x,y\in X$ are two distinct variables. $\U$ is \emph{consistent} if it is not inconsistent.

%+++++++++++++++++++++++++++++++++++++++++++++++++++++++++++++++++++++++++++++++++++++++++

\subsection{Quantitative  Algebras}\label{QUA}
The quantitative equational theories characterize algebras supported by metric spaces, when interpreting $s=_\e t$ as "$s$ and $t$ are at most at distance $\e$. We call them quantitative algebras.

\begin{defi}[Quantitative Algebra]\label{QA}
Given an algebraic similarity type $\Omega$, an $\Omega$-\emph{quantitative algebra} (QA) is a tuple $\A=(A,\Omega^\A,d^\A)$, where $(A,\Omega^\A)$ is an $\Omega$-algebra and $d^\A:A\times A \to \extreals$ is a metric on $A$
(possibly taking infinite values) such that all the functions in $\Omega^\A$ are \emph{non-expansive}, i.e., for any $f:|I|\in\Omega$, $\str{a}{i}{I},\str{b}{i}{I}\subseteq A$, and any $\e\geq 0$, if $d^\A(a_i,b_i)\leq\e$ for all $i\in I$, then $$d^\A(f(\str{a}{i}{I}),f(\str{b}{i}{I}))\leq\e.$$
A quantitative algebra is \textit{void} when its support is void and it is \emph{degenerate} if its support is a singleton.  
\end{defi}

As emphasized before, our intuition is that quantitative algebras generalize the concept of algebra and seen from this perspective, requiring that any function in the signature is non-expansive seems the natural way of defining the interaction between the support metric space and the algebraic structure. For the same reason the non-expensiveness must be preserved by homomorphisms.

%\bigskip

\begin{defi}[Homomorphism of Quantitative Algebras]
Given two quantitative algebras of type $\Omega$, $\A_i=(A_i,\Omega,d^{\A_i})$, $i=1,2$, a \emph{homomorphism of quantitative algebras} is a homomorphism $h:A_1\to A_2$ of $\Omega$-algebras, which is non-expansive, i.e.,   s.t., for arbitrary $a,b \in A_1$, $$d^{\A_1}(a,b)\geq d^{\A_2}(h(a),h(b)).$$
\end{defi}
Notice that identity maps are homomorphisms and that homomorphisms are
closed under composition, hence quantitative algebras of type $\Omega$ and
their homomorphisms form a category, written $\cat{QA_\Omega}$.

\textbf{Reflexive Homomorphisms.} There are some classes of specialized homomorphisms that play a central role in describing the quasivarieties of QAs. We call them \textit{reflexive homomorphisms}. 
%A key role is played by a special class of homomorphisms of QAs that we will call reflexive homomorphisms. 

Hereafter we use $A\subseteq_c B$ for a cardinal $c>0$ to mean that $A$ is a subset of $B$ and $|A|<c$. Notice that $A\subseteq_{\aleph_0}B$ means that $A$ is a finite subset of $B$ and $A\subseteq_{\aleph_1}B$ means that $A$ is a countable (possible finite or void) subset of $B$.

\begin{defi}[Reflexive Homomorphism]\label{refhom}
	Given two quantitative algebras of type $\Omega$, $\A_i=(A_i,\Omega,d^{\A_i})$, $i=1,2$, a homomorphism $f:\A_1\to\A_2$ of quantitative algebras is
	\textit{$c$-reflexive}, where $c$ is a cardinal, if for any subset
	$B_2\subseteq_c A_2$ there exists a set $B_1\subseteq A_1$ such that
	$f(B_1)=B_2$ and $$\text{for any }a,b\in B_1,~~d^{\A_1}(a,b)=d^{\A_2}(f(a),f(b)).$$ 
\end{defi}

If $f:\A\to\B$ is a $c$-reflexive homomorphism, $f(A)$ is a
\emph{$c$-reflexive homomorphic image} of $A$. 

Note that any homomorphism of quantitative algebras is $1$-reflexive.
Moreover, for $c>c'$, a $c$-reflexive homomorphism is also $c'$-reflexive. 

Observe also that the restriction $f|_{B_1}:B_1\to B_2$ defined in Definition \ref{refhom} is an isometry of metric spaces. Indeed, $f|_{B_1}$ is surjective, since $f(B_1)=B_2$. It is also injective because otherwise, from $f(a)=f(b)$, we get that $d^{\A_2}(f(a),f(b))=0$, implying  $d^{\A_1}(a,b)=0$; and since $d^{\A_1}$ is a metric, we must have $a=b$.

%The $c$-reflexive homomorphism play a central role in our theory for $c\in\{0,\aleph_0,\aleph_1\}$.  Hereafter we will identify them as \textit{regular cardinals}\footnote{There exists more regular cardinals, but it will be clear from the context of our results that the only regular cardinals we use hereafter in this paper are these three.}.

\textbf{Quantitative Subalgebra.}
The concept of subalgebra generalizes, as expected, both the concept of $\Omega$-subalgebra and of metric subspace.

Given a quantitative algebra $\A=(A,\Omega,d^\A)$, a quantitative algebra $\B=(B,\Omega,d^\B)$ is a \textit{quantitative subalgebra} of $\A$, denoted by $\B\leq\A$, if $\B$
is an $\Omega$-subalgebra of $\A$ and for any $a,b\in B$, $d^\B(a,b)=d^\A(a,b)$.  
%\end{df}

\textbf{Direct Products of Quantitative Algebras.}
Let $(\A_i)_{i\in I}$ be an $I$-indexed family of quantitative algebras of type
$\Omega$, where $\A_i=(A_i,\Omega,d_i)$ for all $i\in I$.  Their \emph{(direct) product}
 is the quantitative algebra $\A=(A,\Omega,d)$ such that  
\begin{itemize}
	\item $A=\pd_{i\in I}A_i$ is the direct product of the sets $A_i$, for $i\in I$;
	\item for each $f:|J|\in\Omega$ and each $a_j=(b_j^i)_{i\in I}$ for $j\in
	J$,	$$f^\A(\str{a}{j}{J})=(f^{\A_i}(\str{b^i}{j}{J}))_{i\in I};$$ 
	\item for $a=\str{a}{i}{I}$, $b=\str{b}{i}{I}$,
	$$d(a,b)=\displaystyle\sup_{i\in I}d_i(a_i,b_i).$$ 
\end{itemize}
The empty product $\pd\emptyset$ is the degenerate algebra with universe
$\{\emptyset\}$. 

The fact that this is a QA follows from the pointwise
constructions of products in both the category of $\Omega$-algebras and in
the category of metric spaces with infinite values where the product metric
is the pointwise supremum.  The non-expansiveness of the functions in the
product algebra follows from the non-expansiveness of the functions in the
components. The product quantitative algebra is written
$\pd_{i\in I}\A_i$.

Direct products have \textit{projection maps} for each $k\in I$,
$$\pi_k:\pd_{i\in I}\A_i\to\A_k,$$ defined for arbitrary $a=(a_i)_{i\in
	I}\in\pd_{i\in I}A_i$ by $\pi_k(a)=a_k$.  If none of the quantitative
algebras in the family is void, the projection maps are always surjective
homomorphisms of QAs. 

%A \textit{subdirect product} of a family $(\A_i)_{i\in I}$ of quantitative algebras of type $\Omega$ is a quantitative subalgebra $\B\leq\pd_{i\in I}\A_i$ such that  $\pi_j(\B)=\A_j$ for each $j\in I$.

\textbf{Closure Operators.}
It is useful in what follows to define a few operators mapping classes of QAs into classes of QAs.

\begin{defi}\label{closureop}
Given a class $\mathcal K$ of quantitative algebras and a cardinal $c$, let  $\mathbb I(\mathcal K)$, $\mathbb S(\mathcal K)$, $\mathbb H_c(\mathcal K)$, $\mathbb P(\mathcal K)$ and $\mathbb V_c(\mathcal K)$ be the classes of quantitative algebras defined as follows.
\begin{itemize}
	\item $\A\in\mathbb I(\mathcal K)$ iff $\A$ is isomorphic to some
	member of $\mathcal K$; 
	\item $\A\in\mathbb S(\mathcal K)$ iff $\A$ is a quantitative
	subalgebra of some member of $\mathcal K$; 
	\item $\A\in\mathbb H_c(\mathcal K)$ iff $\A$ is the $c$-reflexive
	homomorphic image of some algebra in $\mathcal K$; in particular, we
	denote $\mathbb H_1(\mathcal K)$ simply by $\mathbb H(\mathcal K)$ since it is the closure
	under homomorphic images; 
	\item $\A\in\mathbb P(\mathcal K)$ iff $\A$ is a direct product of a
	family of elements in $\mathcal K$; 
	\item $\mathbb V_c(\mathcal K)$ is the smallest class of quantitative
	algebras containing $\mathcal K$ and closed under subalgebras,
	direct products, and $c$-reflexive homomorphic images; such a class is
	called a \textit{$c$-variety of quantitative algebras}.  In
	particular, for $c=1$ we also write $\mathbb V_1(\mathcal K)$ as  $\mathbb V(\mathcal K)$ and call it a \textit{variety}. 
\end{itemize}	
\end{defi}

For any operators
$\mathbb X,\mathbb Y\in\{\mathbb I,\mathbb S, \mathbb H_c,\mathbb P, \mathbb V_c\}$, we write $\mathbb X\mathbb Y$ for their composition; and since this composition is associative, we ignore parentheses when composing more than two operators. Furthermore, for any compositions $\mathbb X,\mathbb Y$ of these we write
$\mathbb X\subseteq\mathbb Y$ if $\mathbb X(\mathcal K)\subseteq\mathbb Y(\mathcal K)$ for any class $\mathcal K$.

The next lemma establishes a series of properties of these operators, similar to the ones on classes of universal algebras.

\begin{lem}\label{operators}
	The closure operators on classes of quantitative algebras enjoy the following properties:
	\begin{enumerate}
		\item whenever $c<c'$, $\mathbb{H}_c\subseteq \mathbb H_{c'}$; 
		\item whenever $c<c'$, if $\mathcal K$ is $\mathbb H_c$-closed, then it is $\mathbb H_{c'}$-closed; in particular, a $\mathbb H$-closed class is $\mathbb H_c$-closed for any $c$;
		\item whenever $c<c'$, if $\mathcal K$ is $c$-variety, then it is a $c'$-variety; in particular, a variety is a $c$-variety for any $c$;
		\item $\mathbb{SH}_c\subseteq\mathbb H_c\mathbb S$; in particular, $\mathbb S\mathbb H\subseteq\mathbb H\mathbb S$;
		\item $\mathbb{PH}_c\subseteq\mathbb H_c\mathbb P$; in particular, $\mathbb P\mathbb H\subseteq\mathbb H\mathbb P$;
		\item $\mathbb P\mathbb S\subseteq\mathbb S\mathbb P$;
		\item $\mathbb H_c$, $\mathbb H$, $\mathbb S$ and $\mathbb{IP}$ are idempotent;
		\item $\mathbb V_c=\mathbb H_c\mathbb {SP}$; in particular, $\mathbb V=\mathbb{HSP}$.
		
	\end{enumerate}
\end{lem}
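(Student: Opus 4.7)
The plan is to prove the items in order: (1)--(3) follow from a single monotonicity observation, (4)--(7) establish the commutation and idempotence laws for the operators, and (8) assembles these into the HSP-style equality. For (1)--(3), the key fact is that $c$-reflexivity strengthens as $c$ grows---any set of size less than $c$ has size less than $c'$ whenever $c \leq c'$, so a $c'$-reflexive homomorphism is also $c$-reflexive, as noted after Definition \ref{refhom}. This yields the containment in (1); (2) follows because closure under the weaker operator forces closure under the stronger; and (3) is immediate from (2), since $\mathbb{S}$- and $\mathbb{P}$-closure come for free.

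For (4), given $\A \leq \C$ with $\C = h(\B)$ and $h : \B \to \C$ a surjective $c$-reflexive homomorphism with $\B \in \mathcal{K}$, set $\B' := h^{-1}(A)$. Then $\B' \leq \B$ (closed under the $\Omega$-operations by the homomorphism property, metric inherited from $\B$), the restriction $h|_{\B'} : \B' \to \A$ is surjective, and it inherits $c$-reflexivity because any lift in $\B$ of a subset of $A$ automatically lies in $\B'$. For (5), given $\A = \prod_i \A_i$ with surjective $c$-reflexive $h_i : \B_i \to \A_i$ and each $\B_i \in \mathcal{K}$, define $h : \prod_i \B_i \to \prod_i \A_i$ componentwise; it is a surjective homomorphism, and for its $c$-reflexivity take $\{a^k\}_{k \in K} \subseteq_c \prod_i A_i$ and apply $c$-reflexivity of each $h_i$ to $\{a_i^k\}_{k \in K} \subseteq_c A_i$ to obtain $T_i \subseteq B_i$ on which $h_i$ is an isometric bijection (using the remark after Definition \ref{refhom}); then set $b_i^k := (h_i|_{T_i})^{-1}(a_i^k)$ and check that the tuples $(b_i^k)_i$ both lift the $a^k$ and preserve pairwise supremum-metric distances. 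Item (6) is the direct observation that componentwise inclusions $\A_i \leq \B_i$ assemble into an inclusion $\prod_i \A_i \leq \prod_i \B_i$.

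For (7), $\mathbb{S}$-idempotence is transitivity of $\leq$; $\mathbb{IP}$-idempotence is the canonical isomorphism $\prod_i \prod_{j \in J_i} \B_{ij} \cong \prod_{(i,j)} \B_{ij}$; and $\mathbb{H}_c$-idempotence (whence also that of $\mathbb{H}$) follows because the composite of two $c$-reflexive surjections $\B \xrightarrow{h} \C \xrightarrow{k} \A$ is again $c$-reflexive---given $D \subseteq_c A$, first lift $D$ through $k$ to an isometric $E \subseteq_c C$ (so $|E| \leq |D| < c$), then lift $E$ through $h$ to $F \subseteq B$ isometric to $E$, and compose the two isometries. For (8), the inclusion $\mathbb{H}_c\mathbb{S}\mathbb{P}(\mathcal{K}) \subseteq \mathbb{V}_c(\mathcal{K})$ is immediate from the definition of $\mathbb{V}_c$; conversely, one checks that $\mathbb{H}_c\mathbb{S}\mathbb{P}(\mathcal{K})$ is already closed under $\mathbb{H}_c$, $\mathbb{S}$, and $\mathbb{P}$, via the chains $\mathbb{H}_c\mathbb{H}_c\mathbb{S}\mathbb{P} \subseteq \mathbb{H}_c\mathbb{S}\mathbb{P}$, $\mathbb{S}\mathbb{H}_c\mathbb{S}\mathbb{P} \subseteq \mathbb{H}_c\mathbb{S}\mathbb{S}\mathbb{P} \subseteq \mathbb{H}_c\mathbb{S}\mathbb{P}$, and $\mathbb{P}\mathbb{H}_c\mathbb{S}\mathbb{P} \subseteq \mathbb{H}_c\mathbb{P}\mathbb{S}\mathbb{P} \subseteq \mathbb{H}_c\mathbb{S}\mathbb{P}\mathbb{P} \subseteq \mathbb{H}_c\mathbb{S}\mathbb{I}\mathbb{P} \subseteq \mathbb{H}_c\mathbb{S}\mathbb{P}$, where the last step absorbs the $\mathbb{I}$ using that every isomorphism is $c$-reflexive (so $\mathbb{I} \subseteq \mathbb{H}_c$).

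The main obstacle is the coordinatewise $c$-reflexivity verification in (5): the individual lifts must be made simultaneously distance-preserving in the product, and it is precisely the isometry property of $h_i|_{T_i}$ (noted after Definition \ref{refhom}) that lets one pick each $b_i^k$ unambiguously so that the supremum-metric distance between the lifted tuples equals that between the originals.
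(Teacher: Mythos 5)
Your proposal is correct and follows essentially the same route as the paper's proof: the same monotonicity observation for (1)--(3), the preimage construction $\B'=h^{-1}(A)$ for (4), the componentwise homomorphism for (5), and the identical chain of operator inclusions (using $\mathbb{I}\subseteq\mathbb{H}_c$ and idempotence) to show $\mathbb{H}_c\mathbb{SP}(\mathcal K)$ is closed under $\mathbb{H}_c$, $\mathbb{S}$, $\mathbb{P}$ in (8). You supply more detail than the paper in two places it leaves implicit --- the simultaneous distance-preserving lifts witnessing $c$-reflexivity of the product map in (5), and the composition of $c$-reflexive surjections in (7) --- and those verifications are sound.
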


\begin{proof}
1, 2, 3.  Follow from the fact that any $c'$-reflexive homomorphism is
$c$-reflexive as well. 

4.  Let $\A\in\mathbb{SH}_c(\mathcal K)$.  Then, there exists
$\B\in\mathcal K$ and a surjective $c$-reflexive homomorphism
$f:\B\twoheadrightarrow\C$ such that $\A\leq\C$. 

We have that $\B'=f^{-1}(\A)\leq\B$, hence $\B'\in\mathbb S(\mathcal
K)$ and there exists the surjective homomorphism
$f|_{\B'}:\B'\twoheadrightarrow\A$.  Since $f$ is $c$-reflexive, also
$f|_{\B'}$ must be $c$-reflexive.  Hence, $\A\in\mathbb{H}_c\mathbb
S$. 

5.  Let $\A\in\mathbb{PH}_c(\mathcal K)$.  Then, there exist a family
$(\B_i)_{i\in I}\subseteq\mathcal K$ and a family of surjective
$c$-reflexive homomorphisms $f_i:\B_i\twoheadrightarrow\A_i$ such that
$\A=\pd_{i\in I}\A_i$.  But then, there exists a surjective
homomorphism $f:\pd_{i\in I}\B_i\twoheadrightarrow\pd_{i\in I}\A_i$
defined by $f(b)(i)=f_i(b(i))$.  Moreover, since each $f_i$ is
$c$-reflexive, also $f$ must be a $c$-reflexive.  Hence,
$\A\in\mathbb{H}_c\mathbb P(\mathcal K)$. 

6.  Let $\A\in\mathbb{PS}(\mathcal K)$.  Then, $\A=\pd_{i\in I}\A_i$
for some $\A_i\leq\B_i\in\mathcal K$.  But then, it is not difficult to
see that $\pd_{i\in I}\A_i\leq\pd_{i\in I}\B_i$ implying
$\A\in\mathbb{SP}(\mathcal K)$.   

7.  The class of $c$-reflexive homomorphisms is closed under
composition.  All these are trivial. 

8.  We obviously have $\mathbb{H}_c\mathbb
V_c=\mathbb{SV}_c=\mathbb{IPV}_c=\mathbb V_c$.  \\Since
$\mathbb{I\subseteq V}_c$, $\mathbb{H}_c\mathbb{SP\subseteq
H}_c\mathbb{SPV}_c=\mathbb V_c$. 

Since $\mathbb H_c$ is idempotent, $\mathbb{H}_c(\mathbb
H_c\mathbb{SP})=\mathbb H_c\mathbb{SP}$. 

Applying the previous results we get
$\mathbb{S(H}_c\mathbb{SP)\subseteq H}_c\mathbb{SSP=H}_c\mathbb{SP}$
and 
\\ $\mathbb{P(H}_c\mathbb{SP)\subseteq H}_c\mathbb{PSP\subseteq
H}_c\mathbb{SPP\subseteq H}_c\mathbb{SIPIP=H}_c\mathbb{SIP\subseteq
H}_c\mathbb{SH}_c\mathbb{P\subseteq H}_c\mathbb
H_c\mathbb{SP=H}_c\mathbb{SP}$. 

Obviously $\mathcal K\subseteq\mathbb{H}_c\mathbb{SP}(\mathcal K)$ and
$\mathbb{H}_c\mathbb{SP}(\mathcal K)$ is closed under $\mathbb{H}_c$,
$\mathbb{S,~ P}$.  Since $\mathbb V_c(\mathcal K)$ is the smallest
class containing $\mathcal K$ and closed under $\mathbb{H}_c$,
$\mathbb{S,~ P}$, we get that $\mathbb V_c(\mathcal
K)\subseteq\mathbb{H}_c\mathbb{SP}(\mathcal K)$. 
\end{proof}

%++++++++++++++++++++++++++++++++++++++++++++++++++++++++++++++++++++++++++++++++
\subsection{Algebraic Semantics for Conditional Quantitative Equations}

Quantitative algebras are used to interpret quantitative equational theories.

Given a quantitative algebra $\A=(A,\Omega^\A,d^\A)$ of type $\Omega$ and a
set $X$ of variables, an \emph{assignment} on $\A$ is an
$\Omega$-homomorphism $\alpha:\term X\to A$; it is used to interpret
abstract terms in $\T X$ as concrete elements in $\A$.  We denote by
$\term(X|\A)$ the set of assignments on $\A$.

\begin{defi}[Satisfiability]\label{sat}	
	A quantitative algebra $\A=(A,\Omega^\A,d^\A)$ of type $\Omega$ under the assignment $\alpha\in\T(X|\A)$ \emph{satisfies} a conditional quantitative equation  
	$\Gamma\vdash s=_\e t\in\E(\term X)$,  whenever   
	$$[d^\A(\alpha(t'),\alpha(s'))\leq \e'\mbox{ for all
	}s'=_{\e'}t'\in\Gamma]~\text{ implies }~d^\A(\alpha(s),\alpha(t))\leq\e.$$ 
	This is denoted by $$\Gamma\models_{\A,\alpha} s=_\e t.$$ 
	$\A$
	\emph{satisfies} $\Gamma\vdash
	s=_\e t\in\E(\term X)$, written $$\Gamma\models_\A s=_\e
	t,$$ if $\Gamma\models_{\A,\alpha} s=_\e
	t,$ for all assignments $\alpha\in\term(X|\A)$; in this case
	$\A$ is a \emph{model} of the conditional quantitative equation.
\end{defi}

Similarly, for a set $\U$ of conditional quantitative equations (e.g.,
a quantitative equational theory), we say that $\A$ is a model of $\U$
if $\A$ satisfies each conditional quantitative equation in $\U$.

If $\mathcal K$ is a class of quantitative algebras we write
$$\Gamma\models_\mathcal K s=_\e t,$$ if for any $\A\in\mathcal K$,
$\Gamma\models_\A s=_\e t$ .  Furthermore, if $\U$ is a quantitative
equational theory we write $$\mathcal K\models\U$$ if all algebras in
$\mathcal K$ are models for $\U$.

For the case of unconditional equations, note that the left-hand side of
the implication that defines the satisfiability relation in Definition \ref{sat} is vacuously satisfied.  For
these, instead of $\emptyset\models_{\A,\alpha} s=_\e t$ and
$\emptyset\models_\A s=_\e t$ we will often write
$\A,\alpha\models s=_\e t$ and $\A\models s=_\e t$ respectively.
Furthermore, for a class $\mathcal K$ of quantitative algebras, $$\mathcal K\models s=_\e t$$
denotes that $\A\models s=_\e t$ for all $\A\in\mathcal K$.

With these concepts in hand we can proceed and define equational classes.

\begin{defi}[Equational Class of Quantitative Algebras]
	For a signature $\Omega$ and a set $\U\subseteq\E(\T X)$ of conditional quantitative equations over the $\Omega$-terms $\term X$, the
	\emph{conditional equational class induced by $\U$} is the class of quantitative
	algebras of signature $\Omega$ satisfying $\U$.  
\end{defi}

We denote this class, as well as the full subcategory of
$\Omega$-quantitative algebras satisfying $\U$, by $\mathbb
K(\Omega,\U)$.  We say that a class of algebras that is a conditional equational 
class is \emph{conditional-equationally definable}.  

If $S$ is an axiomatization for $\U$, the equational class induced by $\U$ coincides with the equational class induced by $S$.

\begin{lem}\label{subalgebras}
	Given a set $\U$ of conditional quantitative equations of type $\Omega$ over $\T X$, $\KK(\Omega,\U)$ is closed under taking isomorphic images and subalgebras.  Consequently, if $\mathcal K$ is a class of quantitative algebras over $\Omega$, then $\mathcal K$, $\mathbb I(\mathcal K)$ and $\mathbb S(\mathcal K)$ satisfy the same conditional quantitative equations.  
\end{lem}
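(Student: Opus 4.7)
The plan is to check the two closure properties directly from Definition \ref{sat} and then deduce the ``consequently'' part formally.

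For isomorphism-closure, I take $\A\in\KK(\Omega,\U)$ and an isomorphism of quantitative algebras $h\colon\A\to\A'$. Being an isomorphism of QAs, $h$ is a bijective $\Omega$-homomorphism and, since both $h$ and $h^{-1}$ are non-expansive, $d^{\A'}(h(a),h(b))=d^{\A}(a,b)$ for all $a,b$. Given any assignment $\alpha'\in\T(X|\A')$, set $\alpha=h^{-1}\circ\alpha'\in\T(X|\A)$. For every $u,v\in\T X$ and every $\delta\in\prationals$, the isometry property gives $d^{\A'}(\alpha'(u),\alpha'(v))\le\delta$ iff $d^{\A}(\alpha(u),\alpha(v))\le\delta$. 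Hence, for any conditional equation $\Gamma\vdash s=_\e t$ in $\U$, the hypotheses hold under $\alpha'$ on $\A'$ exactly when they hold under $\alpha$ on $\A$, and likewise for the conclusion; so $\A\models_\U$ transfers to $\A'\models_\U$.

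For subalgebra-closure, I take $\B=(B,\Omega,d^\B)\leq\A=(A,\Omega,d^\A)$ with $\A\in\KK(\Omega,\U)$. Let $\iota\colon B\hookrightarrow A$ be the inclusion, which is an $\Omega$-homomorphism with $d^\B(b,b')=d^\A(\iota(b),\iota(b'))$ by definition of quantitative subalgebra. Given $\beta\in\T(X|\B)$, the composite $\alpha=\iota\circ\beta\in\T(X|\A)$ is an assignment on $\A$ satisfying $d^\B(\beta(u),\beta(v))=d^\A(\alpha(u),\alpha(v))$ for every $u,v\in\T X$. Thus for each $\Gamma\vdash s=_\e t$ in $\U$, the satisfaction conditions at $\beta$ on $\B$ coincide with those at $\alpha$ on $\A$; since the latter hold, so do the former, proving $\B\in\KK(\Omega,\U)$.

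For the consequence, let $\mathcal K$ be a class of QAs and let $\U_\mathcal{K}$ denote the set of all conditional quantitative equations satisfied by every member of $\mathcal K$. By construction $\mathcal K\subseteq\KK(\Omega,\U_\mathcal{K})$, and by the two paragraphs above $\KK(\Omega,\U_\mathcal{K})$ is closed under $\mathbb I$ and $\mathbb S$, so $\mathbb I(\mathcal K)\cup\mathbb S(\mathcal K)\subseteq\KK(\Omega,\U_\mathcal{K})$. The converse inclusions $\mathcal K\subseteq\mathbb I(\mathcal K)$ and $\mathcal K\subseteq\mathbb S(\mathcal K)$ are immediate (every algebra is isomorphic to itself and is its own subalgebra), so the three classes satisfy exactly the same conditional quantitative equations.

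There is no real obstacle here: the argument is essentially bookkeeping. The only point requiring mild care is that both directions of the implication defining satisfiability must be transported, which is why one needs the isometry in the isomorphism case rather than mere non-expansiveness; the subalgebra case is clean because the metric on $\B$ is by definition the restriction of the metric on $\A$.
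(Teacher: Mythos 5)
Your proposal is correct and follows essentially the same route as the paper: transfer assignments along the isomorphism (using that a QA-isomorphism is an isometry) and along the subalgebra inclusion (using that the subalgebra metric is the restriction of the ambient one), then check both sides of the implication defining satisfiability. The paper dismisses the isomorphism case as trivial and leaves the ``consequently'' part implicit, whereas you spell both out via the auxiliary theory $\U_{\mathcal K}$; this is just added detail, not a different argument.
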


\begin{proof}
The closure w.r.t. isomorphic images derives trivially from Definition \ref{sat}. We prove now the closure under subalgebras.

Let $\A\in\KK(\Omega,\U)$ and $\B\leq\A$. We prove that $\B\in\KK(\Omega,\U)$.

Since $\B\leq\A$, $id_\B:B\to A$ defined by $id_\B(b)=b$ is a morphism of quantitative algebras.

Suppose that $\{s_i=_{\e_i}t_i\mid i\in I\}\vdash s=_e t\in\U$.  Hence, $$\{s_i=_{\e_i}t_i\mid i\in I\}\models_\A s=_e t\in\U,$$ meaning that for any $\alpha\in\term(X|\A)$, $$[d^\A(\alpha(s_i),\alpha(t_i))\leq\e_i~\text{ for all }~i\in I]~~\text{ implies }~~ d ^\A(\alpha(s),\alpha(t)\leq e.$$
Consider an arbitrary $\alpha\in\term(X|\B)$ and note that $\alpha\in\term(X|\A)$ as well.  

Suppose that $[d^\B(\alpha(s_i),\alpha(t_i))\leq\e_i~\text{ for all }~i\in I]$.  This is equivalent to $$[d^\A(\alpha(s_i),\alpha(t_i))\leq\e_i~\text{ for all }~i\in I].$$ But then, we also have $d^\A(\alpha(s),\alpha(t))\leq e$.  Hence, $d^\B(\alpha(s),\alpha(t))\leq e$.
\end{proof}

%++++++++++++++++++++++++++++++++++++++++++++++++++++++++++++++++++++++++++++++++++++++++++++++++++++

\section{The Variety Theorem for Basic Conditional Equations}

In this section we focus on the quantitative equational theories that
admit an axiomatization containing only basic conditional equations,
i.e., conditional equations of type $$\{x_i=_{\e_i}y_i\mid i\in I\}\vdash s=_\e t,$$
for $x_i,y_i\in X$, $s,t\in\term X$ and $\e_i,\e\in\prationals$.  We shall
call such a theory \textit{basic equational theory}.  

For a cardinal $c\leq\aleph_1$, a basic equational theory is a
\textit{$c$-basic equational theory} if it admits an axiomatization
containing only $c$-basic conditional equations, i.e., of
type $$\{x_i=_{\e_i}y_i\mid i\in I\}\vdash s=_\e t,$$ for $|I|<c$,
$x_i,y_i\in X$, $s,t\in\term X$ and $\e_i,\e\in\prationals$.  

An $\aleph_0$-basic equational theory is called a \textit{finitary-basic
	equational theory}; it admits an axiomatization containing only
finitary-basic conditional equations, i.e., of
type $$\{x_i=_{\e_i}y_i\mid i\in 1,..,n\}\vdash s=_\e t,$$ for
$n\in\naturals$, $x_i,y_i\in X$, $s,t\in\term X$ and
$\e_i,\e\in\prationals$.  

A $1$-basic equational theory is called an \textit{unconditional equational
	theory}; it admits an axiomatization containing only unconditional
equations of type $\emptyset\vdash s=_\e
t,$ for $s,t\in\term X$ and $\e\in\prationals$.  

%Observe that the unconditional and the finitary-basic inferences are
%particular cases of basic inferences.

\subsection{Closure under Products and Homomorphisms}

The basic equational theories are special since they guarantee, for their equational class, the closure under direct products, as the following lemma states.

\begin{lem}\label{products}
	If $\U$ is a basic equational theory (in particular, finitary-basic or unconditional), then $\KK(\Omega,\U)$ is closed under direct products.  
\end{lem}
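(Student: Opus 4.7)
The plan is to use the axiomatization directly. Since $\U$ is basic, fix a set $S$ of basic conditional equations that axiomatizes it; because $\KK(\Omega,\U)=\KK(\Omega,S)$ (the equational class induced by an axiomatization coincides with that induced by the theory it generates), it suffices to show that the product $\A=\prod_{j\in J}\A_j$ of a family $(\A_j)_{j\in J}\subseteq\KK(\Omega,\U)$ satisfies every axiom in $S$.

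So I would fix an arbitrary axiom $\{x_i=_{\e_i}y_i\mid i\in I\}\vdash s=_\e t$ in $S$ and an arbitrary assignment $\alpha\colon\term X\to\A$. The key move is to define, for each $j\in J$, the composite $\alpha_j = \pi_j\circ\alpha\colon\term X\to\A_j$. Since $\pi_j$ is an $\Omega$-homomorphism and composition of $\Omega$-homomorphisms is an $\Omega$-homomorphism, each $\alpha_j$ is an assignment on $\A_j$. Moreover, by construction $\alpha_j(u)=\pi_j(\alpha(u))$ for every term $u\in\term X$.

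Next I would unfold the hypothesis. By the definition of the product metric, for any $a,b\in A$ one has $d^\A(a,b)=\sup_{j\in J}d^{\A_j}(\pi_j(a),\pi_j(b))$. Applying this to the hypotheses of the axiom, $d^\A(\alpha(x_i),\alpha(y_i))\leq\e_i$ is \emph{equivalent} to the statement that for every $j\in J$ and every $i\in I$, $d^{\A_j}(\alpha_j(x_i),\alpha_j(y_i))\leq\e_i$. Then, for each fixed $j$, since $\A_j\in\KK(\Omega,\U)$ satisfies the axiom under $\alpha_j$, we conclude $d^{\A_j}(\alpha_j(s),\alpha_j(t))\leq\e$. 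Taking the supremum over $j\in J$ and using $\alpha_j(s)=\pi_j(\alpha(s))$, $\alpha_j(t)=\pi_j(\alpha(t))$ yields $d^\A(\alpha(s),\alpha(t))\leq\e$, which is what we needed.

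There is no serious obstacle here: the only ingredients are that the product metric is a pointwise supremum and that projections are $\Omega$-homomorphisms, both recorded in the construction of $\pd_{i\in I}\A_i$. The one point to flag is why the restriction to basic axioms is harmless for this lemma — it is, because the argument above would go through verbatim even for general conditional equations whose hypotheses involve terms rather than variables. (The restriction becomes essential only later, when one also wants closure under $c$-reflexive homomorphic images and tries to go in the converse direction.) A couple of degenerate cases to mention in passing: the empty product is the degenerate algebra, in which every conditional quantitative equation holds trivially; and for $I=\emptyset$ the hypothesis is vacuous, so the argument specializes correctly to the unconditional case.
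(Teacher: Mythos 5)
Your proof is correct and follows essentially the same route as the paper's: reduce to checking the basic axioms of an axiomatization, push an assignment on the product down to each component via the projections, and transfer hypotheses and conclusion back and forth using the fact that the product metric is the pointwise supremum. Your closing remark that the argument never actually uses that the hypotheses equate only variables is also accurate (product closure holds for arbitrary conditional equational classes); the paper simply states the lemma in the form needed for the variety theorem.
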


\begin{proof}
%Since a subdirect product is a subalgebra of the direct one, in the light of Lemma \ref{subalgebras} it is sufficient to prove it for direct products.  

Assume that $(\A_i)_{i\in I}\subseteq\KK(\Omega,\U)$.  We know that since $\U$ is a basic theory, it exists an axiomatization for $\U$ containing only basic quantitative equations.
It is sufficient to prove that whenever all $\A_i$ satisfy a basic quantitative equation, this is also satisfied by $\pd_{i\in I}\A_i$.

\[
\begin{tikzcd}
&\T X \arrow[ld,swap,"\alpha"] \arrow[d,"\pi_j\circ\alpha"]& \\
\pd_{i\in I}\A_i \arrow[r,"\pi_j"]&  \A_j
\end{tikzcd}
\]

Observe, for the begining, that for any assignment $\alpha:\term X\to\pd_{i\in I}\A_i$ and any $j\in I$,  $\pi_j\circ\alpha\in\T(X|\A_j)$ is an assignment in $\A_j$.

Consider now an arbitrary basic quantitative equation $$\{x_i=_{\e_i}y_i\mid i\in I\}\vdash s=_\e t\in\U.$$  Suppose that for all $i\in I$, $$\{x_i=_{\e_i}y_i\mid i\in I\}\models_{\A_i} s=_\e t.$$  This means that for any assignment, in particular for $\pi_j\circ\alpha$, we have $$d_j(\pi_j(\alpha(x_i)),\pi_j(\alpha(y_i)))\leq\e_i\text{ for all }i\in I~\text{ implies }~d_j(\pi_j(\alpha(s)),\pi_j(\alpha(t)))\leq\e.$$ 
Denote by $d$ the product metric and suppose that for an arbitrary assignment $\alpha\in\T(X|\pd_{i\in I}\A_i)$, $$d(\alpha(x_i),\alpha(y_i))\leq\e_i\text{ for all }i\in I.$$ Hence, 
$$\sup\{d_j(\pi_j(\alpha(x_i)),\pi_j(\alpha(y_i)))\mid\j\in I\}\leq\e_i\text{ for all }i\in I,$$ implying further that for each $j\in I$,
$$d_j(\pi_j(\alpha(x_i)),\pi_j(\alpha(y_i)))\leq\e_i\text{ for all }i\in I.$$
But then, the hypothesis guarantees that for any $j\in I$,	
$$d_j(\pi_j(\alpha(s)),\pi_j(\alpha(t)))\leq\e,$$ equivalent to
$$\sup\{d_j(\pi_j(\alpha(s)),\pi_j(\alpha(t)))\mid j\in I\}\leq\e.$$
Hence, 
$$d(\alpha(s),\alpha(t))\leq\e.$$
In conclusion, all the axioms of $\U$ (which are basic quantitative equations) must be satisfied by $\pd_{i\in I}\A_i$ implying that $\pd_{i\in I}\A_i\models\U$.
\end{proof}

The $c$-reflexive homomorphisms play a central role in characterizing the
basic equational theories in the case of the regular cardinals\footnote{
	The regular cardinals are the cardinals that cannot be obtained by using
	arithmetic involving smaller cardinals.  Thus, for example, $23$ is not a
	regular cardinal but $1$, $\aleph_0$ or $\aleph_1$ are, because none of
	them can be written as a smaller sum of smaller cardinals.}.  In fact,
because our signature admits only functions of countable (including finite)
arities, we will only focus on three regular cardinals: $1$, $\aleph_0$ and
$\aleph_1$.

The next lemma relates the classes of quantitative algebras that admit $c$-basic quantitative equational axiomatizations to their closure under $c$-reflexive homomorphisms.

\begin{lem}\label{homoimg}
	If $\U$ is a $c$-basic equational theory, where $c$ is a non-null regular
	cardinal, then $\KK(\Omega,\U)$ is closed under $c$-reflexive homomorphic images.  %, i.e., if $\A\in\KK(\Omega,\U)$ and $f:\A\to\C$ is a $c$-reflexive 	homomorphism, then $f(\A)\in\KK(\Omega,\U)$.  
	In particular,
	\begin{itemize}
		\item if $\U$ is an unconditional equational theory, then $\KK(\Omega,\U)$ is closed under homomorphic images;
		\item if $\U$ is a finitary-basic equational theory, then $\KK(\Omega,\U)$ is closed under $\aleph_0$-reflexive homomorphic images; 
		\item if $\U$ is a basic equational theory, then $\KK(\Omega,\U)$ is closed under
		$\aleph_1$-reflexive homomorphic images.  
	\end{itemize}
\end{lem}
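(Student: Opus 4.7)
The plan is to fix a $c$-basic axiom $\{x_i =_{\e_i} y_i \mid i \in I\} \vdash s =_\e t$ with $|I| < c$ from an axiomatization of $\U$, take an arbitrary surjective $c$-reflexive homomorphism $f \colon \A \twoheadrightarrow \B$ with $\A \in \KK(\Omega,\U)$, and show that $\B$ satisfies this axiom under every assignment. The central idea is, given any $\beta \in \term(X|\B)$, to lift it to an assignment $\alpha \in \term(X|\A)$ such that $f \circ \alpha = \beta$ on all variables occurring in the axiom and such that $f$ is an isometry on the $\alpha$-images of the hypothesis variables; the axiom can then be transferred from $\A$ to $\B$ using non-expansiveness of $f$.

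First, set $B_2 = \{\beta(x_i), \beta(y_i) \mid i \in I\}$. Since $c$ is regular and $|I|<c$, we have $|B_2| < c$, so by $c$-reflexivity of $f$ there exists $B_1 \subseteq A$ with $f(B_1) = B_2$ such that $f$ restricted to $B_1$ is an isometry (this is the observation made immediately after Definition \ref{refhom}). Define $\alpha \colon X \to A$ as follows: for each variable $v \in \{x_i, y_i \mid i \in I\}$ let $\alpha(v)$ be the unique element of $B_1$ with $f(\alpha(v)) = \beta(v)$; for each variable $v$ occurring in $s$ or $t$ but not among the $x_i, y_i$, let $\alpha(v)$ be any element of $A$ with $f(\alpha(v)) = \beta(v)$, which exists by surjectivity of $f$; values of $\alpha$ on the remaining variables are immaterial. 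Extending $\alpha$ canonically to a homomorphism $\term X \to \A$, the identity $f \circ \alpha = \beta$ holds on every variable appearing in the axiom, hence by the homomorphism property $f(\alpha(s)) = \beta(s)$ and $f(\alpha(t)) = \beta(t)$.

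Now assume the hypothesis $d^\B(\beta(x_i), \beta(y_i)) \leq \e_i$ for all $i \in I$. The isometry yields $d^\A(\alpha(x_i), \alpha(y_i)) = d^\B(\beta(x_i), \beta(y_i)) \leq \e_i$, so the hypothesis of the axiom is satisfied by $\alpha$ in $\A$; since $\A \models \U$, we conclude $d^\A(\alpha(s), \alpha(t)) \leq \e$. Non-expansiveness of $f$ then gives $d^\B(\beta(s), \beta(t)) = d^\B(f(\alpha(s)), f(\alpha(t))) \leq d^\A(\alpha(s), \alpha(t)) \leq \e$, as required. The three bullet points then specialize this: for $c = 1$ the hypothesis set is empty and the lifting reduces to using just surjectivity of $f$; for $c = \aleph_0$ regularity lets us conclude $|B_2|$ is finite; and for $c = \aleph_1$ it lets us conclude $|B_2|$ is countable.

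The only subtle step is the use of regularity to guarantee $|B_2| < c$ from $|I| < c$, so that $c$-reflexivity can be invoked once to cover all hypothesis variables simultaneously; this is also precisely why the asymmetry between hypothesis variables (where isometry is needed) and variables occurring only in $s, t$ (where mere surjectivity suffices) is exactly matched by the definition of $c$-reflexivity.
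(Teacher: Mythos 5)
Your proposal is correct and follows essentially the same route as the paper's proof: use regularity of $c$ to bound the set of hypothesis-variable images, invoke $c$-reflexivity to pull those back isometrically, use surjectivity for the remaining variables of $s$ and $t$, apply the axiom in $\A$, and push the conclusion forward by non-expansiveness of $f$. The only cosmetic difference is that you package the lifted values as an explicit assignment $\alpha$ with $f\circ\alpha=\beta$ on the relevant variables, whereas the paper works directly with the chosen preimages $m_i, n_i, u_j$ and the substituted terms $s_A, t_A$.
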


\begin{proof}
Let $\A\in\KK(\Omega,\U)$, where $\U$ is a $c$-basic equational theory, $f:\A\rightarrow\B$ a $c$-reflexive homomorphism for $\B=f(\A)$.  Since $f$ is a homomorphism, $\B$ is a quantitative
algebra and $f:\A\twoheadrightarrow\B$ is obviously surjective.

Let $\{x_i=_{\e_i}y_i\mid i\in I\}\vdash s=_\e t\in\U$ be a $c$-basic
quantitative equation (hence $|I|< c$) satisfied by $\A$.  Assume that the terms
$s$ and $t$ depend on (a subset of) the set
$$\{x_i, y_i\mid i\in I\}\cup\{z_j\mid j\in J\}\subseteq X.$$ We denote
this by $s(\str{x}{i}{I},\str{y}{i}{I},\str{z}{j}{J})$ and
$t(\str{x}{i}{I},\str{y}{i}{I},\str{z}{j}{J})$.  Here, the $z_j$ are
variables that may occur in the terms $s,t$ but are not among the
variables that occur in the left-hand side of the basic inference.

Since
$$\{x_i=_{\e_i}y_i\mid i\in I\}\models_\A
s(\str{x}{i}{I},\str{y}{i}{I},\str{z}{j}{J})=_\e
t(\str{x}{i}{I},\str{y}{i}{I},\str{z}{j}{J}),$$
for any assignment $\alpha\in\term(X|\A)$,
$[d^\A(\alpha(x_i),\alpha(y_i))\leq \e_i~\text{ for all }~i\in I]$
implies
$$d^\A(s(\alpha(x_i))_{i\in I},(\alpha(y_i))_{i\in I},\alpha(z_j)_{j\in
J}),t((\alpha(x_i))_{i\in I},(\alpha(y_i))_{i\in I},\alpha(z_j)_{j\in
J})\leq\e.$$

Suppose there exists $\beta\in\term(X|\B)$ such that
$[d^\B(\beta(x_i),\beta(y_i))\leq \e_i~\text{ for all }~i\in I]$.  

Let $(a_i)_{i\in I}, (b_i)_{i\in I}, (c_j)_{j\in J}\subseteq\B$
s.t.  $\beta(x_i)=a_i$, $\beta(y_i)=b_i$ and $\beta(z_j)=c_j$.

Since $c$ is a regular cardinal, hence closed under union, $\{a_i,b_i\mid
i\in I\}\subseteq_c\B$.  Because $f$ is $c$-reflexive, there exist \\$m_i\in
f^{-1}(a_i),~ n_i\in f^{-1}(b_i)\in\A$ for each $i\in I$, such
that $$d^\A(m_i,n_i)=d^\B(a_i,b_i).$$ Since $f$ is surjective there exist
$u_j\in f^{-1}(c_j)$ for all $j\in J$.  

Let us write $s_A$ for the element of $\A$ obtained by substituting $m_i$
for the $x_i$, $n_i$ for the $y_i$ and $u_j$ for the $z_j$; similarly we
write $t_A$.  We write $s_B$ for the element of $\B$ obtained by
substituting $a_i$ for the $x_i$, $b_i$ for the $y_i$ and $c_j$ for the
$z_j$.  

Now the algebra $\A$ satisfies the basic quantitative equation, so using the
substitution that produces $s_A$ and $t_A$ we conclude that
$d^{\A}(s_A,t_A) \leq \epsilon$. 

 The homomorphism $f$ maps $s_A$ to $s_B$
and $t_A$ to $t_B$ and being non-expansive we conclude that
$d^{\B}(s_B,t_B)\leq \epsilon$.  

This proves that $\B$ also satisfies the basic quantitative equation.

\end{proof}

Putting together the results of Lemma \ref{subalgebras}, Lemma \ref{products} and Lemma \ref{homoimg}, we get the following result that emphasize the role of the $c$-basic quantitative equations for $c$-varieties.

\begin{cor}\label{satbasic}
	Let $\mathcal K$ be a class of quantitative algebras over the same
	signature and $c\leq\aleph_1$ a regular non-null cardinal.  Then $\mathcal K$,
	$\mathbb P(\mathcal K)$, %$\mathbb P^\leq(\mathcal K)$, 
	$\mathbb
	H_c(\mathcal K)$ and $\mathbb V_c(\mathcal K)$ satisfy all the same
	$c$-basic conditional equations.  
\end{cor}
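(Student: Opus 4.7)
The corollary packages the three preceding preservation lemmas with the definition of $\mathbb V_c$, so I would argue via a single bi-directional inclusion of ``theories versus model classes''. The ``forward'' direction, that any $c$-basic conditional equation satisfied by $\mathbb P(\mathcal K)$, $\mathbb H_c(\mathcal K)$, or $\mathbb V_c(\mathcal K)$ is automatically satisfied by $\mathcal K$, is immediate once one observes that up to isomorphism $\mathcal K$ sits inside each of these three classes: any $\A \in \mathcal K$ is isomorphic to the singleton product $\prod_{i \in \{*\}}\A_i$ with $\A_* = \A$; the identity $\mathrm{id}_\A$ is vacuously $c$-reflexive for every $c$, so $\A \in \mathbb H_c(\mathcal K)$; and $\mathcal K \subseteq \mathbb V_c(\mathcal K)$ by definition. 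Combined with closure under isomorphic images from Lemma \ref{subalgebras}, this handles one half.

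For the converse, I would assemble all the $c$-basic conditional equations satisfied throughout $\mathcal K$ into a single set
\[
\U \;=\; \{\,\phi \in \E(\T X) \mid \phi \text{ is a } c\text{-basic conditional equation and } \mathcal K \models \phi\,\}.
\]
By construction $\mathcal K \subseteq \KK(\Omega,\U)$. Now the three earlier lemmas show that $\KK(\Omega,\U)$ is closed under the three operators generating the $c$-variety: Lemma \ref{subalgebras} gives closure under $\mathbb S$ (and $\mathbb I$); Lemma \ref{products} gives closure under $\mathbb P$ (since $\U$ consists of basic conditional equations); and Lemma \ref{homoimg} gives closure under $\mathbb H_c$, which is exactly where the hypothesis that $c$ is a regular cardinal $\leq \aleph_1$ is used. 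Because $\mathbb V_c(\mathcal K)$ is by definition the smallest class containing $\mathcal K$ and closed under $\mathbb S$, $\mathbb P$, and $\mathbb H_c$, we conclude $\mathbb V_c(\mathcal K) \subseteq \KK(\Omega,\U)$. A fortiori $\mathbb P(\mathcal K) \subseteq \mathbb V_c(\mathcal K) \subseteq \KK(\Omega,\U)$ and $\mathbb H_c(\mathcal K) \subseteq \mathbb V_c(\mathcal K) \subseteq \KK(\Omega,\U)$, so every algebra in any of the three larger classes satisfies every $c$-basic conditional equation in $\U$, which is precisely what we needed.

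There is no real technical obstacle here: the content has already been discharged in Lemmas \ref{subalgebras}, \ref{products}, and \ref{homoimg}, and the corollary's job is only to note that they close $\KK(\Omega,\U)$ under exactly the operators used to build $\mathbb V_c$. The one subtlety worth flagging in the write-up is that the regularity assumption on $c$ is inherited from Lemma \ref{homoimg}, and that the restriction $c \leq \aleph_1$ matches the arities permitted by the signature, so that the three cases $c=1,\aleph_0,\aleph_1$ of the variety spectrum are all simultaneously covered.
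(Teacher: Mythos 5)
Your proposal is correct and follows essentially the same route as the paper, which states Corollary \ref{satbasic} as a direct combination of Lemmas \ref{subalgebras}, \ref{products} and \ref{homoimg} without further argument; you simply make explicit the two inclusions (that $\mathcal K$ embeds, up to isomorphism, into each of $\mathbb P(\mathcal K)$, $\mathbb H_c(\mathcal K)$ and $\mathbb V_c(\mathcal K)$, and that the class of models of the $c$-basic equations valid in $\mathcal K$ is closed under the three generating operators). The details you flag, including where regularity of $c$ enters via Lemma \ref{homoimg}, are accurate.
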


%++++++++++++++++++++++++++++++++++++++++++++++++++++++++++++++++++++++++++++++++++++++++++++++++++++++++++++++++++

\subsection{Canonical Model and Weak Universality}
In this subsection we give the quantitative analogue of the canonical model 
construction and prove weak universality.  Before we begin the detailed
arguments, we note a few points.  In the original variety theorem for universal algebras one
proceeds by looking at all congruences on the term algebra and quotienting
by the coarsest.  This strategy does not work in the present case.  We need
to consider the pseudometrics induced by all assignments of variables;
next, instead of quotienting by the kernel of the coarsest pseudometric, as
the analogy with the usual case would suggest, we need to take the
product of the quotient algebras indexed by these pseudometrics.  We note that this
is indeed a generalization of the non-quantitative case where, coincidentally, this
product algebra is isomorphic to the quotient algebra by the coarsest
congruence.  However, our proof here shows that the natural construction
that guarantees the weak universality, even when one considers reflexive
homomorphisms, is the product of the quotient algebras.  %This fact has implications in model theory. 
%\texttt{Prakash: What implications? Should we make a brief statement here?}

Consider, as before, an algebraic similarity type $\Omega$ and a set $X$ of
variables. 
Let $\mathcal P_{\term X}$ be the set of all pseudometrics
$p:\term X^2\to\extreals$ such that all the functions in $\Omega$ are
non-expansive with respect to $p$.
For arbitrary $p\in\mathcal P_{\term X}$, let $$\term X|_p=(\term
X|_{ker(p)},\Omega,p)$$ be the quantitative algebra obtained by taking the
quotient of $\term X$ with respect to the congruence relation\footnote{The
	non-expansiveness of $p$ w.r.t.  all the functions in $\Omega$ guarantees
	that $ker(p)$ is a congruence with respect to $\Omega$.}
$$ker(p)=\{(s,t)\in\term X^2\mid p(s,t)=0\}.$$ 
Let $\mathcal K$ be a family  of quantitative algebras of type
$\Omega$ and $$\mathcal {P_K}=\{p\in\mathcal P_{\term X}\mid \term
X|_p\in\mathbb{IS}(\mathcal K)\}.$$ 
We begin by showing that $\mathcal{P_K}\neq\emptyset$ whenever
$\mathcal K\neq\emptyset$.

Consider an algebra $\A\in\mathcal K$, let $\alpha\in\T(X|\A)$ be an
arbitrary assignment and $[\alpha]:\T
X^2\to\extreals$ a pseudometric defined for arbitrary $s,t\in \T X$
by $$[\alpha](s,t)=\inf\{\e\mid\A,\alpha\models s=_\e t\}.$$  

\begin{lem}\label{alpha-pseudo}
	If $\A\in\mathcal K$ and $\alpha\in\T(X|\A)$, then
	$[\alpha]\in\mathcal{P_K}$.  Moreover, $\T X|_{[\alpha]}$ is a quantitative
	algebra isomorphic to $\alpha(\T X)$.  
\end{lem}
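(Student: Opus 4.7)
The plan is to unpack the definition of $[\alpha]$ and observe that it is essentially the pullback of $d^\A$ along $\alpha$, from which all the claims will follow routinely.

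First, I would observe that $\A,\alpha \models s =_\e t$ is, by Definition \ref{sat}, just the condition $d^\A(\alpha(s),\alpha(t)) \leq \e$. Hence
\[
[\alpha](s,t) = \inf\{\e \in \prationals \mid d^\A(\alpha(s),\alpha(t)) \leq \e\} = d^\A(\alpha(s),\alpha(t)),
\]
the infimum being attained thanks to the density of $\prationals$ in $\preals$ and the $\infty$ case being handled separately. With this identification, reflexivity, symmetry, the triangle inequality and the pseudometric property for $[\alpha]$ are inherited directly from the metric $d^\A$. Likewise, for any $f\colon|I| \in \Omega$ and families $\str{s}{i}{I},\str{t}{i}{I} \subseteq \T X$, using that $\alpha$ is an $\Omega$-homomorphism,
\[
[\alpha](f(\str{s}{i}{I}), f(\str{t}{i}{I})) = d^\A(f^\A(\str{\alpha(s)}{i}{I}), f^\A(\str{\alpha(t)}{i}{I})),
\]
and the right-hand side is bounded by $\sup_i d^\A(\alpha(s_i),\alpha(t_i)) = \sup_i [\alpha](s_i,t_i)$ by the non-expansiveness of $f^\A$ in $\A$. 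Thus $[\alpha] \in \mathcal P_{\T X}$.

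Next I would construct the isomorphism $\T X|_{[\alpha]} \cong \alpha(\T X)$ that will simultaneously witness membership of $[\alpha]$ in $\mathcal{P_K}$. The candidate map is
\[
\phi\colon \T X|_{[\alpha]} \to \alpha(\T X), \qquad \phi([t]) = \alpha(t).
\]
Well-definedness and injectivity are both immediate from $[\alpha](s,t) = d^\A(\alpha(s),\alpha(t))$ together with the fact that $d^\A$ is a metric (so vanishing distance forces equality). Surjectivity is by definition of the image $\alpha(\T X)$. The homomorphism property follows since $\alpha$ is an $\Omega$-homomorphism, and the isometry property $d^{\alpha(\T X)}(\phi([s]),\phi([t])) = [\alpha](s,t)$ is just the formula above. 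Since $\alpha(\T X)$, equipped with the restriction of $d^\A$ and the operations of $\A$, is a quantitative subalgebra of $\A \in \mathcal K$, we obtain $\T X|_{[\alpha]} \in \mathbb{IS}(\mathcal K)$, i.e., $[\alpha] \in \mathcal{P_K}$.

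There is no real obstacle here; the only subtlety worth flagging is the need to check that $\alpha(\T X)$ really is a quantitative subalgebra of $\A$ (closure under the $\Omega$-operations, which holds because $\alpha$ is a homomorphism, and inheritance of the metric), and to rewrite the infimum in the definition of $[\alpha]$ as a genuine distance in $\A$. Once those two observations are in place, the lemma is essentially formal.
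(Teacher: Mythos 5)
Your proof is correct and follows essentially the same route as the paper's: the explicit identification $[\alpha](s,t)=d^\A(\alpha(s),\alpha(t))$ is just a more direct phrasing of what the paper derives via the satisfaction relation and the soundness of (NExp), and the isomorphism $[t]\mapsto\alpha(t)$ is exactly the map $\ol\alpha$ constructed there. Both arguments establish membership in $\mathcal{P_K}$ by combining non-expansiveness of the operations of $\A$ with the observation that $\alpha(\T X)\leq\A$, so there is nothing to add.
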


\begin{proof}
	The fact that $[\alpha]$ is a pseudometric follows directly from the
	algebraic semantics.  
	
	Let $f:|I|\in\Omega$ and $\str{s}{i}{I},\str{t}{i}{I}\subseteq\T X$.
	Assume that $[\alpha](s_i,t_i)\leq\e$ for all $i\in I$.  This means
	that for each $i\in I$, $\A,\alpha\models s_i=_\delta t_i$ for any
	$\delta\in\prationals$ with $\delta\geq\e$.  The soundness of (NExp)
	provides $\A,\alpha\models f(\str{s}{i}{I})=_\delta f(\str{t}{i}{I})$,
	i.e., $[\alpha](f(\str{s}{i}{I}),f(\str{t}{i}{I}))\leq\delta$ for any
	$\delta\geq\e$.  And this proves that $[\alpha]\in\mathcal{P}_{\T X}$.
	
	We know that $\alpha:\T X\to\A$ is a homomorphism of quantitative algebras,
	hence $\alpha(\T X)\leq\A$ and $\hat\alpha:\T X\to\alpha(\T X)$ defined by
	$\hat\alpha(s)=\alpha(s)$ for any $s\in\T X$ is a surjection.  Since from
	the way we have defined $[\alpha]$ we have
	that $$\hat\alpha(s)=\hat{\alpha}(t)~~\text{ iff }~~[\alpha](s,t)=0,$$ we
	obtain that the map $\ol{\alpha}:\T X|_{[\alpha]}\to\alpha(\T X)$ defined by
	$\ol{\alpha}(s|_{[\alpha]})=\alpha(s)$ for any $s\in\T X$, where
	$s|_{[\alpha]}$ denotes the $ker([\alpha])$-congruence class of $s$, is a QAs
	isomorphism.  
\end{proof}

The previous lemma states that for any algebra $\A\in\mathcal K$ and any
assignment $\alpha\in\T(X|\A)$,
$$\T X|_{[\alpha]}\simeq\alpha(\T X)\leq\A.$$
Since a consequence of it is $\mathcal{P_K}\neq\emptyset$ whenever
$\mathcal K\neq\emptyset$, we can define a pointwise supremum over the
elements in $\mathcal{P_K}$: %, whenever $\mathcal K\neq\emptyset$:
$$d^{\mathcal K}(s,t)=\displaystyle\sup_{p\in\mathcal
	P_{\mathcal K}}p(s,t),\text{ for arbitrary }s,t\in\T X.$$ 
It is not difficult to notice that, $d^{\mathcal K}\in\mathcal
P_{\term X}$.
% Hence, $ker(d^\mathcal K)$ is a congruence on $\T X$.

Let $\md X=(\pd_{p\in\mathcal{P_K}}\T X|_p,\Omega,d^\mathcal K)$ be the
product quantitative algebra with the index set $\mathcal{P_K}$.  

For arbitrary $s\in\T X$, let $\ang s\in\md X$ be the element such that for
any $p\in\mathcal{P_K}$, $\pi_p(\ang s)=s|_p$, where $s|_p\in\T X|_p$
denotes the $ker(p)$-equivalence class of $s$.   

Now note that, if $\mathcal K$ is a class  of quantitative algebras of the
same type containing non-degenerate elements, then the map $\gamma:\T
X\to\md X$ defined by $\gamma(t)=\ang t$ for any $t\in\T X$ is an injective
homomorphism of $\Omega$-algebras. 

%\begin{comment}
\begin{lem}\label{l4}
If $\mathcal K$ is a non-trivial class of quantitative algebras of the
same type, the map $\gamma:\T X\to\md X$ defined by $\gamma(t)=\ang t$
for any $t\in\T X$ is an injective homomorphism of $\Omega$-algebras. 
\end{lem}

\begin{proof}
Since for any $p\in\mathcal{P_K}$, $ker(p)$ is a congruence and
$ker(d^\mathcal K)=\displaystyle\bigcap_{p\in\mathcal{P_K}}ker(p)$,
$\gamma$ is obviously a homomorphism of $\Omega$-algebras.   

We prove now that it is injective.  In order to have that for two
distinct terms $s,t\in\T X$ we have $\ang s=\ang t$, we need that for
any $p\in\mathcal{P_K}$, $s|_p=t|_p$.  Since for any
$p\in\mathcal{P_K}$, $ker(p)$ is a congruence, this will only happen if
there exist two distinct variables $x,y\in X$ such that
$\ang x=\ang y$. 

Note that if $p\in\mathcal{P_K}$ and $\sigma:X\to X$ is a bijection,
then $\sigma p$ defined by $\sigma p(s,t)=p(\sigma(s),\sigma(t))$ is an
element of $\mathcal{P_K}$ as well and $\T X|_p\simeq\T X|_{\sigma p}$.
With this observation we can conclude that if there exist two distinct
variables $x,y\in X$ such that $\ang x=\ang y$, then for any two
distinct variables $u,v\in X$ we have $\ang u=\ang v$, which implies
that $\mathcal K$ only contains degenerate algebras, a contradiction.
\end{proof}
%\end{comment}

%Obviously, in the category of pseudometric spaces with non-expansive maps we have a natural epimorphism $$\nu:(\T X,d^\mathcal K)\twoheadrightarrow(\md X, d^\mathcal K).$$

%The next theorem states that $\md X$ enjoys the weak universal mapping property for $\mathcal K$ over $\ol X$.

In order to state now the weak universality property for a class $\mathcal K$ of quantitative algebras, we need firstly to identify a cardinal that plays a key role in our statement as an upper bound for the reflexive homomorphisms.  We shall denote it by $r(\mathcal K)$:
%Given a class $\mathcal K$ of quantitative algebras, let
$$\begin{array}{ll}
r(\mathcal K)= & \left\{
\begin{array}{ll}
\aleph_1 & \textrm{if }\exists\A\in\mathcal K,~|\A|^+\geq\aleph_1\\
\sup\{|\A|^+\mid\A\in\mathcal K\} & \textrm{otherwise }\\
\end{array}\right.  \\
\end{array}$$ 
where $|\A|$ denotes of the cardinal of the support set of $\A$ and $c^+$
denotes the successor of the cardinal $c$.   

The following theorem is a central result of this paper.  One might be tempted to just use a quotient by $ker(d^{\KK})$ but in that case the
homomorphism that one gets by weak universality does not satisfy the
$c$-reflexive condition. 
\begin{thm}[Weak Universality]\label{weakuniv}
	Consider a class $\mathcal K$ of quantitative algebras containing non-degenerate elements.  For any $\A\in\mathcal K$ and any map $\alpha: X\to\A$ there exists a
	$r(\mathcal K)$-reflexive homomorphism $\beta:\md X\to\A$ such
	that $$\text{ for any }x\in X,~\beta(\ang x)=\alpha(x).$$
	
	%, where $\ang
	%x\in\md X$ is the element such that for
	%any $p\in\mathcal{P_K}$, $\pi_p(\ang
	%x)=x|_{p}$.  
\end{thm}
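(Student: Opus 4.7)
The plan is to produce $\beta$ via the $[\alpha]$-coordinate projection of the product $\md X$ (using Lemma~\ref{alpha-pseudo} to identify that coordinate with a subalgebra of $\A$), and then extract $r(\mathcal K)$-reflexivity from the richness of the remaining coordinates of the product.

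First I extend the set map $\alpha$ to the unique $\Omega$-algebra homomorphism $\hat\alpha\colon \T X \to \A$. By Lemma~\ref{alpha-pseudo} one has $[\alpha] \in \mathcal P_{\mathcal K}$, and there is a QA-isomorphism $\ol\alpha\colon \T X|_{[\alpha]} \to \hat\alpha(\T X) \leq \A$. I define $\beta := \iota \circ \ol\alpha \circ \pi_{[\alpha]}$, where $\pi_{[\alpha]}$ is the projection onto coordinate $[\alpha]$ and $\iota$ is the inclusion of $\hat\alpha(\T X)$ into $\A$. This $\beta$ is a QA-homomorphism by composition, and $\beta(\ang x) = \ol\alpha(x|_{[\alpha]}) = \hat\alpha(x) = \alpha(x)$ for every $x \in X$, as required.

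For $r(\mathcal K)$-reflexivity, given $B_2 \subseteq_{r(\mathcal K)} A$, enumerate $B_2 = \{b_i\}_{i \in I}$ with $|I| < r(\mathcal K)$. The definition of $r(\mathcal K)$ (either $\aleph_1$, or a successor of some $|A'|$ in the finite case) ensures $X$ contains enough distinct variables $\{y_i\}_{i \in I}$ to index $B_2$. I define an auxiliary assignment $\alpha'\colon X \to A$ agreeing with $\alpha$ off $\{y_i\}$ and sending $y_i \mapsto b_i$. By Lemma~\ref{alpha-pseudo} again, $[\alpha'] \in \mathcal P_{\mathcal K}$, and under $\ol{\alpha'}$ the classes $y_i|_{[\alpha']}$ correspond to the $b_i$ with pairwise distances $[\alpha'](y_i, y_j) = d^\A(b_i, b_j)$. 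I assemble $B_1 = \{z_i\}_{i \in I} \subseteq \md X$ by choosing each $z_i$ so that its $[\alpha']$-coordinate is $y_i|_{[\alpha']}$ and its $[\alpha]$-coordinate is an $\ol\alpha$-preimage of $b_i$, with the remaining coordinates left free. Then $\beta(z_i) = b_i$ by construction; the supremum definition of $d^{\mathcal K}$ forces $d^{\md X}(z_i, z_j) \geq [\alpha'](y_i, y_j) = d^\A(b_i, b_j)$ through the $[\alpha']$-coordinate, and non-expansiveness of $\beta$ yields the reverse inequality, so that $\beta|_{B_1}$ is an isometry onto $B_2$.

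The main obstacle is that picking an $\ol\alpha$-preimage of $b_i$ presupposes $b_i \in \hat\alpha(\T X)$, which is not automatic for an arbitrary $\alpha$. Overcoming this is precisely where the product construction (in contrast to a simple quotient by $\ker(d^{\mathcal K})$) earns its keep: the index set $\mathcal P_{\mathcal K}$ is rich enough that we may take $X$ large enough for $\alpha(X)$ to generate $\A$ without loss of generality, or enlarge the definition of $\beta$ by using coordinates beyond $[\alpha]$ to reach elements of $A \setminus \hat\alpha(\T X)$. Carefully harmonising the coordinate used to define $\beta$ with the coordinate witnessing isometry — while preserving $\beta(\ang x) = \alpha(x)$ — is the technical heart of the argument, and is exactly where weak universality breaks down for the naive coarsest-pseudometric quotient.
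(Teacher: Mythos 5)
Your construction of $\beta$ is exactly the paper's: compose the projection $\pi_{[\hat\alpha]}$ with the isomorphism $\ol\alpha$ from Lemma~\ref{alpha-pseudo}, and check $\beta(\ang x)=\alpha(x)$. The problems are all in the reflexivity argument. First, ``with the remaining coordinates left free'' is fatal: $d^{\mathcal K}$ is a \emph{supremum} over all coordinates, so an uncontrolled coordinate $p$ can make $d^{\mathcal K}(z_i,z_j)=\sup_q q(\pi_q(z_i),\pi_q(z_j))$ strictly larger than $d^\A(b_i,b_j)$. You must pin every non-distinguished coordinate of all the $z_i$ to one common value so that those coordinates contribute $0$; this is precisely what the paper does by requiring $\pi_p(u)=\pi_p(v)$ for all $p\neq[\hat\alpha]$. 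Second, your stated source of the upper bound is wrong: non-expansiveness of $\beta$ gives $d^{\mathcal K}(z_i,z_j)\geq d^\A(\beta(z_i),\beta(z_j))$, which is the \emph{same} direction as the lower bound you extract from the $[\alpha']$-coordinate, not the reverse inequality. The inequality $d^{\mathcal K}(z_i,z_j)\leq d^\A(b_i,b_j)$ can only come from bounding every coordinate, which brings you back to the first problem.

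Third, the ``main obstacle'' you flag is real under your reading but is resolved by the wrong means. The set $X$ and the map $\alpha$ are given in the statement, so you may not enlarge $X$ ``without loss of generality''; and redefining $\beta$ through coordinates other than $[\hat\alpha]$ would in general destroy $\beta(\ang x)=\alpha(x)$. (Your auxiliary assignment $\alpha'$ also silently assumes $|X|\geq|B_2|$, which is not given.) The intended resolution is that the reflexivity condition is only invoked --- and is only satisfiable at all --- for subsets $B_2$ of the image $\beta(\md X)=\hat\alpha(\T X)$, whose cardinality is below $r(\mathcal K)$; this is how the theorem is applied in Corollary~\ref{c01} and Theorem~\ref{t01}. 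For such $B_2$ every $b_i$ has a term preimage $s_i$ with $\hat\alpha(s_i)=b_i$, one sets $\pi_{[\hat\alpha]}(z_i)=s_i|_{[\hat\alpha]}$ and pins all other coordinates, and then $[\hat\alpha](s_i,s_j)=d^\A(b_i,b_j)$ already gives the isometry --- no auxiliary pseudometric $[\alpha']$ is needed.
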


\begin{proof}
	%	Let $\cat{A_\Omega}$ and $\cat{QA_\Omega}$ be the category of $\Omega$-algebras and the category of quantitative algebras og type $\Omega$ respectively.  Let $$\mathbb U_\Omega:\cat{A_\Omega}\to\cat{Set}~~\text{ and }~~\mathbb U:\cat{QA_\Omega}\to\cat{Set}$$ be the corresponding forgetful functors.
	
	Let $\cat{QA_\Omega}$ be the category of $\Omega$-quantitative algebras.
	
	The map $\alpha:X\to\A$ can be canonically extended to an
	$\Omega$-homomorphism $\hat{\alpha}:\T X\to\A$. 
	
	Let $\gamma:\T X\hookrightarrow\md X$ be the aforementioned injective
	homomorphism of $\Omega$-algebras. 
	
	From Lemma \ref{alpha-pseudo} we know that  $\T
	X|_{[\hat\alpha]}\simeq\hat\alpha(\T X)\leq\A.$ So, we consider the
	projection $\pi_{[\hat\alpha]}:\md X\twoheadrightarrow\T X|_{[\hat\alpha]}$
	which is a surjective morphism of quantitaive algebras. 
	
	Let $\ol\alpha:\T X|_{[\hat\alpha]}\to\hat\alpha(\T X)$ be the
	isomorphism of quantitaive algebras defined in (the proof of)
	Lemma \ref{alpha-pseudo}.
	
	These maps give us the following commutative diagram.  
	
	\[
	\begin{tikzcd}
	&\text{in }\cat{Set}&&&\text{in }\cat{QA_\Omega}\\[-4ex]
	X \arrow[r, hook, "id_X"] \arrow[d, swap, "\alpha"] &
	\T X\arrow[r, hook, "\gamma"] \arrow[dl,swap,"\hat\alpha"] 
	& \md X\arrow[dll,swap,"\beta"]\arrow[d,two heads,"\pi_{[\hat\alpha]}"] & & \md X\arrow[d,"\beta"]\\
	\A  & \hat\alpha(\T X) \arrow[l,hook',"id_{\hat\alpha(\T X)}"] & \T X|_{[\hat\alpha]}\arrow[l,two heads,hook',"\ol\alpha"] & & \A
	\end{tikzcd}
	\]
	
	The diagonal of this diagram is a map $\beta$ defined for arbitrary
	$u\in\md X$ as
	follows: $$\beta(u)=\ol\alpha\circ\pi_{[\hat\alpha]}(u).$$ 
	
	Note that if $u=\ang s$ for some $s\in\T X$, then
	$$\beta(\ang s)=\ol\alpha(\pi_{[\hat\alpha]}(\ang
	s))=\ol\alpha(s|_{[\hat\alpha]})=\hat\alpha(s)$$
	and further more, if $x\in X$,
	$$\beta(\ang x)=\ol\alpha(\pi_{[\hat\alpha]}(\ang
	x))=\ol\alpha(x|_{[\hat\alpha]})=\hat\alpha(x)=\alpha(x).$$
	%and these prove that the definition of $\beta$ is correct.
	%
	Since $\beta$ is the composition of two homomorphisms of quantitative
	algebras, it is a homomorphism of quantitative algebras.  
	
	Finally we show that $\beta$ is a $r(\mathcal K)$-reflexive.  

	To start with, note that $\hat\alpha(\T X)\leq\A$ is the image of $\md
	X$
	through $\beta$.  Since $|\hat\alpha(\T X)|<r(\mathcal K)$, it only remains to
	prove that there exists a subset in $\md X$ such that for any
	$a,b\in\hat\alpha(\T X)$ we find two elements $u,v$ in this subset such
	that $\beta(u)=a$, $\beta(v)=b$ and $$d^\A(a,b)=d^\mathcal K(u,v).$$ 
	Let $s,t\in\T X$ be such that $\hat\alpha(s)=a$ and
	$\hat\alpha(t)=b$.  Let $u,v\in\md X$ such that
	$\pi_{[\hat\alpha]}(u)=s|_{[\hat\alpha]}$,
	$\pi_{[\hat\alpha]}(v)=t|_{[\hat\alpha]}$ and for any
	$p\neq [\hat\alpha]$, $\pi_p(u)=\pi_p(v)$.
	
	Since $d^\mathcal
	K(u,v)=\displaystyle\sup_{p\in\mathcal{P_K}}p(\pi_p(u),\pi_p(v))$ and
	$\pi_p(u)=\pi_p(v)$ for $p\neq[\hat\alpha]$, we obtain that
	indeed $$d^\mathcal K(u,v)=[\hat\alpha](s,t)=d^\A(a,b).$$
\end{proof}

Observe that the homomorphism $\beta$ is not unique, since any pseudometric $p\in\mathcal{P_K}$ can be associated to a projection $\pi_p$ that will eventually define a homomorphism of type $\beta$ making the diagram commutative -– hence, we have weak-universality. However, only for $\beta$ associated to $[\alpha]$, can we guarantee that $\beta$ is $r(\mathcal K)$-reflexive.

The weak universality reflects a fundamental relation between $\md X$ and
the $r(\mathcal K)$-reflexive closure operator $\mathbb
H_{r(\mathcal K)}$, as stated below. 

\begin{cor}\label{c01}
	If $\A\in\mathcal K$, then for $X$ sufficiently large, $$\A\in\mathbb
	H_{r(\mathcal K)}(\{\md X\}).$$  
\end{cor}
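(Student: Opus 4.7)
The plan is to apply Theorem \ref{weakuniv} directly, the only subtlety being that we need the resulting $r(\mathcal K)$-reflexive homomorphism to be surjective onto $\A$ (so that $\A$, and not merely a subalgebra of it, appears as the reflexive homomorphic image).

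First I would make precise what ``sufficiently large'' means: pick any set $X$ of variables with $|X|\geq |\A|$, so that there exists a surjection $\alpha\colon X\twoheadrightarrow \A$ of underlying sets. If $\mathcal K$ consists only of degenerate algebras, the corollary is trivial (every algebra in question is a singleton, and a singleton is the homomorphic image of any non-void algebra), so I may as well assume $\mathcal K$ contains non-degenerate elements, which is exactly the hypothesis of Theorem \ref{weakuniv}.

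Next, applying Theorem \ref{weakuniv} to this $\alpha$ yields a $r(\mathcal K)$-reflexive homomorphism of quantitative algebras $\beta\colon \md X \to \A$ satisfying $\beta(\ang x)=\alpha(x)$ for every $x\in X$. The key remaining point is surjectivity of $\beta$: since $\alpha$ is surjective, for every $a\in A$ there is some $x\in X$ with $\alpha(x)=a$, hence $\beta(\ang x)=a$. Therefore $\beta$ is a surjective $r(\mathcal K)$-reflexive homomorphism with image exactly $\A$, which by Definition \ref{closureop} gives $\A \in \mathbb H_{r(\mathcal K)}(\{\md X\})$.

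The only mildly tricky point to flag is that in the proof of Theorem \ref{weakuniv}, the image of $\beta$ is identified with $\hat\alpha(\T X)$, which is a subalgebra of $\A$ and not a priori equal to $\A$. However, $\hat\alpha(\T X)\supseteq \hat\alpha(X)=\alpha(X)=\A$ as soon as $\alpha$ is surjective on $X$, so equality holds. No further calculation is needed; the content of the corollary is precisely that weak universality, combined with the freedom to choose $X$ large enough to cover $\A$ set-theoretically, turns ``there exists a $r(\mathcal K)$-reflexive homomorphism into $\A$'' into ``$\A$ belongs to $\mathbb H_{r(\mathcal K)}(\{\md X\})$''.
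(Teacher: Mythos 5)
Your proof is correct and follows the paper's own argument exactly: choose $X$ with $|X|\geq|\A|$ so that a surjection $\alpha\colon X\twoheadrightarrow\A$ exists, apply Theorem \ref{weakuniv} to get the $r(\mathcal K)$-reflexive homomorphism $\beta$, and observe that surjectivity of $\alpha$ forces surjectivity of $\beta$. Your extra remarks (handling the all-degenerate case and noting that the image $\hat\alpha(\T X)$ equals $\A$ once $\alpha$ is onto) are sound clarifications of points the paper leaves implicit.
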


%\begin{comment}
\begin{proof}
Let $X$ be a set such that $|X|\geq|\A|$.  Then, there exists a
surjective map $\alpha:X\twoheadrightarrow\A$.  Let $\beta:\md X\to\A$
be the $r(\mathcal K)$-reflexive homomorphism of quantitative
algebras defined in the previous theorem.  Since $\alpha$ is surjective,
so is $\beta$.  
\end{proof}
%\end{comment}

\begin{cor}\label{l01}
	Suppose that $\T X\neq\emptyset\neq\mathcal K$.  Then,
	$$\md X\in\mathbb{H}_{r(\mathcal K)}\mathbb{SP}(\mathcal K).$$
	Hence, if $\mathcal K$ is closed under $\mathbb{H}_{r(\mathcal K)}$, $\mathbb{S}$ and $\mathbb P$, then $\md X\in\mathcal K$.
\end{cor}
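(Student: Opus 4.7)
The plan is to unpack the definition of $\md X$ and show that each factor of the product sits in $\mathbb{IS}(\mathcal K)$, so the product itself lies in $\mathbb{ISP}(\mathcal K)$, which is in turn contained in $\mathbb{H}_{r(\mathcal K)}\mathbb{SP}(\mathcal K)$.

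First, by the very definition of $\mathcal{P_K}$, for every $p\in\mathcal{P_K}$ we have $\T X|_p\in\mathbb{IS}(\mathcal K)$; that is, there exist $\B_p\in\mathcal K$ and a quantitative subalgebra $\C_p\leq\B_p$ with $\T X|_p\cong\C_p$. The hypotheses $\T X\neq\emptyset$ and $\mathcal K\neq\emptyset$ ensure, via Lemma \ref{alpha-pseudo}, that $\mathcal{P_K}$ is non-empty (for any $\A\in\mathcal K$ and any $\alpha\in\T(X|\A)$, the pseudometric $[\alpha]$ is a witness), so the product is not vacuous.

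Next I would form the product $\prod_{p\in\mathcal{P_K}}\C_p$, which is a quantitative subalgebra of $\prod_{p\in\mathcal{P_K}}\B_p$ by exactly the argument used in the proof of Lemma \ref{operators}(6), hence lies in $\mathbb{SP}(\mathcal K)$. The componentwise isomorphisms $\T X|_p\cong\C_p$ assemble into an isomorphism $\md X\cong\prod_{p\in\mathcal{P_K}}\C_p$ of quantitative algebras (products are computed pointwise in both the algebraic and metric components, and the supremum metric transfers along componentwise isometries), so $\md X\in\mathbb{ISP}(\mathcal K)$.

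Finally, to upgrade the outer $\mathbb I$ to $\mathbb{H}_{r(\mathcal K)}$, I would observe that any isomorphism of quantitative algebras is an isometry (its inverse is also a non-expansive homomorphism), and is therefore $c$-reflexive for every cardinal $c$: given any subset $B_2$ of the codomain, the preimage $B_1=f^{-1}(B_2)$ satisfies $f(B_1)=B_2$ and the restriction $f|_{B_1}$ preserves all distances. Hence $\mathbb I\subseteq\mathbb H_{r(\mathcal K)}$, which gives $\md X\in\mathbb{H}_{r(\mathcal K)}\mathbb{SP}(\mathcal K)$ and proves the first assertion. The second assertion is then immediate: if $\mathcal K$ is closed under $\mathbb{H}_{r(\mathcal K)}$, $\mathbb S$, and $\mathbb P$, then $\mathbb{H}_{r(\mathcal K)}\mathbb{SP}(\mathcal K)\subseteq\mathcal K$, so $\md X\in\mathcal K$. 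There is no real obstacle here; the argument is just careful bookkeeping around the closure operators, the only point worth highlighting being the $c$-reflexivity of isomorphisms that lets $\mathbb I$ be absorbed by $\mathbb H_{r(\mathcal K)}$.
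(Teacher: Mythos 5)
Your argument is correct, and it is worth comparing with the paper's own proof because the two take visibly different routes even though they rest on the same underlying fact. Both ultimately depend on the observation that $\md X$ is, by construction, the direct product of the family $(\T X|_p)_{p\in\mathcal{P_K}}$ and that each factor lies in $\mathbb{IS}(\mathcal K)$ by the very definition of $\mathcal{P_K}$; from there one only has to shuffle closure operators, using $\mathbb{PS}\subseteq\mathbb{SP}$ from Lemma \ref{operators} and the fact that isomorphisms are $c$-reflexive for every $c$, so that $\mathbb I$ is absorbed into $\mathbb H_{r(\mathcal K)}$. You carry out exactly this factor-by-factor bookkeeping. The paper's displayed proof instead invokes weak universality (Theorem \ref{weakuniv}) to produce a reflexive homomorphism $\beta:\md X\to\pd_{p\in\mathcal{P_K}}\T X|_p$; as printed this is both an odd application of that theorem (its hypothesis asks for the target to be a member of $\mathcal K$, which the product need not be) and, on its own, insufficient to place $\md X$ in $\mathbb H_{r(\mathcal K)}\mathbb{SP}(\mathcal K)$ --- one still needs to know that the product of the $\T X|_p$ sits in $\mathbb{SP}(\mathcal K)$, which is precisely the content of your decomposition into $\C_p\leq\B_p\in\mathcal K$. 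So your route is the more elementary and self-contained one: it needs only the definition of $\mathcal{P_K}$, Lemma \ref{alpha-pseudo} for non-vacuousness, the operator calculus of Lemma \ref{operators}, and the $c$-reflexivity of isomorphisms, and it makes explicit the argument the paper only sketches (and partially relegates to an aside). The one point to state carefully, which you do, is that an isomorphism of quantitative algebras is an isometry because both it and its inverse are non-expansive, which is what makes it $c$-reflexive for all $c$.
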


\begin{proof}
Note that there exists a map
$\alpha:X\to\pd_{p\in\mathcal{P_K}}\T X|_p$ defined by $\alpha(x)=\ang
x$.  %, where as before $\ol x\in\md X$ is such that for any
%$p\in\mathcal{P_K}$, $\pi_p(x)=x|_p$, and $x|_p$ denotes the
%$ker(p)$-congruence class of $x$.  
Then, applying the weak universality result, in Theorem \ref{weakuniv},
we get that there exists a $c$-reflexive homomorphism $\beta:\md
X\to\pd_{p\in\mathcal{P_K}}\T X|_p$.  
\begin{comment}
%Observe that for each $p\in\mathcal{P_K}$, $ker(P)$ is a congruence on $\T X$ and $ker(d^\mathcal K)=\displaystyle\bigcap_{p\in\mathcal{P_K}}ker(p)$, which is a congruence as well.  Then, a well-known result from the theory of universal algebras (see.  exercise 11, chapter 8 in \cite{Burris}) implies that $\T X|_{d^\mathcal K}$ can be directly embedded as $\Omega$-algebra in $\pd_{p\in\mathcal{P_K}}\T X|_p$.  Because $d^\mathcal K=\sup_{p\in\mathcal{P_K}}$, all the projections are nonexpansive and consequently, $\T X|_{d^\mathcal K}$ can be directly embedded in $\pd_{p\in\mathcal{P_K}}\T X|_p$ also as a quantitative algebra.  Since $\md X=\T X|_{d^\mathcal K}$, we have the desired result.

2.  Since $\md X\in\mathbb{H}_c\mathbb{P^\leq}(\{\T X|_p\mid
p\in\mathcal{P_K}\})$, we get that $\md X\in\mathbb{H}_c\mathbb{P^\leq
IS}(\mathcal K)$.  But $\mathbb{P^\leq \subseteq SP}$, implying
further $\md X\in\mathbb{H}_c\mathbb{SP}(\mathcal K)$.  

\end{comment}
\end{proof}

The following theorem explains why we refer to $\md X$ as to the canonical model: it is because the class $\mathcal K$ and the quantitative algebra $\md X$ satisfy the same $c$-basic quantitative equations for any non-null regular
cardinal $c\leq r(\mathcal K)$.

\begin{thm}\label{t01}
	Let $\mathcal K$ be a class of quantitative algebras containing
	non-degenerate elements and $c\leq r(\mathcal K)$ a non-null regular
	cardinal.  Let $$\{x_i=_{\e_i}y_i\mid i\in I\}\vdash s=_\e t$$ be an
	arbitrary $c$-basic conditional equation on $\T X\neq\emptyset$, i.e., $|I|<c$.
	Then, $$\{x_i=_{\e_i}y_i\mid i\in I\}\models_{\mathcal K} s=_\e
	t~\text{ iff }~\{x_i=_{\e_i}y_i\mid i\in I\}\models_{\md X} s=_\e
	t.$$ 
\end{thm}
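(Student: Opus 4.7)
My plan is to prove the two implications separately, using the weak universality result as the key tool and exploiting the fact that $c \leq r(\mathcal K)$ to convert between $r(\mathcal K)$-reflexivity and $c$-reflexivity.

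For the forward direction ($\Rightarrow$), the proof is essentially a corollary of previously established facts. From Corollary \ref{l01}, I have $\md X \in \mathbb{H}_{r(\mathcal K)}\mathbb{SP}(\mathcal K)$. Since $c \leq r(\mathcal K)$, every $r(\mathcal K)$-reflexive homomorphism is $c$-reflexive, so in fact $\md X \in \mathbb{H}_c\mathbb{SP}(\mathcal K) = \mathbb V_c(\mathcal K)$ by Lemma \ref{operators}(8). Then Corollary \ref{satbasic} delivers that $\md X$ satisfies precisely the same $c$-basic conditional equations as $\mathcal K$, so if $\mathcal K$ satisfies the given equation, so does $\md X$.

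For the reverse direction ($\Leftarrow$), I fix $\A \in \mathcal K$ and an assignment $\alpha \in \T(X|\A)$ with $d^\A(\alpha(x_i),\alpha(y_i)) \leq \e_i$ for all $i \in I$; I must show $d^\A(\alpha(s),\alpha(t)) \leq \e$. By Theorem \ref{weakuniv} there is an $r(\mathcal K)$-reflexive homomorphism $\beta: \md X \to \A$ with $\beta(\ang x) = \alpha(x)$ for $x \in X$, which (since $c \leq r(\mathcal K)$) is also $c$-reflexive. Let $V = \{x_i, y_i \mid i \in I\}$; since $c$ is regular and $|I| < c$, also $|\alpha(V)| < c$. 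Applying $c$-reflexivity to the set $\alpha(V) \subseteq A$ produces a subset $B \subseteq \md X$ on which $\beta|_B$ is an isometry onto $\alpha(V)$. I then define an assignment $\alpha': X \to \md X$ by letting $\alpha'(z)$, for $z \in V$, be the unique element of $B$ mapping to $\alpha(z)$, and for $z \notin V$, any preimage of $\alpha(z)$ under $\beta$ (which exists since $\alpha(z) \in \hat\alpha(\T X) = \beta(\md X)$). This guarantees $\beta \circ \hat{\alpha'} = \hat\alpha$ as $\Omega$-homomorphisms on $\T X$.

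The isometry property then yields $d^{\mathcal K}(\alpha'(x_i),\alpha'(y_i)) = d^\A(\alpha(x_i),\alpha(y_i)) \leq \e_i$, so the hypothesis of the $c$-basic equation holds for $\alpha'$ in $\md X$; by assumption, $d^{\mathcal K}(\alpha'(s),\alpha'(t)) \leq \e$. Finally, non-expansiveness of $\beta$ together with $\beta(\alpha'(s)) = \alpha(s)$ and $\beta(\alpha'(t)) = \alpha(t)$ gives $d^\A(\alpha(s),\alpha(t)) \leq d^{\mathcal K}(\alpha'(s),\alpha'(t)) \leq \e$, finishing the argument. The delicate step --- and the one that genuinely uses $c$-reflexivity rather than mere surjectivity --- is ensuring the lifted assignment $\alpha'$ both realises the distance bounds \emph{exactly} on the hypothesis-variables and remains consistent when the same variable appears repeatedly among the $x_i, y_i$; this is why the lift must be chosen from the single isometric set $B$ rather than picking preimages independently for each $i$.
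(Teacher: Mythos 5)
Your proof is correct and follows essentially the same route as the paper's: the forward direction via Corollaries \ref{l01} and \ref{satbasic} with Lemma \ref{operators}, and the reverse direction by lifting the assignment through the $c$-reflexive homomorphism supplied by Theorem \ref{weakuniv}, using regularity of $c$ to keep $\{\alpha(x_i),\alpha(y_i)\}$ of size below $c$. The only cosmetic differences are that the paper argues the converse by contradiction rather than directly, and your explicit remark about choosing all lifts from the single isometric set $B$ makes precise a point the paper handles implicitly by applying reflexivity to the whole set $\{\alpha(x_i),\alpha(y_i)\mid i\in I\}$ at once.
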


\begin{proof}
$(\Longrightarrow):$	If $\{x_i=_{\e_i}y_i\mid i\in I\}\models_{\mathcal
K} s=_\e t$, then Corollary \ref{satbasic} and Lemma \ref{operators}
guarantee that $\{x_i=_{\e_i}y_i\mid i\in
I\}\models_{\mathbb{H}_c\mathbb{SP}(\mathcal K)} s=_\e t$.  From Lemma
\ref{l01} applying also Lemma \ref{operators} we know that $\md
X\in\mathbb{H}_c\mathbb{SP}(\mathcal K)$.  Hence, $$\{x_i=_{\e_i}y_i\mid
i\in I\}\models_{\md X} s=_\e t.$$ 

$(\Longleftarrow):$	Suppose now that $\{x_i=_{\e_i}y_i\mid i\in
I\}\models_{\md X} s=_\e t$.  And assume in addition that $s$ and $t$ depend
on (a subset of) the set $$\{x_i,y_i\mid i\in I\}\cup\{z_j\mid j\in
J\}\subseteq X$$ of variables.  We denote this by writing, as before,
$s(\str{x}{i}{I}, \str{y}{i}{I}, \str{z}{j}{J})$ and $t(\str{x}{i}{I},
\str{y}{i}{I}, \str{z}{j}{J})$.  

Suppose there exists $\A\in\mathcal K$ such that $$\{x_i=_{\e_i}y_i\mid
i\in I\}\not\models_{\A} s(\str{x}{i}{I}, \str{y}{i}{I},
\str{z}{j}{J})=_\e t(\str{x}{i}{I}, \str{y}{i}{I}, \str{z}{j}{J}).$$ 
This means that there exists $\alpha\in\T(X|\A)$ such that $$\text{for
all }i\in I,~ d^\A(\alpha(x_i),\alpha(y_i))\leq\e_i~\text{ and
}~d^\A(\alpha(s), \alpha(t))>\e.$$ Moreover,
$\alpha(s)=s((\alpha(x_i))_{i\in I},(\alpha(y_i))_{i\in
I},(\alpha(z_j))_{j\in J})$ and \\$\alpha(t)=t((\alpha(x_i))_{i\in
I},(\alpha(y_i))_{i\in I},(\alpha(z_j))_{j\in J})$.  

Applying the weak universality, Theorem \ref{weakuniv}, we obtain that
$\alpha$ can be extended to a $r(\mathcal K)$-reflexive homomorphism
$\beta:\md X\to\A$ such that $\alpha(x)=\beta(\ang x)$ for any $x\in
X$.  Since $c\leq r(\mathcal K)$, $\beta$ is also $c$-reflexive.  

Since $c$ is regular, $|\{x_i,y_i\mid i\in I\}|< c$. 

Because $\alpha(x_i), \alpha(y_i)\in\hat\alpha(\T X)=\beta(\md X)\leq\A$ and $\beta$ is $c$-reflexive, we obtain that there exist $m_i,n_i\in\md X$ for all $i\in I$ such that
$\alpha(x_i)=\beta(m_i)$, $\alpha(y_i)=\beta(n_i)$ and $d^\A(\alpha(x_i),\alpha(y_{i}))=d^\mathcal K(m_i,n_{i'})$ for any $i\in I$.

Also $\alpha(z_j)\in\hat\alpha(\T X)=\beta(\md X)$, hence there exists $u_j\in\md X$ such that $\alpha(z_j)=\beta(u_j)$ for all $j\in J$.

From here we derive firstly that $$\text{for all }i\in I,~ d^\mathcal K(m_i,n_i)=d^\A(\alpha(x_i),\alpha(y_i))\leq\e_i.$$ Secondly, since $\beta$ is non-expansive, $$d^\mathcal K(s((m_i)_{i\in I},(n_i)_{i\in I},(u_j)_{j\in J}),t((m_i)_{i\in I},(n_i)_{i\in I},(u_j)_{j\in J})))\geq d^\A(\alpha(s),\alpha(t))>\e.$$

With these results in hand, we can define $\alpha_0\in\T(X|\md X)$ such that $$\alpha_0(x_i)=m_i,~ \alpha_0(y_i)=n_i~\text{ for any }i\in I\text{ and }\alpha_0(z_j)=u_j\text{ for any }j\in J.$$ The previous results demonstrates that 
$$\text{for all }i\in I,~ d^\mathcal K(\alpha_0(x_i),\alpha_0(y_i))\leq\e_i~\text{ and }~d^\mathcal K(\alpha_0(s), \alpha_0(t))>\e$$ which contradicts the fact that $\{x_i=_{\e_i}y_i\mid i\in I\}\models_{\md X} s=_\e t$.  
\end{proof}

This last result can further be instantiated for unconditional quantitative equations, which, in addition, can be used to characterize the metric $d^\mathcal K$.

\begin{cor}\label{t001}
	Let $\mathcal K$ be a class of quantitative algebras containing
	non-degenerate elements and $\T X\neq\emptyset$.  Then for arbitrary
	$s,t\in\T X$ and arbitrary $\e\in\prationals$, $$\mathcal K\models
	s=_\e t~\text{ iff }~\md X\models s=_\e t ~\text{ iff }~d^\mathcal
	K(\ang s,\ang t)\leq\e.$$ 
\end{cor}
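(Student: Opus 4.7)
The plan is to close a cycle of three implications, $\mathcal K\models s=_\e t \Rightarrow \md X\models s=_\e t \Rightarrow d^\mathcal K(\ang s,\ang t)\leq\e \Rightarrow \mathcal K\models s=_\e t$, from which the three-way equivalence follows.

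The equivalence $\mathcal K\models s=_\e t$ iff $\md X\models s=_\e t$ is the specialization of Theorem \ref{t01} to $c=1$ and $I=\emptyset$: an unconditional equation is precisely a $1$-basic conditional equation (empty set of hypotheses), and $c=1$ is a non-null regular cardinal with $1\leq r(\mathcal K)$, so the theorem applies directly.

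For $\md X\models s=_\e t \Rightarrow d^\mathcal K(\ang s,\ang t)\leq\e$, I would evaluate satisfaction on the distinguished assignment $\gamma\colon\T X\to\md X$, $u\mapsto\ang u$, which is an $\Omega$-homomorphism by Lemma \ref{l4}. Then $\gamma(s)=\ang s$ and $\gamma(t)=\ang t$, and Definition \ref{sat} (applied with empty hypotheses) immediately gives $d^\mathcal K(\ang s,\ang t)\leq\e$.

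The substantive step is $d^\mathcal K(\ang s,\ang t)\leq\e \Rightarrow \mathcal K\models s=_\e t$. Given $\A\in\mathcal K$ and an arbitrary assignment $\alpha\in\T(X|\A)$, Lemma \ref{alpha-pseudo} places the induced pseudometric $[\alpha]$ in $\mathcal P_\mathcal K$, so $[\alpha](s,t)\leq\sup_{p\in\mathcal P_\mathcal K}p(s,t)=d^\mathcal K(\ang s,\ang t)\leq\e$. Unpacking the definition $[\alpha](s,t)=\inf\{\e'\in\prationals\mid d^\A(\alpha(s),\alpha(t))\leq\e'\}$ and using the density of $\prationals$ in $\preals$ (with the convention $\inf\emptyset=\infty$ when $d^\A(\alpha(s),\alpha(t))=\infty$) yields $[\alpha](s,t)=d^\A(\alpha(s),\alpha(t))$; hence $d^\A(\alpha(s),\alpha(t))\leq\e$, so $\A$ satisfies $s=_\e t$ under $\alpha$, and since $\A$ and $\alpha$ were arbitrary, $\mathcal K\models s=_\e t$. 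The only technicality I anticipate is this last identification of $[\alpha]$ with $d^\A\circ(\alpha\times\alpha)$; everything else is a direct unpacking of Theorem \ref{t01} and the definition of $d^\mathcal K$ as the pointwise supremum over $\mathcal P_\mathcal K$.
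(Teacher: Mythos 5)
Your proposal is correct, and the first two legs of your cycle coincide with the paper's proof: the equivalence $\mathcal K\models s=_\e t$ iff $\md X\models s=_\e t$ is read off from Theorem \ref{t01} with $I=\emptyset$, and the implication $\md X\models s=_\e t\Rightarrow d^\mathcal K(\ang s,\ang t)\leq\e$ is obtained by evaluating at the assignment $\gamma$, exactly as in the paper. Where you genuinely diverge is the substantive direction $d^\mathcal K(\ang s,\ang t)\leq\e\Rightarrow\mathcal K\models s=_\e t$. The paper fixes $\A\in\mathcal K$ and an assignment $\alpha$, invokes weak universality (Theorem \ref{weakuniv}) to extend $\alpha$ to a non-expansive homomorphism $\beta:\md X\to\A$ with $\beta(\ang x)=\alpha(x)$, and then concludes $d^\A(\alpha(s),\alpha(t))\leq d^\mathcal K(\ang s,\ang t)$ from non-expansiveness of $\beta$. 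You instead observe that $[\alpha]\in\mathcal{P_K}$ by Lemma \ref{alpha-pseudo}, so $[\alpha](s,t)$ is one of the pseudometrics under the supremum defining $d^\mathcal K(\ang s,\ang t)$, and then identify $[\alpha](s,t)$ with $d^\A(\alpha(s),\alpha(t))$ via density of $\prationals$ (with $\inf\emptyset=\infty$ for the infinite-distance case). Both arguments ultimately rest on Lemma \ref{alpha-pseudo}, but yours uses it directly and bypasses the construction of $\beta$ entirely, which makes it shorter and somewhat more elementary; the paper's route has the virtue of exercising the same weak-universality machinery that drives the rest of the section (e.g.\ Theorem \ref{t01} and Lemma \ref{l03}), so the corollary falls out of an argument the reader has already absorbed. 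The one technicality you flag, $[\alpha](s,t)=d^\A(\alpha(s),\alpha(t))$, is handled correctly and is all that is needed to close the cycle.
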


\begin{proof}
The equivalence between the first two statements follows directly from
Theorem \ref{t01}.  

For the equivalence with the last statement, suppose that $\md X\models s=_\e t$.  Since the injection $\gamma:\T
X\hookrightarrow\md X\in\T(X|\md X)$ is an assignment, we obtain that
$d^\mathcal K(\gamma(s),\gamma(t))\leq\e$.  Hence, $d^\mathcal K(\ang
s,\ang t)\leq\e$.  

Suppose now that $d^\mathcal K(\ang s,\ang t)\leq\e$.  Then for any
$p\in\mathcal{P_K}$, $p(s,t)\leq\e$.  \\Let $\A\in\mathcal K$ and
assume that $s$ and $t$ depend of $(x_i)_{i\in I}\in X$; for
convenience we denote the two terms by $s(\str{x}{i}{I})$ and
$t(\str{x}{I}{I})$.

Consider arbitrary $\str{a}{i}{I}\subseteq\A$ and let
$\alpha\in\T(X|\A)$ such that $\alpha(x_i)=a_i$ for any $i\in I$.  

For arbitrary $i,j$ we have that $d^\mathcal K(\ang{x_i},\ang{x_j})\geq
d^\A(a_i,a_j)$ because as long as $\mathcal K\neq\emptyset\neq\T X$,
for any distinct variables $x,y\in X$, 
$$d^\mathcal K(\ang x,\ang y)=\sup_{\A\in\mathbb S(\mathcal K)}\sup_{a,b\in\A}d^\A(a,b).$$ 

Theorem \ref{weakuniv} guarantees that the aforementioned $\alpha$ can
be extended to a homomorphism $\beta:\md X\to\A$, which is
non-expansive.  Hence, $$d^\A(s(\str{a}{i}{I},),t(\str{a}{i}{I}))=d^\A(s((\alpha(x_i))_{i\in
I}),t((\alpha(x_i))_{i\in I}))=$$ $$d^\A(s((\beta(\ang{x_i})_{i\in
I}), t((\beta(\ang{x_i})_{i\in I}))\leq d^\mathcal K(\ang s,\ang
t)\leq\e.$$ 

Consequently, for any $\A\in\mathcal K$ and any assignment
$\alpha\in\T(X|\A)$, $$d^\A(\alpha(s),\alpha(t))\leq\e,$$ implying
$\mathcal K\models s=_\e t$.  
\end{proof}

\begin{cor}\label{c02}
	Let $\mathcal K\neq\emptyset\neq\T X$ and let $Y$ be a set of variables such that $|Y|\geq |X|$.  For any $c$-basic conditional equation $\{x_i=_{\e_i}y_i\mid i\in I\}\vdash s=_\e t$, where $c\leq r(\mathcal K)$ is a non-zero regular cardinal,  $$\{x_i=_{\e_i}y_i\mid i\in I\}\models_{\mathcal K} s=_\e t~\text{ iff }~\{x_i=_{\e_i}y_i\mid i\in I\}\models_{\md Y} s=_\e t.$$
\end{cor}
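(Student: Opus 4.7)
The plan is to reduce directly to Theorem \ref{t01} with $Y$ in place of $X$. Since $|Y| \geq |X|$, fix an injection $X \hookrightarrow Y$ and identify $X$ with its image, so that $\T X \subseteq \T Y$. The given conditional equation $\phi := \{x_i=_{\e_i}y_i\mid i\in I\}\vdash s=_\e t$ is then simultaneously a $c$-basic conditional equation on $\T X$ and on $\T Y$; the constraint $|I| < c$ and the restriction of the hypotheses to variables are intrinsic to $\phi$ and unaffected by the choice of ambient variable set.

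The key preliminary step is an \emph{invariance lemma}: for any quantitative algebra $\A$,
$$\A \models \phi \text{ regarded on } \T X \iff \A \models \phi \text{ regarded on } \T Y.$$
This holds because the satisfaction of $\phi$ depends only on the values an assignment takes on the variables occurring syntactically in $\phi$, all of which lie in $X \subseteq Y$. In one direction, every $\beta \in \T(Y|\A)$ restricts to some $\alpha \in \T(X|\A)$ agreeing with $\beta$ on those variables; in the other, every $\alpha \in \T(X|\A)$ extends to some $\beta \in \T(Y|\A)$ by assigning arbitrary values to the variables in $Y \setminus X$, which is possible whenever $A \neq \emptyset$. Under corresponding pairs of assignments, the hypotheses and the conclusion of $\phi$ take the same truth values.

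Applying the invariance lemma to each $\A \in \mathcal K$ and to $\md Y$, and then invoking Theorem \ref{t01} with variable set $Y$ (legitimate because $c \leq r(\mathcal K)$ is a non-zero regular cardinal and $\T Y \supseteq \T X \neq \emptyset$), we chain the equivalences
$$\mathcal K \models_{\T X} \phi \iff \mathcal K \models_{\T Y} \phi \iff \md Y \models_{\T Y} \phi \iff \md Y \models_{\T X} \phi,$$
which is exactly the asserted equivalence. The main obstacle is really just the bookkeeping of the invariance lemma; the edge case where $\mathcal K$ contains only degenerate algebras is handled separately, since then every $\T Y|_p$ for $p \in \mathcal{P_K}$ is degenerate, hence so is $\md Y$, and both sides of the iff are trivially true.
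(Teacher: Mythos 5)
Your proof is correct and takes essentially the same route the paper intends: Corollary \ref{c02} is stated without proof precisely because it is meant to follow from Theorem \ref{t01} applied over the enlarged variable set $Y$, together with the routine observation that satisfaction of a conditional equation depends only on the assignment's values at the variables actually occurring in it. Your explicit invariance lemma and your separate handling of the case where $\mathcal K$ has no non-degenerate members (needed because Theorem \ref{t01} assumes such members while the corollary only assumes $\mathcal K\neq\emptyset$) make the argument more careful than the paper's implicit one, and both check out.
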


%++++++++++++++++++++++++++++++++++++++++++++++++++++++++++++++++++++++++++++++++++++++++++

\subsection{Variety Theorem}

With these results in hand, we are ready to prove a general variety theorem
for quantitative algebras. 

%Recall that given a quantitative equational theory $\U$, $\KK(\Omega,\U)$ denotes the equational class of $\U$, i.e., the class of all quantitative algebras satisfying $\U$.  
Hereafter the signature $\Omega$ remains fixed; so, if $S$ is an
axiomatization for $\U$, we use $\KK(S)$ to denote the class
$\KK(\Omega,\U)$. 

If $S$ is a set of $c$-basic conditional equations, we say that $\KK(S)$ is
a \textit{$c$-basic conditional equational class}.  We call an
$\aleph_1$-basic conditional equational class simply \textit{basic
	equational class}.  A \textit{finitary-basic equational class} is an
$\aleph_0$-basic conditional equational class.  An \textit{unconditional
	equational class} is a $1$-basic conditional equational class.

We propose now a symmetric concept: if $\mathcal K$ is a set of
quantitative algebras and $0<c\leq\aleph_1$ is a cardinal, let $\E^c_X(\mathcal
K)$ be the set of all $c$-basic conditional equations over the set $X$ of
variables that are satisfied by all the elements of $\mathcal K$.   

\begin{comment}
\begin{lem}\label{l02}
	If $X$ is a set of variables and $\mathcal K$ a non-void set of quantitative algebras, then for any set $Y$ of variables such that $|Y|\geq r(\mathcal K)$ and any $c< r(\mathcal K)$, $$\E^c_X(\mathcal K)=\E^c_X(\{\md Y\}).$$
\end{lem}

\begin{proof}
	For any $c$-basic conditional equation $$\{x_i=_{\e_i}y_i\mid i\in I\}\vdash s=_\e t,$$ where $s$ and $t$ depend of the set $S=\{x_i,y_i\mid i\in I\}\cup\{z_j\mid j\in J\}$ of variables, we have that the conditional quantitative equation can be expressed in $\T S$.  Since $|S|\leq card(\mathcal K)\leq |Y|$, applying Corollary \ref{c02} we get that $$\{x_i=_{\e_i}y_i\mid i\in I\}\models_{\mathcal K} s=_\e t~\text{ iff }~\{x_i=_{\e_i}y_i\mid i\in I\}\models_{\md Y} s=_\e t.$$
\end{proof}
\end{comment}

\begin{lem}\label{l03}
	If $\mathcal K$ is a non-void $c$-variety for a regular cardinal
	$0<c\leq r(\mathcal K)$ and $X$ is an infinite set of variables,
	then $$\mathcal K=\KK(\E^c_X(\mathcal K)).$$
\end{lem}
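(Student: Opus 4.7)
The inclusion $\mathcal K \subseteq \KK(\E^c_X(\mathcal K))$ is immediate: every member of $\mathcal K$ satisfies, by the very definition of $\E^c_X(\mathcal K)$, each of its conditional equations. For the converse, let $\A \in \KK(\E^c_X(\mathcal K))$. By Lemma~\ref{operators}(8) we have $\mathcal K = \mathbb V_c(\mathcal K) = \mathbb H_c\mathbb{SP}(\mathcal K)$, so it suffices to construct some $\B \in \mathbb{SP}(\mathcal K)$ and a surjective $c$-reflexive homomorphism $\psi \colon \B \twoheadrightarrow \A$. Fix an infinite $Y$ with $|Y| \geq |\A|$ and a surjection $\alpha \colon Y \twoheadrightarrow \A$, extending canonically to $\hat\alpha \colon \T Y \twoheadrightarrow \A$. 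Since $X$ is infinite and each conditional equation involves at most countably many variables, renaming yields $\E^c_X(\mathcal K) = \E^c_Y(\mathcal K)$, so $\A \models \E^c_Y(\mathcal K)$.

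The key technical step is a realization lemma: for every $B_\A = \{a_i\}_{i \in I} \subseteq_c \A$ there exist $\C_{B_\A} \in \mathcal K$ and a non-expansive map $\tau_{B_\A} \colon (\A, d^\A) \to (\C_{B_\A}, d^{\C_{B_\A}})$ whose restriction to $B_\A$ is isometric. The isometric embedding of $B_\A$ is produced as follows: for each pair $(k, l)$ with $k \neq l$ and each rational $\e < d^\A(a_k, a_l)$, the $c$-basic conditional equation
$$\{x_i =_{d^\A(a_i, a_j)} x_j\}_{i, j \in I} \vdash x_k =_\e x_l$$
cannot hold in $\mathcal K$---for otherwise $\A$, which by hypothesis satisfies this equation too, would under $x_i \mapsto a_i$ force the absurdity $d^\A(a_k, a_l) \leq \e$. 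So there is a witness $\C_{k, l, \e} \in \mathcal K$ with an assignment satisfying the hypotheses while violating the conclusion; taking the product of all such $\C_{k, l, \e}$ (using closure of $\mathcal K$ under $\mathbb P$) and pulling back the diagonal assignment delivers an algebra in $\mathcal K$ in which the $B_\A$-distance pattern is realized exactly. The non-expansive extension to the rest of $\A$ is then obtained by an analogous $c$-basic argument, introducing for each $a \in \A \setminus B_\A$ an auxiliary variable $z_a$ with hypotheses $\{z_a =_{d^\A(a, b)} x_b\}_{b \in B_\A}$ and enlarging the product.

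Assembly: setting $\B := \pd_{B_\A \subseteq_c \A} \C_{B_\A} \in \mathbb P(\mathcal K) \subseteq \mathbb{SP}(\mathcal K)$ and $\tau \colon Y \to \B$ by $\tau(y) := (\tau_{B_\A}(\alpha(y)))_{B_\A}$, non-expansiveness of every $\tau_{B_\A}$ gives $d^\B(\tau(y), \tau(y')) \leq d^\A(\alpha(y), \alpha(y'))$, while the $B_\A = \{\alpha(y), \alpha(y')\}$-coordinate supplies the matching lower bound. Extending to $\hat\tau \colon \T Y \to \B$ and setting $\psi(\hat\tau(s)) := \hat\alpha(s)$ defines a surjective non-expansive $\Omega$-homomorphism $\psi \colon \hat\tau(\T Y) \twoheadrightarrow \A$ with $\hat\tau(\T Y) \in \mathbb{SP}(\mathcal K)$. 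For $c$-reflexivity: given $B_\A \subseteq_c \A$ and preimages $y_i$ of its elements, the tuple $\hat\tau(y_i) \in \hat\tau(\T Y)$ maps to $a_i$ under $\psi$ with pairwise distances exactly $d^\A(a_i, a_j)$ by the above equality.

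The main obstacle is the second half of the realization lemma---extending the isometric embedding of $B_\A$ into $\C_{B_\A}$ to a non-expansive map on all of $\A$---since doing this uniformly for all $B_\A$ requires working with auxiliary variables ranging over all of $\A$, which is generally of size exceeding $c$. Navigating this requires a careful one-pair-at-a-time application of the $c$-basic conditional equations, so that each individual equation stays within the $c$-basic class and the passages to products are absorbed by $\mathbb P$-closure of $\mathcal K$.
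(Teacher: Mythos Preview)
Your approach differs from the paper's, which works through the canonical model $\md Y=\prod_{p\in\mathcal P_{\mathcal K}}\T Y|_p$ and Weak Universality (Theorem~\ref{weakuniv}): the paper shows, via Corollary~\ref{t001} and the equality $\E^c(\mathcal K)=\E^c(\mathcal K')$, that the diagonal subalgebras $\gamma(\T Y)\leq\md Y$ and $\gamma'(\T Y)\leq\T_{\mathcal K'}Y$ are isomorphic quantitative algebras, and then transports the $c$-reflexive surjection $\gamma'(\T Y)\twoheadrightarrow\A'$ across that isomorphism.

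Your direct route has a genuine gap in the assembly step. You establish $d^\B(\tau(y),\tau(y'))=d^\A(\alpha(y),\alpha(y'))$ for \emph{variables} $y,y'\in Y$, but well-definedness and non-expansiveness of $\psi(\hat\tau(s)):=\hat\alpha(s)$ require
\[
d^\B(\hat\tau(s),\hat\tau(t))\ \geq\ d^\A(\hat\alpha(s),\hat\alpha(t))
\]
for arbitrary \emph{terms} $s,t\in\T Y$. The coordinate $B_\A=\{\hat\alpha(s),\hat\alpha(t)\}$ does \emph{not} witness this: one has $\pi_{B_\A}(\hat\tau(s))=s^{\C_{B_\A}}\bigl(\tau_{B_\A}(\alpha(y_1)),\tau_{B_\A}(\alpha(y_2)),\ldots\bigr)$, whereas the isometry of $\tau_{B_\A}$ on $B_\A$ speaks about $\tau_{B_\A}(\hat\alpha(s))$. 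These two elements of $\C_{B_\A}$ are unrelated, because your $\tau_{B_\A}\colon\A\to\C_{B_\A}$ is only a non-expansive map of metric spaces and not an $\Omega$-homomorphism. Consequently nothing in your construction bounds $d^{\C_{B_\A}}\bigl(\pi_{B_\A}(\hat\tau(s)),\pi_{B_\A}(\hat\tau(t))\bigr)$ from below by $d^\A(\hat\alpha(s),\hat\alpha(t))$, and $\psi$ may fail to be a homomorphism of quantitative algebras (indeed, may fail to be well-defined).

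This is not just a bookkeeping issue: the realization lemma, as formulated, outputs maps that discard the algebraic structure of $\A$, and precisely that structure is what carries the metric control from generators to arbitrary terms. The paper sidesteps the problem by indexing its product over pseudometrics on $\T Y$ (equivalently, over quotients $\T Y|_p\in\mathbb{IS}(\mathcal K)$), so that every projection is an $\Omega$-homomorphism by construction. A secondary point: the hypotheses $x_i=_{d^\A(a_i,a_j)}x_j$ in your conditional equations have possibly irrational indices, which is outside the syntax; this is fixable by rational approximation and (Arch), but the fix interacts delicately with the product assembly and deserves explicit treatment.
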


\begin{proof}
	Let $\mathcal K'=\KK(\E^c_X(\mathcal K))$.  Obviously $\mathcal K\subseteq\mathcal K'$.
	
	We prove for the beginning that $\E^c_X(\mathcal K)=\E^c_X(\mathcal K')$.
	
	Since $\mathcal K\subseteq\mathcal K'$, $\E^c_X(\mathcal
	K)\supseteq\E^c_X(\mathcal K')$.  
	
	Let $\Gamma\vdash\phi\in\E^c_X(\mathcal K)$ be a $c$-basic quantitative
	inference.  %, where $\phi$ is a quantitative equation and $\Gamma$ is a set of quantitative equations.  Hence, 
	Then, for any $\A\in\mathcal K$,
	$\Gamma\models_\A\phi$.  Consider an arbitrary $\B\in\mathcal
	K'$.  Since $\mathcal K'=\KK(\E^c_X(\mathcal K))$, $B$ must satisfy all
	the $c$-basic conditional equations in $\E^c_X(\mathcal K)$; in particular,
	$\Gamma\models_\B\phi$.  Hence, $\E^c_X(\mathcal
	K)\subseteq\E^c_X(\mathcal K')$.  
	
	Consider now an arbitrary $\A'\in\mathcal K'$.  
	
	From Corollary \ref{c01}, for a suitable set $Y$ of variables such that
	$|Y|\geq r(\mathcal K')$, we can define a surjection $\alpha:\T
	Y\twoheadrightarrow\A'$.  
	
	For arbitrary $s\in\T Y$, let $s|_{\mathcal
		K}\in\pd_{p\in\mathcal{P_K}}\T Y|_p$ be the element\footnote{Observe that $s|_{\mathcal K}$ has been denoted by $\langle s\rangle$ previously, when $\mathcal K$ was fixed. We change the notation here because we need to speak of such elements for various classes $\mathcal K, \mathcal K'$.} such that for any
	$p\in\mathcal{P_K}$, $\pi_p(s|_{\mathcal K})=s|_p$ and similarly
	$s|_{\mathcal K'}\in\pd_{p\in\mathcal{P_{K'}}}\T Y|_p$ be the element
	such that for any $p\in\mathcal{P_{K'}}$, $\pi_p(s|_{\mathcal
		K'})=s|_p$.  
	
	Theorem \ref{weakuniv} provides an injection $\gamma':\T
	Y\hookrightarrow\T_{\mathcal K'}Y$ defined by $\gamma'(s)=s|_{\mathcal
		K'}$ for any $s\in\T Y$; and a $r(\mathcal K')$-reflexive
	homomorphism $\beta':\T_{\mathcal K'}Y\twoheadrightarrow\A'$ which has
	the property that $\beta'(s|_{\mathcal K'})=\alpha(s)$.  Moreover,
	$\beta'$ is a surjection since $\alpha$ is.  
	
	Because $c\leq
	r(\mathcal K)\leq r(\mathcal K')$, $\beta'$ is also
	$r(\mathcal K)$-reflexive and $c$-reflexive.  Note now that also
	$\hat{\beta'}:\gamma'(\T Y)\to\A'$, which is defined by
	$\hat{\beta'}(u)=\beta'(u)$ for any $u\in\gamma'(\T Y)$, is a
	surjective $c$-reflexive homomorphism of quantitative algebras such
	that $\hat{\beta'}(s|_{\mathcal K'})=\alpha(s)$.  
	
	Similarly, there exists an injection $\gamma:\T Y\hookrightarrow\md Y$
	defined by $\gamma(s)=s|_{\mathcal K}$ for any $s\in\T Y$.
	
	Consider now the following two quantitative algebras 
	$$\T Y|_{d^{\mathcal K}}=(\T Y|_{ker(d^{\mathcal K})},\Omega, d^{\mathcal
		K})~\text{ and }$$ $$\T Y|_{d^{\mathcal K'}}=(\T Y|_{ker(d^{\mathcal
			K'})},\Omega, d^{\mathcal K'}).$$ 
	Note that the functions $\theta:\T Y|_{d^{\mathcal K}}\to\gamma(\T
	Y)$ defined by $\theta(s|_{d^{\mathcal K}})=\gamma(s)$ and $\theta':\T
	Y|_{d^{\mathcal K'}}\to\gamma'(\T Y)$ defined by
	$\theta'(s|_{d^{\mathcal K'}})=\gamma'(s)$ are isomorphisms of
	quantitative algebras.	
	\[
	\begin{tikzcd}
	\md Y \geq \T Y|_{d^{\mathcal K}}& \T Y \arrow[l, hook', swap, "\gamma"] \arrow[r, hook, "\gamma'"] \arrow[d,two heads, swap,"\alpha"] 
	& \T Y|_{d^{\mathcal K'}}\arrow[dl,swap, two heads, "\hat{\beta'}"]\arrow[r, hook, "id"] & \T_{\mathcal K'} Y\arrow[dll, two heads, "\beta'"]\\
	& \A' & 
	\end{tikzcd}
	\]
	Repeatedly applying Corollary \ref{t001} we get that for arbitrary $s,t\in\T Y$, $$d^\mathcal K(s|_{\mathcal K},t|_{\mathcal K})=0~\text{ iff }$$ $$\md Y\models s=_0
	t,\text{iff }$$ $$\mathcal K\models s=_0 t,~\text{ iff }$$ $$\emptyset\vdash s=_0 t\in\E^c_Y(\mathcal K)~\text{(since $\E^c_Y(\mathcal
	K) =\E^c_Y(\mathcal K')$)}, \text{ iff }$$
   $$\emptyset\vdash s=_0 t\in\E^c_Y(\mathcal K'), \text{ iff }$$
   $$\mathcal K'\models s=_0 t, \text{ iff }$$
	$$\T_{\mathcal K'} Y\models s=_0 t, \text{ iff }$$ $$d^{\mathcal K'}(s|_{\mathcal K'},t|_{\mathcal K'})=0.$$  
	
	Hence, $ker(d^{\mathcal K})=ker(d^{\mathcal K'})$ implying that $\T Y|_{d^{\mathcal K}}$ and $\T
	Y|_{d^{\mathcal K'}}$ are isomorphic $\Omega$-algebras.  
	
	Similarly, we can apply Corollary \ref{t001} for arbitrary $s,t\in\T Y$ and $\e\in\prationals$, as we did it before for $\e=0$, and obtain: 
	$$d^\mathcal K(s|_{\mathcal K},t|_{\mathcal K})\leq\e \text{ iff }$$ 
	$$\md Y\models s=_\e t, \text{ iff }$$
	$$\mathcal K\models s=_\e t, \text{ iff }$$
	$$\emptyset\vdash s=_\e t\in\E^c_Y(\mathcal K), \text{ iff }$$
	$$\emptyset\vdash s=_\e t\in\E^c_Y(\mathcal K'), \text{ iff }$$
	$$\mathcal K'\models s=_\e t, \text{ iff }$$
	$$\T_{\mathcal K'} Y\models s=_\e t, \text{ iff }$$
	$$d^{\mathcal K'}(s|_{\mathcal K'},t|_{\mathcal K'})\leq\e;$$ and since this is true for any
	$\e\in\prationals$, we obtain $$d^\mathcal K(s|_{\mathcal
		K},t|_{\mathcal K})=d^{\mathcal K'}(s|_{\mathcal K'},t|_{\mathcal
		K'}).$$

	Hence, $\T Y|_{d^{\mathcal K}}$ and $\T Y|_{d^{\mathcal K'}}$ are
	isomorphic quantitative algebras implying further that $\gamma(\T Y)$
	is isomorphic to $\gamma'(\T Y)$.  
	
	Now, since $\A'$ is the $c$-homomorphic image of $\gamma'(\T Y)$, it is
	also a $c$-homomorphic image of $\gamma(\T Y)$.  But $\gamma(T Y)\leq\md
	Y$ and since $\mathcal K$ is  a $c$-variety, from Lemma \ref{l01} we
	know that $\md Y\in\mathcal K$, hence $\gamma(\T Y)\in\mathcal K$.  
	
	Consequently, $\A'\in\mathbb{H}_c(\mathcal K)$ and since $\mathcal K$
	is a $c$-variety, $\A'\in\mathcal K$, from which we conclude
	$\mathcal{K'\subseteq K}$.  
\end{proof}

Now we prove the variety theorem for quantitative algebras.  

\begin{thm}[$c$-Variety Theorem]
	Let $\mathcal K$ be a class of quantitative algebras and $0<c\leq r(\mathcal K)$ a regular cardinal.  Then, $\mathcal K$ is a $c$-basic conditional equational class iff $\mathcal K$ is a $c$-variety.
	In particular, 
	\begin{enumerate}
		\item $\mathcal K$ is an unconditional equational class iff it is a variety;
		\item $\mathcal K$ is a finitary-basic equational class iff it is an $\aleph_0$-variety;
		\item $\mathcal K$ is a basic equational class iff it is an $\aleph_1$-variety.
	\end{enumerate}
\end{thm}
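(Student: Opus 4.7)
The proof will unfold in two directions, and given the machinery already established the main work is essentially packaging previous results.

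For the ``only if'' direction, I plan to combine the three closure lemmas proved in this section. Suppose $\mathcal K = \KK(S)$ for some set $S$ of $c$-basic conditional equations. Lemma \ref{subalgebras} gives closure under $\mathbb I$ and $\mathbb S$; Lemma \ref{products}, applied to $S$ viewed as a set of basic conditional equations, gives closure under $\mathbb P$; and Lemma \ref{homoimg}, using the hypothesis that $c$ is a non-null regular cardinal, gives closure under $\mathbb H_c$. These are precisely the defining closure properties of a $c$-variety, so $\mathcal K$ is a $c$-variety. (This direction is already packaged as Corollary \ref{satbasic}.)

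For the ``if'' direction, the natural candidate for an axiomatization of a $c$-variety $\mathcal K$ is $\E^c_X(\mathcal K)$, the set of all $c$-basic conditional equations on a sufficiently large variable set $X$ that are satisfied by every member of $\mathcal K$. The entire preceding development has been aimed at establishing exactly that this candidate works: Lemma \ref{l03} states that if $\mathcal K$ is a non-void $c$-variety with $0 < c \leq r(\mathcal K)$ and $X$ is infinite, then $\mathcal K = \KK(\E^c_X(\mathcal K))$. Invoking it concludes this direction. Trivial cases (empty $\mathcal K$, or $\mathcal K$ consisting only of degenerate algebras) must be dispatched separately, but they are either vacuous or axiomatized by a $1$-basic equation like $\vdash x =_0 y$.

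The three numbered cases are specializations of the main statement: $c = 1$ for unconditional equations/varieties, $c = \aleph_0$ for finitary-basic equations/$\aleph_0$-varieties, and $c = \aleph_1$ for basic equations/$\aleph_1$-varieties. In each case I must verify the hypothesis $c \leq r(\mathcal K)$: for $c = 1$ this is automatic; for $c = \aleph_0$ and $c = \aleph_1$ one uses the definition of $r(\mathcal K)$, noting that any $c$-variety containing a non-degenerate algebra must in fact contain algebras of arbitrarily large cardinality (via direct powers), so $r(\mathcal K) = \aleph_1 \geq c$ in the nontrivial regime.

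The genuine obstacle has already been overcome in the earlier canonical-model construction, specifically in Theorem \ref{weakuniv} and Lemma \ref{l03}: forming $\md X$ as the product $\prod_{p \in \mathcal P_{\mathcal K}} \T X|_p$ rather than a single quotient by $\ker(d^{\mathcal K})$ is what enables the weak universality map $\beta$ to be $r(\mathcal K)$-reflexive, and in turn what makes the candidate axiomatization $\E^c_X(\mathcal K)$ tight enough to recover $\mathcal K$ itself. Given those results, the theorem statement is a clean two-line corollary.
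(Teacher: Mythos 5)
Your proposal is correct and follows essentially the same route as the paper: the forward direction is exactly Corollary \ref{satbasic} (equivalently, the combination of Lemmas \ref{subalgebras}, \ref{products} and \ref{homoimg}, phrased in the paper as $\mathbb V_c(\mathcal K)\models\U$ hence $\mathbb V_c(\mathcal K)=\mathcal K$), and the converse is an immediate invocation of Lemma \ref{l03}. Your explicit dispatch of the void/degenerate cases and the check that $c\leq r(\mathcal K)$ in the three specializations are sensible housekeeping that the paper leaves implicit.
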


\begin{proof}
	($\Longrightarrow$): $\mathcal K=\KK(\U)$ for some set $\U$ of
	$c$-basic conditional equations.  Then, $\mathbb V_c(\mathcal
	K)\models \U$ implying further that $\mathbb V_c(\mathcal
	K)\subseteq\KK(\U)=\mathcal K$.  Hence, $\mathbb V_c(\mathcal
	K)=\mathcal K$.  
	
	($\Longleftarrow$): this is guaranteed by Lemma \ref{l03}.
\end{proof}

\textbf{Birkhoff Theorem in perspective.} Before concluding this section, we notice that our variety theorem also
generalizes the original Birkhoff theorem.  This is because any congruence
$\cong$ on an $\Omega$-algebra $\A$ can be seen as the kernel of the
pseudometric $p_{\cong}$ defined by $p_{\cong}(a,b)=0$ whenever $a\cong b$
and $p_{\cong}(a,b)=1$ otherwise.  The quotient algebra $\A|_{\cong}$ is a
quantitative algebra.  Any quantitative equational theory satisfied by
$\A|_{\cong}$ can be axiomatized by equations involving only $=_0$ and
$=_1$, since $0$ and $1$ are the only possible distances between its
elements.  However, this algebra also satisfies the equation $x=_1y$ for
any two variables $x$ and $y$, because $1$ is the diameter of its support.
Consequently, the only non-redundant equations satisfied by such an algebra
are of type $s=_0 t$, and these correspond to the equations
of the form $s=t$.

%+++++++++++++++++++++++++++++++++++++++++++++++++++++++++++++++++++++++++++++++++++++++++++++++++++++++++++++++++++

\section{The Quasivariety Theorem for General Conditional Equations}

In this section we study the axiomatizability of classes of quantitative
algebras that can be axiomatized by conditional quantitative equations, but
not necessarily by basic conditional quantitative equations.  Thus, we are
now looking for more relaxed types of axioms and consequently we will
identify more relaxed closure conditions.

We prove that a class $\mathcal K$ of $\Omega$-quantitative algebras admits
an axiomatization consisting of conditional quantitative equations,
whenever it is closed under isomorphisms, subalgebras and what we call
subreduced products.  A subreduced product is a quantitative subalgebra of
a (special type of) product of elements in $\mathcal K$; however, while
these products are always $\Omega$-algebras, they are not always
quantitative algebras.  This closure condition allow us to generalize the
classical quasivariety theorem that characterizes the classes of universal
algebras with an axiomatization consisting of Horn clauses.

It is not trivial to see that a $c$-variety is closed under these operators
for any regular cardinal $c>0$ and so our quasivariety theorem extends the
$c$-variety theorem presented in the previous section.  Indeed, all
isomorphisms are $c$-reflexive homomorphisms and since a $c$-variety is
closed under subalgebras and products, it must be closed under subreduced
products, as they are quantitative subalgebras of the product.

However, to achieve these results we had to involve and generalize concepts
and results from model theory of first-order structures.  This required us
to restrict ourselves to the signatures $\Omega$ containing only functions
of finite arity. 

%+++++++++++++++++++++++++++++++++++++++++++++++++++++++++++++++++++++++++++++++++++++++++++++++++++++++++++++++++++

\subsection{Preliminaries in Model Theory}

In this subsection we recall some basic concepts and results about the
model theory of first order structures. 

A \textit{first-order language} is a tuple $\lng=(\Omega,\R)$ where
$\Omega$ is an algebraic similarity type containing functions of finite
arity and $\R$ is a set of relation symbols of finite arity.   

A \textit{first-order structure} of type $\lng=(\Omega,\R)$ is a tuple
$\M=(M,\Omega^\M,\R^\M)$ where $(M,\Omega^\M)$ is an $\Omega$-algebra and
for any relation $R:i\in\R$, $R^\M\subseteq M^i$. 

A \textit{morphism of first-order structures} of type $\lng=(\Omega,\R)$ is
a map $$f:(M,\Omega^\M,\R^\M)\to(N,\Omega^\N,\R^\N)$$ that is a homomorphism
of $\Omega$-algebras such that for any relation $R:i\in\R$ and
$m_1,..m_i\in M$,
$$(m_1,..,m_i)\in \R^\M~\text{ iff }~(f(m_1),..,f(m_i))\in\R^\N.$$ 

$\M=(M,\Omega^\M,\R^\M)$ is a \textit{subobject} of
$\N=(N,\Omega^\N,\R^\N)$ if $(M,\Omega^\M)$ is an $\Omega$-subalgebra of
$(N,\Omega^\N)$ and for any $R:i\in\R$ and $m_1,..m_i\in M$,
$$(m_1,..,m_i)\in R^\M\text{ iff }(m_1,..,m_i)\in R^\N;$$ In this case we write $\M\leq\N$. 

\textbf{Equational First-Order Logic.} Given a first-order structure
$\lng=(\Omega,\R)$ and a set $X$ of variables, let $\T X$ be the set of
terms induced by $X$ over $\Omega$. 
The \textit{atomic formulas} of type $\lng=(\Omega,\R)$ over $X$ are
expressions of the form 
\begin{itemize}
	\item $s=t$ for $s,t\in\T X$;
	\item $R(s_1,..,s_k)$ for $R:k\in\R$ and $s_1,..,s_k\in\T X$.
\end{itemize}

The set $\lng X$ of first-order formulas of type $\lng$ over $X$ is the
smallest collection of formulas containing the atomic formulas and closed
under conjunction, negation and universal quantification $\forall x$ for
$x\in X$.  In addition we consider, as usual, all the Boolean operators and the
existential quantification.
%(negation allow us to also define $\exists x$ for $x\in X$ as usual).

If $\M$ is a structure of type $\lng$, let $\lng_\M$ be the first-order
language obtained by adding to $\lng$ the elements of $\M$ as constants. 

Given a first-order formula $\phi(x_1,..,x_{i},..x_k)$, in which $x_1,..,x_k\in X$ are all the free variables, we denote by $\phi(x_1,..,x_{i-1},m,x_{i+1},..x_k)$, as usual, the formula obtained by replacing all the free occurrences of $x_i$ by $m\in\M$.

\textbf{Satisfiability.} For a closed formula $\phi\in\lng_\M$, we define $\M\models\phi$ inductively on the structure of formulas as follows.
\begin{itemize}
	\item $\M\models s=t$ for $s,t\in T X$ containing no variables iff $s^\M=t^\M$.
	\item $\M\models R(s_1,..,s_k)$ for $R:k\in\R$ and $s_1,..,s_k\in\T X$ containing no variables iff $(s_1^\M,..,s_k^\M)\in R^\M$;
	\item $\M\models\phi\land\psi$ iff $\M\models\phi$ and $\M\models\psi$;
	\item $\M\models\lnot\phi$ iff $\M\not\models\phi$;
	\item $\M\models\forall x\phi(x)$ iff $\M\models\phi(m)$ for any $m\in\M$.
\end{itemize}
The semantics of the derived operators is standard.  The 
de Morgan laws give us semantically-equivalent prenex forms for
any first-order formula.

A first-order formula is an \textit{universal formula} if it is in prenex
form and all the quantifiers are universal. 

A \textit{Horn formula} has the following prenex form
$$Q_1x_1..Q_kx_k(\phi_1(x_1,..,x_k)\land..\land\phi_j(x_1,..,x_k)\to\phi(x_1,..,x_k)),$$ where each $Q_l$ is a quantifier and each $\phi_l$ and $\phi$ is an atomic formula with (a subset of) the set $\{x_1,..,x_k\}$ of free variables\footnote{Some authors define a Horn formula as a conjunction of such constructs, or allow $\phi=\top$; none of these choices affect our development here.}.  

A \textit{universal Horn formula} is a Horn formula which is also an universal formula.

\textbf{Direct Products.} Given a non-empty indexed family $(\M_i)_{i\in I}$ of first-order structures of type $\lng=(\Omega,\R)$, where $\M_i=(M_i,\Omega^{\M_i},\R^{\M_i})$, the \textit{direct product} $\M=\pd_{i\in I}\M_i$ is the $\lng$-structure whose universe is the product set $\pd_{i\in I} M_i$ and its functions and relations are defined as follows, where $\pi_i:\pd_{i\in I}\M_i\to\A_i$ denotes the $i$-th projection.
\begin{itemize}
	\item for $f:k\in\Omega$, and $m_1,..,m_k\in \pd_{i\in I}M_i$, $$\pi_i(f^\M(m_1,..,m_k))=f^{\M_i}(\pi_i(m_1),..,\pi_i(m_k));$$
	\item for $R:k\in\R$, and $m_1,..,m_k\in \pd_{i\in I}M_i$,
	$$(m_1,..,m_k)\in R^\M~\text{ iff }~(\pi_i(m_1),..,\pi_i(m_k))\in R^{\M_i}~\text{ for all }~i\in I.$$
\end{itemize}
%A subdirect product of $\pd_{i\in I}\A_i$ is a subobject $\A\leq\pd_{i\in I}\A_i$ such that for all $i\in I$, $\pi_i(\A)=\A_i$, where $\pi_i$ is the $i$-th projection.

%We present now a first result (for details, see \cite{Burris} Ch.  V, section 2).

%\begin{lem}\label{mt1}
%	If $\phi$ is a Horn formula and $\A_i\models\phi$ for all $i\in I$, then $\pd_{i\in I}\A_i\models\phi.$	If $\A$ is a subdirect product of $\pd_{i\in I}\A_i$, then $\A\models\phi.$
%\end{lem}

\textbf{Reduced Products.} Let $\str{\M}{i}{I}$ be an indexed family of first-order structures of type $\lng=(\Omega,\R)$ and $F$ a proper filter over $I$.

Consider the relation $\mathord\sim_F\subseteq\pd_{i\in I}\M_i\times\pd_{i\in I}\M_i$ s.t.  $$m\sim_F n~\text{ iff }~\{i\in I\mid \pi_i(m)=\pi_i(n)\}\in F.$$
%To simplify the notation, hereafter for $a\in\pd_{i\in I}\A_i$, we denote by $a_i=\pi_i(a)$.

It is known that  when $F$ is a proper filter of $I$, $\sim_F$ is a congruence relation with respect to the algebraic structure of $\M=\pd_{i\in I}\M_i$ (see, e.g., \cite[Lemma~2.2]{Burris81}).
This allows us to define the \textit{reduced product induced by a proper filter} $F$, written $(\pd_{i\in I}\M_i)|_F$, as the $\lng$ first-order structure such that 
\begin{itemize}
	\item its universe is the set $(\pd_{i\in I}M_i)|_{\sim_F}$, which is the quotient of $\pd_{i\in I}M_i$ with respect to $\sim_F$; we denote by $m_F$ the $\sim_F$-congruence class of $m\in \pd_{i\in I}M_i$;
	\item for $f:k\in\Omega$, and $(m^1,..,m^k)\in \pd_{i\in I}M_i$, $$f(m^1_F,..,m^k_F)=(f(m^1,..,m^k))_F:$$
	\item for $R:k\in\R$, and $(m^1,..,m^k)\in \pd_{i\in I}M_i$, 
	$$R(m^1_F,..,m^k_F)~\text{ iff }~\{i\in I\mid R(\pi_i(m^1),..,\pi_i(m^k)\}\in F.$$
\end{itemize}

%The next result can be found in \ref{Burris}, Thm 2.7.

%\begin{lem}\label{mt3}
%	Let $\pd_{i\in I}\A_i|_F$ be a reduced product of $\lng$-structures and let $\phi(x_1,..,x_n)$ be a Horn formula of type $\lng$.  If $a^1,..,a^n\in\pd_{i\in I}\A_i$, $$\{i\in I\mid \A_i\models\phi(\pi_i(a^1),..,\pi_i(a^n))\}\in F~\text{ implies }~\pd_{i\in I}\A_i|_F\models\phi(a_1,..,a_n).$$
%\end{lem}

%If $F$ is an ultrafilter, the reduced product $\pd_{i\in I}\A_i|_F$ is called \textit{ultraproduct}.

%We present now a classical result, \cite{Burris81} Thm 2.9.

%\begin{thm}[$\L o\acute{s}$ Theorem]
%	Given a family $\str{\A}{i}{I}$ of $\lng$-structures and $U$ an ultrafilter on $I$.  If $\phi(x_1,..,x_n)$ is a first-order formula of type $\lng$ and $a^1,..,a^n\in\pd_{i\in I}\A_i$, then $$\pd_{i\in I}\A_i|_U\models\phi(a^1_U,..,a^n_U)~\text{ iff }~\{i\in I\mid\A_i\models\phi(\pi_i(a^1),..,\pi_i(a_n))\}\in U.$$
%\end{thm}

\textbf{Quasivariety Theorem.} A class $\mathfrak M$ of $\lng$-structures is an \textit{elementary class} if there exists a set $\Phi$ of first-order $\lng$-formulas such that for any $\lng$-structure $\M$, $$\M\in\mathfrak M~\text{ iff }~\M\models\Phi.$$
An elementary class is an \textit{universal class} if it can be axiomatized by universal formulas; it is an \textit{universal Horn class} if it can be axiomatized by universal Horn formulas.

We conclude this section with the quasivariety theorem (see, e.g., \cite[Theorem~2.23]{Burris81}).  To state it, we define a few closure operators on classes of $\lng$-structures.

Let $\mathfrak M$ be an arbitrary class of $\lng$-structures.
\begin{itemize}
	\item $\mathbb I(\mathfrak M)$ denotes the closure of $\mathfrak M$ under isomorphisms of $\lng$-structures;
	\item $\mathbb S(\mathfrak M)$ denotes the closure of $\mathfrak M$ under subobjects of $\lng$-structures;
	\item $\mathbb P(\mathfrak M)$ denotes the closure of $\mathfrak M$ under direct products of $\lng$-structures;
	%\item $\mathbb P_U(\mathcal S)$ denotes the closure of $\mathcal S$ under ultraproducts;
	\item $\mathbb P_R(\mathfrak M)$ denotes the closure of $\mathfrak M$ under reduced products of $\lng$-structures.
\end{itemize} 

\begin{thm}[Quasivariety Theorem]\label{mt:quasi}
	Let $\mathfrak M$ be a class of $\lng$-structures.  The following statements are equivalent.
	\begin{enumerate}
		\item  $\mathfrak M$ is a universal Horn class;
		\item  $\mathfrak M$ is closed under $\mathbb{I,~S}$ and $\mathbb P_R$;
		%\item  $\mathcal S$ is closed under $\mathbb{I,~S,~P}$ and $\mathbb P_U$;
		\item  $\mathfrak M=\mathbb{ISP}_R(\mathfrak M')$ for some class $\mathfrak M'$ of $\lng$-structures.
		%\item  $\mathcal S=\mathbb{ISPP}_U(\mathcal S')$ for some class $S'$.
	\end{enumerate}
\end{thm}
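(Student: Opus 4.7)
The equivalence of (2) and (3) is essentially formal: for (2)$\Rightarrow$(3) I would just take $\mathfrak M' = \mathfrak M$, while for (3)$\Rightarrow$(2) it suffices to show $\mathbb{ISP}_R$ is idempotent via the standard commutation identities $\mathbb{P}_R\mathbb I \subseteq \mathbb{IP}_R$, $\mathbb{P}_R\mathbb S \subseteq \mathbb{IS}\mathbb{P}_R$, and $\mathbb{P}_R\mathbb{P}_R \subseteq \mathbb{IP}_R$ (a reduced product of reduced products is itself a reduced product, by a pullback-of-filters construction). For (1)$\Rightarrow$(2) I would check preservation of each universal Horn sentence under $\mathbb I$, $\mathbb S$, and $\mathbb P_R$: isomorphism and subobject preservation are routine for universal formulas, while reduced product preservation is the Horn-restricted variant of \L{}o\'s's theorem — given $\forall \bar x\,(\phi_1 \wedge \cdots \wedge \phi_k \to \psi)$ valid in every factor $\M_i$ and a tuple $\bar m_F$ in $(\prod_i \M_i)|_F$ satisfying the hypotheses, each index set $\{i : \M_i \models \phi_j(\pi_i(\bar m))\}$ lies in $F$, so their intersection does, forcing $\{i : \M_i \models \psi(\pi_i(\bar m))\} \in F$ and $(\prod_i \M_i)|_F \models \psi(\bar m_F)$.

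The main content is (2)$\Rightarrow$(1). I would let $\Psi$ be the set of all universal Horn $\lng$-sentences satisfied by every member of $\mathfrak M$, so that $\mathfrak M \subseteq \KK(\Psi)$ is immediate, and aim to prove the converse: every $\N \models \Psi$ belongs to $\mathfrak M$. The plan is to realise $\N$ as a subobject of a direct product of elements of $\mathfrak M$. For each pair $\pi = (\bar a, \psi)$ with $\psi(\bar x)$ atomic and $\N \not\models \psi(\bar a)$, the key claim is that there exist $\A_\pi \in \mathfrak M$ and an $\Omega$-algebra map $h_\pi\colon \N \to \A_\pi$ preserving every positive atomic fact of $\N$ and with $\A_\pi \not\models \psi(h_\pi(\bar a))$.

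To establish the claim I would argue by contradiction. If no such $(\A_\pi, h_\pi)$ exists, form the first-order theory $T_\pi$ in $\lng$ expanded by a new constant $c_a$ for each $a \in \N$, consisting of the positive atomic diagram of $\N$ together with $\neg\psi(\bar c_{\bar a})$; by assumption $T_\pi$ has no model in $\mathfrak M$. A compactness-style ultraproduct construction then applies: if every finite subset of $T_\pi$ were realised by some element of $\mathfrak M$, an ultraproduct of the realisers would lie in $\mathbb{IP}_R(\mathfrak M) = \mathfrak M$ and satisfy all of $T_\pi$, a contradiction. So some finite subset $\{\theta_1, \dots, \theta_n, \neg\psi(\bar c_{\bar a})\}$ already fails in every element of $\mathfrak M$; replacing the constants by fresh universally-quantified variables, this says that the universal Horn sentence $\forall \bar x\bar y\,(\theta_1(\bar x,\bar y) \wedge \cdots \wedge \theta_n(\bar x, \bar y) \to \psi(\bar x))$ belongs to $\Psi$, yet it is violated in $\N$ at $\bar a$, contradicting $\N \models \Psi$.

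Having secured the witnesses, I would assemble the coordinatewise map $h\colon \N \to \prod_{\pi \in I}\A_\pi$ with $h(a)_\pi = h_\pi(a)$. This preserves positive atomic facts by construction and reflects every negative atomic fact $(\bar a, \psi)$ because the $\pi = (\bar a, \psi)$-coordinate itself refutes $\psi(h_\pi(\bar a))$ in $\A_\pi$; applying the construction with $\psi \equiv (x_1 = x_2)$ to every pair $a \neq b$ in $\N$ yields injectivity. Hence $h$ is a subobject embedding into a direct product of elements of $\mathfrak M$, placing $\N$ in $\mathbb{ISP}(\mathfrak M) \subseteq \mathbb{ISP}_R(\mathfrak M) = \mathfrak M$. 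The main obstacle is the compactness-style ultraproduct step inside the witness extraction: this is precisely where closure under $\mathbb P_R$, rather than merely under $\mathbb P$, becomes indispensable, even though the final embedding uses only a direct product.
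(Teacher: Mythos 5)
The paper does not actually prove this theorem: it is imported from the literature (cited as Theorem~2.23 of the reference keyed \texttt{Burris81}), so there is no in-paper argument to compare against. Your proof is the standard textbook one --- preservation of universal Horn sentences under $\mathbb I$, $\mathbb S$ and $\mathbb P_R$ for $(1)\Rightarrow(2)$, the closure-operator identities making $\mathbb{ISP}_R$ idempotent for $(2)\Leftrightarrow(3)$, and the positive-diagram-plus-compactness-via-ultraproducts construction of the separating family for $(2)\Rightarrow(1)$ --- and it is correct as written. The one caveat is the usual degenerate case: if $\N$ has no negative atomic facts, your witness index set is empty, so one must either admit empty (reduced) products with the trivial structure as their value or stipulate that $\mathfrak M$ contains the trivial structure; since the paper defines direct and reduced products only for non-empty index families, some such convention is needed for the statement to hold verbatim.
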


%++++++++++++++++++++++++++++++++++++++++++++++++++++++++++++++++++++++++++++++++++++++++++++++++++++++++++++++++++++++++++++++++++++++++

\subsection{Quantitative First-Order Structures}

In this subsection we identify a class of first-order structures, the \textit{quantitative first-order structures} (QFOs), which are the first-order counterparts of the quantitative algebras.

Given a first-order structure $\M=(M,\Omega^\M,\R^\M)$ of type $(\Omega,\R)$, $f:k\in\Omega$ and $R:l\in\R$, let $f(R^\M)\subseteq M^l$ be the set of the tuples $(f(m_1^1,..,m_k^1),..,f(m_1^l,..,m_k^l))$ such that for each $i=1,..,k$, $(m_i^1,..,m_i^l)\in R^\M$.

\begin{defi}\label{qfo} [Quantitative First-Order Structure]
	An $\Omega$-\textit{quantitative first-order structure} for a signature $\Omega$ is a first-order structure $\M=(M,\Omega^\M,\equiv^\M)$ of type $(\Omega,\equiv)$, where $\mathord\equiv=\{=_\e\mid\e\in\prationals\},$ that satisfies the following axioms for any $\e,\delta\in\prationals$
	\begin{enumerate}
		\item $\mathord{=_0^\M}$ is the identity on $\M$;
		\item $\mathord{=_\e^\M}$ is symmetric;
		\item $\mathord{=_\e^\M\circ=_\delta^\M\subseteq =_{\e+\delta}^\M}$;
		\item $\mathord{=_\e^\M\subseteq =_{\e+\delta}^\M}$;
		\item for any $f:k\in\Omega$, $\mathord{f(=^\M_\e)\subseteq =^\M_\e}$; %and $m_i,n_i\in M$, $m_i=_\e n_i$ for $i=1,..k$ implies $f(m_1,..,m_k)=_\e f(n_1,..n_k)$.
		\item for any $\delta$, $\mathord{\displaystyle\bigcap_{\e>\delta}=_\e\subseteq=_\delta}$;
	\end{enumerate}
\end{defi}

\begin{thm}\label{correspondence}
	(i) Any quantitative algebra $\A=(A,\Omega,d)$ defines uniquely a quantitative first-order structure by $$a=_\e b~\text{ iff }~d(a,b)\leq\e.$$
	(ii) Any quantitative first-order structure $\M=(M,\Omega^\M,\equiv^\M)$ defines uniquely a quantitative algebra by letting $$d(m,n)=\inf\{\e\in\prationals\mid m=_\e n\}.$$
	These define an isomorphism between the category of $\Omega$-quantitative algebras and $\Omega$-quantitative first-order structures.
\end{thm}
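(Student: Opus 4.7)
The plan is to establish the two object-level constructions (i) and (ii), verify that they are mutually inverse, and then lift the correspondence to a functor isomorphism.

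For (i), given a QA $\A = (A, \Omega, d)$, define $a =_\epsilon^\A b$ iff $d(a,b) \leq \epsilon$, and verify the six axioms of Definition \ref{qfo} one by one. Axioms 1--4 are immediate restatements of the metric axioms: $=_0$ being the identity comes from $d$ being a metric, symmetry of $=_\epsilon$ from symmetry of $d$, the composition rule (axiom 3) from the triangle inequality, and monotonicity (axiom 4) from $\epsilon \leq \epsilon + \delta$. Axiom 5 is exactly the non-expansiveness of each function symbol in $\Omega^\A$. Axiom 6 holds because if $d(a,b) \leq \epsilon'$ for every rational $\epsilon' > \delta$, then taking the infimum yields $d(a,b) \leq \delta$.

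For (ii), given a QFO $\M = (M, \Omega^\M, \equiv^\M)$, set $d(m,n) = \inf\{\epsilon \in \prationals : m =_\epsilon^\M n\}$. The delicate point is that $d$ is genuinely a metric: $d(m,m) = 0$ uses axiom 1; symmetry uses axiom 2; the triangle inequality combines axiom 3 with density of $\prationals$ in $\preals$; and the implication $d(m,n) = 0 \Rightarrow m = n$ requires the Archimedean axiom 6, which upgrades $m =_{\epsilon'}^\M n$ for every rational $\epsilon' > 0$ to $m =_0^\M n$, whereupon axiom 1 gives $m = n$. Non-expansiveness of each $f \in \Omega^\M$ is inherited from axiom 5 in an ``$\inf$-over-rationals'' style: from $d(m_i,n_i) \leq \epsilon$ for all $i$ and any rational $\epsilon' > \epsilon$ we get $m_i =_{\epsilon'}^\M n_i$ by axiom 4, then axiom 5 yields $f(m_1,\ldots,m_k) =_{\epsilon'}^\M f(n_1,\ldots,n_k)$, so $d(f(m_1,\ldots,m_k), f(n_1,\ldots,n_k)) \leq \epsilon'$ for all such $\epsilon'$, hence $\leq \epsilon$.

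That the constructions are mutually inverse is the key step. Round-tripping $d$ through (i) and (ii) yields $d'(a,b) = \inf\{\epsilon \in \prationals : d(a,b) \leq \epsilon\}$, which equals $d(a,b)$ by density of the rationals in $\preals$. Round-tripping $\equiv^\M$ requires axiom 6 to recover the precise cutoff: one direction is immediate, and for the other, $d(m,n) \leq \epsilon$ yields $m =_{\epsilon'}^\M n$ for every rational $\epsilon' > \epsilon$ (via axiom 4 and the definition of $d$), after which axiom 6 concludes that $m =_\epsilon^\M n$. Finally, for the functor isomorphism the underlying $\Omega$-algebra homomorphism is shared, and a homomorphism $f : \A \to \B$ is non-expansive precisely when $a =_\epsilon^\A b \Rightarrow f(a) =_\epsilon^\B f(b)$ for every $\epsilon \in \prationals$, i.e., when $f$ preserves the $=_\epsilon$ relations of the corresponding QFOs. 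The main obstacle throughout is the careful use of the Archimedean axiom to pin down the precise value of $\epsilon$ rather than an arbitrarily close rational above it; once this is handled, the two constructions are identity-on-carriers inverses and the asserted category isomorphism follows.
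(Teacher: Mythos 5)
Your proof is correct and takes essentially the same route as the paper, which simply declares the verification trivial on the grounds that axioms (1)--(6) of Definition \ref{qfo} correspond to the rules (Refl), (Symm), (Triang), (Max), (NExp) and (Arch); your write-up supplies precisely the details the paper omits, in particular the repeated use of the Archimedean axiom (6) to pin down the exact cutoff $\e$ in the round-trip and to get $d(m,n)=0\Rightarrow m=n$. One point worth noting in your favour: for the category isomorphism to hold, morphisms of QFOs must \emph{preserve} (not reflect) the relations $=_\e$, which is exactly the one-directional reading you use; the paper's general definition of morphism of first-order structures is stated with an ``iff'', which taken literally would restrict the QFO morphisms to isometric embeddings and break the correspondence with non-expansive homomorphisms.
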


\begin{proof}
The proof is trivial and relies on the fact that conditions (1)-(6) in Definition \ref{qfo} corresponds to (Refl), (Symm), (Triang), (Max), (Arch) and (NExp) respectively.
\end{proof}

Let $\cat{QA_\Omega}$ be the category of $\Omega$-quantitative algebras and $\cat{QFO_\Omega}$ the category of $\Omega$-quantitative first-order structures.  Theorem \ref{correspondence} defines two functors $\mathbb F$ and $\mathbb G$ that act as identities on morphisms, which define an isomorphism of categories as in the figure below.
\begin{equation*}
\begin{tikzpicture}[baseline={(m.center)},arrow label/.style={font=\scriptsize}]
\matrix (m) [commutative diagram={1cm}{1.2cm}] {
	\cat{QA_\Omega} & \cat{QFO_\Omega} \\
};
\path[-latex,arrow label]
(m-1-1) edge[bend left] node[above] {$\mathbb F$} (m-1-2)
(m-1-2) edge[bend left] node[above] {$\mathbb G$} (m-1-1)
;
\end{tikzpicture}
\end{equation*}
We already know that the subobjects and the direct products of quantitative first-order structures are first-order structures.  However, since the isomorphisms of categories preserve limits and colimits, we can prove that the subobjects and the direct products of quantitative first-order structures are, in fact, quantitative first-order structures, i.e., they satisfy the axioms (1)-(6) of Definition \ref{qfo}, as the next lemma establishes.

\begin{lem}\label{subobjandprod}
	I.  If $\M,\N$ are $\Omega$-quantitative first-order structures s.t.  $\M\leq\N$, then $$\mathbb G\M\leq\mathbb G\N.$$	
	II.  If $\str{\M}{i}{I}$ is a family of $\Omega$-quantitative first-order structures, then $$\mathbb G(\pd_{i\in I}\M_i)=\pd_{i\in I}\mathbb G\M_i.$$
	III.  If $\A,\B$ are $\Omega$-quantitative algebras such that $\A\leq\B$, then $$\mathbb F(\A)\leq\mathbb F\B.$$
	IV.  If $\str{\A}{i}{I}$ is a family of $\Omega$-quantitative algebras, then $$\mathbb F(\pd_{i\in I}\A_i)=\pd_{i\in I}\mathbb F\A_i.$$
\end{lem}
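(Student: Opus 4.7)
The plan is to reduce everything to a direct computation verifying the metric-to-relation correspondence is preserved under restriction to subsets and under products. Since both $\mathbb F$ and $\mathbb G$ act as the identity on underlying $\Omega$-algebras (they only rewrite the metric as the family $\{=_\e\}_{\e\in\prationals}$ or vice versa), the underlying $\Omega$-algebra of each side of each of (I)--(IV) already coincides. So in every case the only remaining task is to check that the ``metric data'' matches. I would treat (III) and (IV) first, since the direction $d\mapsto{=_\e}$ is immediate, and then use that work to guide (I) and (II).

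For (III): if $\A\leq\B$ then $d^\A$ is the restriction of $d^\B$ to $A$, so for $a,b\in A$ and any $\e\in\prationals$,
\[
a=_\e^{\mathbb F\A}b ~\Longleftrightarrow~ d^\A(a,b)\leq\e ~\Longleftrightarrow~ d^\B(a,b)\leq\e ~\Longleftrightarrow~ a=_\e^{\mathbb F\B}b,
\]
which is exactly the subobject condition. For (IV): write $\prod_i\A_i=(\prod_i A_i,\Omega,d)$ with $d(a,b)=\sup_i d^{\A_i}(\pi_i(a),\pi_i(b))$. Then
\[
a=_\e^{\mathbb F(\prod_i\A_i)}b ~\Longleftrightarrow~ \sup_i d^{\A_i}(\pi_i(a),\pi_i(b))\leq\e ~\Longleftrightarrow~ \forall i,~\pi_i(a)=_\e^{\mathbb F\A_i}\pi_i(b),
\]
which is the definition of the relation on the product first-order structure.

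For (I): for $a,b\in M$ both $d^{\mathbb G\M}(a,b)$ and $d^{\mathbb G\N}(a,b)$ are defined as $\inf\{\e\mid a=_\e b\}$, and by the subobject condition $\M\leq\N$ the sets $\{\e\mid a=_\e^\M b\}$ and $\{\e\mid a=_\e^\N b\}$ coincide, giving equality of infima. For (II), which I expect to be the main obstacle, I would show
\[
\inf\{\e\mid \forall i,~\pi_i(m)=_\e^{\M_i}\pi_i(n)\}=\sup_i \inf\{\e\mid\pi_i(m)=_\e^{\M_i}\pi_i(n)\}.
\]
The inequality $\leq$ (LHS $\leq$ RHS) is the delicate direction and is where axioms (4) and (6) of Definition \ref{qfo} come in: letting $s$ denote the right-hand side, for any $\delta>0$ and each $i$ one can pick $\e_i\leq s+\delta$ with $\pi_i(m)=_{\e_i}^{\M_i}\pi_i(n)$, then apply axiom (4) (monotonicity in $\e$) to lift every $\e_i$ up to $s+\delta$ simultaneously, obtaining $(m,n)\in{=_{s+\delta}^{\prod_i\M_i}}$ and hence LHS $\leq s+\delta$; letting $\delta\to 0$ completes that direction. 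The reverse inequality $\geq$ is immediate: any $\e$ with $\pi_i(m)=_\e\pi_i(n)$ for all $i$ bounds every summand in the supremum. This computation is the only nontrivial step; the rest of the lemma reduces to bookkeeping once (II) is established.
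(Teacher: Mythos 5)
Your computation is correct, and it is more explicit than anything the paper offers: the paper states this lemma without proof, gesturing in the preceding paragraph at the fact that an isomorphism of categories preserves limits, together with a direct check that the product first-order structure inherits axioms (1)--(6) componentwise. Your route instead verifies the metric/relation correspondence by hand in each of the four cases, with the $\sup$--$\inf$ exchange in (II) as the one real step; that exchange is right, and the inequality LHS $\leq$ RHS goes through exactly as you describe via monotonicity (axiom (4)). Two small points to tighten. First, $s+\delta$ need not lie in $\prationals$ (and $s$ itself may be irrational or $\infty$), so you should instead fix a rational $\e$ with $s<\e<s+\delta$ (handling $s=\infty$ trivially) and lift each $\e_i$ to that $\e$; this is cosmetic. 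Second, axiom (6) plays no role in your $\sup$--$\inf$ computation; where it genuinely matters is in checking that $\pd_{i\in I}\M_i$ is itself a quantitative first-order structure so that $\mathbb G$ may be applied to it at all --- the paper's framing makes this part of the content of the lemma. Your argument covers it implicitly (since you identify $\mathbb G(\pd_i\M_i)$ metrically with $\pd_i\mathbb G\M_i$, which is a product of quantitative algebras and hence a quantitative algebra, and axiom (6) for the product follows componentwise), but it is worth saying so explicitly rather than attributing axiom (6) to the wrong step.
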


\subsection{Subreduced Products of Quantitative First-Order Structures}

Given an indexed family $\str{\M}{i}{I}$ of $\Omega$-quantitative first-order structures and a proper filter $F$ on $I$, we can construct, as before, the reduced product $(\str{\M}{i}{I})|_F$ of first-order structures, which is a first-order structure.  But it is not guaranteed that it satisfies the axioms in Definition \ref{qfo}.  From the definition of the reduced product we obtain a first-order structure $(\str{\M}{i}{I})|_F$ that enjoys the following property for any $\e\in\prationals$.
$$m_F=_\e n_F~\text{ iff }~\{i\in I\mid \pi_i(m)=_\e\pi_i(n)\}\in F.$$
Note that if for all $i\in I$, $\M_i$ satisfies the axioms (1)-(5) from Definition \ref{qfo}, then $(\str{\M}{i}{I})|_F$ satisfies them as well.

For instance, we can verify the condition (3): suppose that $m_F=_\e n_F$ and $n_F=_\delta u_F$.  Hence, $$\{i\in I\mid \pi_i(m)=_\e\pi_i(n)\}, \{i\in I\mid \pi_i(n)=_\delta\pi_i(u)\}\in F.$$ 
Since $F$ is a filter, it is closed under intersection, so 
$$\{i\in I\mid \pi_i(m)=_\e\pi_i(n)\text{ and }\pi_i(n)=_\delta\pi_i(u)\}\in F.$$ 
Now, axiom (3) guarantees that 
$$\{i\in I\mid \pi_i(m)=_\e\pi_i(n)\text{ and }\pi_i(n)=_\delta\pi_i(u)\}$$$$\subseteq\{i\in I\mid \pi_i(m)=_{\e+\delta}\pi_i(u)\}$$ and since $F$ is closed under supersets, $$\{i\in I\mid \pi_i(m)=_{\e+\delta}\pi_i(u)\}\in F.$$
Similarly, one can verify each of the axioms but (6).  This is because axiom (6) requires that any reduced product has the property that for any $\delta\in\prationals$, 
$$\{i\in I\mid \pi_i(m)=_\e\pi_i(n)\}\in F\text{ for all }\e>\delta$$ implies $$\{i\in I\mid \pi_i(m)=_\delta\pi_i(n)\}\in F.$$
This is a very strong condition not necessarily satisfied by a filter or an ultrafilter.  It is, for instance, satisfied by the filters and ultrafilters closed under countable intersections, but the existence of such filters requires measurable cardinals (see for instance \cite{Chang92} for a detailed discussion).

Hence, while the reduced products of quantitative first-order structures can always be defined as first-order structures, they are not always quantitative first-order structures, since they might not satisfy axiom (6) in Definition \ref{qfo}.  Therefore, taking reduced products and ultraproducts are not internal operations over the class of quantitative first-order structures of the same type, even if they are internal operations over the larger class of first-order structures of the same type.  This observation motivates our next definition.

\begin{defi}[Subreduced Products] 
	Given an indexed family $\str{\M}{i}{I}$ of quantitative first-order structures and a proper filter $F$ on $I$, a \textit{subreduced product} of this family induced by $F$ is any subobject $\M$ of the first-order structure $(\pd_{i\in I}\M_i)|_F$ such that $\M$ is a quantitative first-order structure.  %If $F$ is an ultrafilter, we call such a subreduced product a \textit{subultraproduct}.
\end{defi}

Given a class $\mathfrak M$ of quantitative first-order structures of the same type, the closure of $\mathfrak M$ under subreduced products is denoted by $\mathbb P_{SR}(\mathfrak M)$.  %Similarly, $\mathbb P_{SU}(\mathcal S)$ denotes the closure of $\mathcal S$ under subultraproducts.

With this concept in hand we can generalize the quasivariety theorem for first-order structures to get a similar result for classes of QFOs that can be properly axiomatized.

\begin{thm}[Quasivariety Theorem for Quantitative First-Order Structures]\label{quasivar}
	Let $\mathfrak M$ be a class of $\Omega$-quantitative first-order structures.  Then, the following statements are equivalent.
	\begin{enumerate}
		\item $\mathfrak M$ is an universal Horn class;
		\item $\mathfrak M$ is closed under $\mathbb {I, S}$ and $\mathbb P_{SR}$;
		\item $\mathfrak M=\mathbb{ISP}_{SR}(\mathfrak M_0)$ for some class $\mathfrak M_0$ of $\Omega$-quantitative first-order structures.
	\end{enumerate}
\end{thm}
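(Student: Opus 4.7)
The plan is to prove the cycle $(1) \Rightarrow (2) \Rightarrow (3) \Rightarrow (1)$. For $(1) \Rightarrow (2)$: standard first-order model theory tells us that universal Horn formulas are preserved by isomorphisms, by substructures, and by reduced products. Consequently they are preserved by subreduced products, since such a product is by definition an $\lng$-substructure of a reduced product that happens to be a QFO. Hence, if $\mathfrak M$ is axiomatized by a set $\Phi$ of universal Horn $\lng$-formulas, then $\mathfrak M$ is closed under $\mathbb I$, $\mathbb S$ and $\mathbb P_{SR}$. The implication $(2) \Rightarrow (3)$ is trivial: take $\mathfrak M_0 = \mathfrak M$.

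The substantive work is in $(3) \Rightarrow (1)$. Let $\mathfrak M = \mathbb{ISP}_{SR}(\mathfrak M_0)$ and set $\Phi$ to be the universal Horn theory of $\mathfrak M_0$, i.e., the set of universal Horn $\lng$-formulas that are satisfied by every member of $\mathfrak M_0$. I will show that $\mathfrak M$ is exactly the class of QFOs that satisfy $\Phi$, whence it is a universal Horn class. The inclusion $\mathfrak M \subseteq \{\N\text{ QFO} \mid \N \models \Phi\}$ is an immediate consequence of the preservation argument used in $(1) \Rightarrow (2)$. For the reverse inclusion, take any QFO $\N$ with $\N \models \Phi$, and view $\N$ together with every $\M \in \mathfrak M_0$ as $\lng$-structures (forgetting the QFO axioms). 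The classical Quasivariety Theorem \ref{mt:quasi}, applied in the category of $\lng$-structures, yields $\N \in \mathbb{ISP}_R(\mathfrak M_0)$ as $\lng$-structures. So there exist a family $(\M_i)_{i \in I} \subseteq \mathfrak M_0$, a proper filter $F$ on $I$ and an $\lng$-embedding $h \colon \N \hookrightarrow (\pd_{i\in I}\M_i)|_F$. Its image $h(\N)$ is an $\lng$-substructure of the reduced product; being $\lng$-isomorphic to $\N$, it satisfies all six conditions of Definition \ref{qfo}, so it is itself a QFO. By definition this makes $h(\N)$ a subreduced product of $(\M_i)_{i \in I}$, and hence $\N \cong h(\N) \in \mathbb{ISP}_{SR}(\mathfrak M_0) = \mathfrak M$.

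The main obstacle is the mismatch, already flagged before Definition \ref{qfo}'s discussion of reduced products, between reduced products in the category of $\lng$-structures and reduced products intrinsic to QFOs: the former need not verify the infinitary axiom (6) of Definition \ref{qfo}, which is precisely why subreduced products had to be introduced. The argument above circumvents this by exploiting that we only need to embed $\N$ into \emph{some} reduced product; the image $h(\N)$ is automatically a QFO (since $\N$ is), so passing to this substructure lands us exactly inside the class of subreduced products. A minor routine point to check is that Theorem \ref{mt:quasi} indeed applies to the language $\lng = (\Omega, \equiv)$, whose relational part $\equiv = \{=_\e\}_{\e \in \prationals}$ is countably infinite; this is no issue, since the classical theorem holds for arbitrary first-order signatures.
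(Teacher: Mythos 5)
Your proof is correct and follows essentially the same route as the paper: both reduce each implication to the classical Quasivariety Theorem (Theorem \ref{mt:quasi}) for $\lng$-structures and hinge on the same key observation that an $\lng$-substructure of a reduced product which is isomorphic to a QFO is itself a QFO, hence a subreduced product. The only cosmetic difference is that in $(3)\Rightarrow(1)$ you identify $\mathfrak M$ with the QFO-models of the universal Horn theory of $\mathfrak M_0$, whereas the paper identifies it with $\mathbb{ISP}_R(\mathfrak M)\cap\cat{QFO_\Omega}$; these amount to the same thing.
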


\begin{proof}
	%The proof relies on Theorem \ref{mt:quasi}.
	$(1)\Longrightarrow(2)$: let $\mathfrak M$ be an universal Horn class of $\Omega$-QFOs.  Then there exists an universal Horn class of $\Omega$-first-order structures $\mathfrak M'$ that satisfies the same first-order theory $\mathcal T$ that $\mathfrak M$ does.  If we denote the class of $\Omega$-quantitative first-order theories by $\cat{QFO_\Omega}$, we have $$\mathfrak M=\mathfrak M'\cap \cat{QFO_\Omega}.$$
	Applying Theorem \ref{mt:quasi}, $\mathfrak M'$ is closed under $\mathbb{I,~S}$ and $\mathbb P_R$.
	
	Obviously, $\mathfrak M$ is closed under $\mathbb I$, since isomorphic first-order structures satisfy the same first-order sentences.  $\mathfrak M$ is also closed under $\mathbb S$, as Lemma \ref{subobjandprod} guarantees.
	
	Let $\{\M_i\mid i\in I\}\subseteq\mathfrak M$ and $F$ a proper filter of $I$.  
	
	Let $\M\leq(\pd_{i\in I}\M_i)|_F$ such that $\M\in \cat{QFO_\Omega}$.  
	
	Since $\{\M_i\mid i\in I\}\subseteq\mathfrak M'$ and $\mathbb P_R(\mathfrak M')=\mathfrak M'$, we get that $(\pd_{i\in I}\M_i)|_F\in\mathfrak M'$.  Hence, $\M\in\mathbb S(\mathfrak M')=\mathfrak M'$.  And further, $\M\in\mathfrak M'\cap \cat{QFO_\Omega}=\mathfrak M$.  In conclusion, $\mathfrak M$ is also closed under $\mathbb P_{SR}$.
	
	$(2)\Longrightarrow(3)$: since $\mathfrak M$ is closed under $\mathbb{I,~S}$ and $\mathbb P_{SR}$, $$\mathfrak M=\mathbb{ISP}_{SR}(\mathfrak M).$$
	$(3)\Longrightarrow(1)$: suppose that $\mathfrak M=\mathbb{ISP}_{SR}(\mathfrak M_0)$ for some class $\mathfrak M_0$ of quantitative first-order structures.  \\Let $\mathfrak M'=\mathbb{ISP}_{R}(\mathfrak M)$.  Applying Theorem \ref{mt:quasi}, $\mathfrak M'$ is a universal Horn class of first-order structures.  We prove now that $\mathfrak M=\mathfrak M'\cap \cat{QFO_\Omega}$.
	
	Let $\M\in\mathfrak M'\cap \cat{QFO_\Omega}$.  Then, $\M$ is isomorphic to some $\N\leq(\pd_{i\in I}\M_i)|_F$ for some $\str{\M}{i}{I}\subseteq\mathfrak M$ and a proper filter $F$ of $I$, and $\N\in \cat{QFO_\Omega}$.  Hence, $\M\in\mathbb{ISP}_{SR}(\mathfrak M)=\mathfrak M$.  And this concludes that $\mathfrak M'\cap \cat{QFO_\Omega}\subseteq\mathfrak M$.
	
	Since we have trivially  $\mathfrak M\subseteq\mathfrak M'\cap \cat{QFO_\Omega}$ from the way we constructed $\mathfrak M'$, we get that $\mathfrak M=\mathfrak M'\cap \cat{QFO_\Omega}$.
	
	Now, since $\mathfrak M'$ is a universal Horn class of first-order structures, we obtain that $\mathfrak M$ is a universal Horn class of quantitative first-order structures.
\end{proof}

%+++++++++++++++++++++++++++++++++++++++++++++++++++++++++++++++++++++++++++++++++++++++++++++++++++++++++

\subsection{Subreduced Products of Quantitative Algebras}

Theorem \ref{quasivar} characterizes classes of $\Omega$-QFOs as universal Horn classes.  In this subsection we convert this result into a result regarding the axiomatizability of classes of quantitative algebras.

%Let $\str{|}{i}{I}$ be a family of quantitative algebras, where $\A_i=(A_i,\Omega,d^i)$.
%In this subsection we apply the model theoretic results about the quantitative first-order theory to derive similar results, and eventually a quasivariety theorem, for the case of quantitative algebras.

For the beginning, we note an equivalence between the conditional equations interpreted over the class of quantitative algebras and the universal Horn formulas interpreted over the class of quantitative first-order structures.  This relies on the fact that a quantitative equation of type $s=_\e t$ is also an atomic formula in the corresponding quantitative first-order language and vice versa.  The following theorem establishes this correspondence.

\begin{thm}\label{logicalequiv}
	Let $\phi_1(x_1,..,x_k)\ldots,\phi_l(x_1,..,x_k)$ and $\psi(x_1,..,x_k)$ be $\Omega$-quantitative first-order atomic formulas depending of the variables $x_1,..,x_k\in X$.  
	
	I.  If $\M$ is an $\Omega$-quantitative first-order structure, then the following statements are equivalent  $$\M\models\forall x_1..\forall x_k(\phi_1(x_1,..x_k)\land..\land\phi_l(x_1,..x_k)\to\psi(x_1,..x_k)),$$ 
	$$\{\phi_1(x_1,..,x_k)\land..\land\phi_l(x_1,..,x_k)\}\models_{\mathbb G \M}\psi(x_1,..,x_k).$$
	
	II.  If $\A$ is an $\Omega$-quantitative algebra, then the following statements are equivalent  $$\{\phi_1(x_1,..,x_k)\land..\land\phi_l(x_1,..,x_k)\}\models_{\A}\psi(x_1,..,x_k),$$
	$$\mathbb F\A\models\forall x_1..\forall x_k(\phi_1(x_1,..x_k)\land..\land\phi_n(x_1,..x_k)\to\psi(x_1,..x_k)).$$ 
\end{thm}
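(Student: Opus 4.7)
The strategy is to unfold the semantic definitions on both sides and exploit the fact that, under the categorical isomorphism established by $\mathbb F$ and $\mathbb G$, the structures $\M$ and $\mathbb G\M$ (resp.\ $\A$ and $\mathbb F\A$) share the same underlying set and the same $\Omega$-algebra structure; only the "$\e$-closeness" information is presented in two equivalent guises. Because the only relation symbols of the first-order signature are the $=_\e$ for $\e\in\prationals$, every atomic $\Omega$-quantitative first-order formula is either of the form $s=t$ or $s=_\e t$, and these coincide with quantitative equations (the former via axiom (1), which identifies $=_0^\M$ with set-theoretic equality). Consequently, both sides of each equivalence in the theorem amount to the same condition after one translates atomic formulas into quantitative equations, so the proof reduces essentially to a single bridging lemma applied pointwise.

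The bridging lemma I would isolate first says: for every $m,n$ in the common universe and every $\e\in\prationals$,
\[
m=_\e^\M n \iff d^{\mathbb G\M}(m,n)\leq\e.
\]
The left-to-right direction is immediate from $d^{\mathbb G\M}(m,n)=\inf\{\delta\mid m=_\delta^\M n\}$. For right-to-left, if $d^{\mathbb G\M}(m,n)<\e$ then there is $\delta<\e$ with $m=_\delta^\M n$, and axiom (4) of Definition~\ref{qfo} upgrades this to $m=_\e^\M n$; if $d^{\mathbb G\M}(m,n)=\e$, then $m=_\delta^\M n$ for every $\delta>\e$, and axiom (6) (the Archimedean axiom) yields $m=_\e^\M n$. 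The analogous bridge for part~II is definitional: by construction of $\mathbb F$, one has $m=_\e^{\mathbb F\A}n$ iff $d^\A(m,n)\leq\e$.

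With the bridge in hand, for part~I I would unfold the universal Horn semantics: $\M\models\forall x_1\cdots\forall x_k(\phi_1\wedge\cdots\wedge\phi_l\to\psi)$ iff for every tuple $(m_1,\ldots,m_k)\in M^k$, if each $\phi_j(m_1,\ldots,m_k)$ holds in $\M$ then so does $\psi(m_1,\ldots,m_k)$. Since $\M$ and $\mathbb G\M$ have the same universe and same $\Omega$-algebra structure, a tuple $(m_1,\ldots,m_k)$ is exactly the data of an $\Omega$-homomorphism $\alpha\colon\T X\to\mathbb G\M$ restricted to the free variables (values on the remaining variables do not affect $\phi_j$ or $\psi$). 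For each atomic hypothesis or conclusion $s_j=_{\e_j}t_j$, the bridge gives $\M\models s_j(\vec m)=_{\e_j}t_j(\vec m)$ iff $d^{\mathbb G\M}(\alpha(s_j),\alpha(t_j))\leq\e_j$, which is precisely satisfaction of the quantitative equation under $\alpha$ in $\mathbb G\M$. This matches the definition of $\{\phi_1,\ldots,\phi_l\}\models_{\mathbb G\M}\psi$ term-by-term, proving the equivalence. Part~II is entirely symmetric via the other direction of the bridge.

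The argument is essentially a semantic unfolding, so I do not anticipate a serious technical obstacle. The one place that does require care is the boundary case $d=\e$ of the bridge, where the use of axiom (6) is indispensable; without the Archimedean axiom the implication "$d(m,n)\leq\e$ implies $m=_\e n$" would fail in general, and the correspondence between the two semantics would break down precisely at this edge.
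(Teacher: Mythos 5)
Your proposal is correct. The paper actually states Theorem~\ref{logicalequiv} without any proof, treating it as an immediate consequence of the correspondence in Theorem~\ref{correspondence} and the observation that quantitative equations and atomic formulas coincide; your semantic unfolding supplies exactly the details left implicit, and your bridging lemma $m=_\e^\M n \iff d^{\mathbb G\M}(m,n)\leq\e$ --- including the boundary case $d(m,n)=\e$ handled via axioms (4) and (6) of Definition~\ref{qfo} --- is precisely the content that makes the two satisfaction relations match term by term.
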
 

As in the case of quantitative first-order structures, the concept of subdirect product of an indexed family of quantitative algebras for a given proper filter is not always defined.  The following definition reflects this issue.

\begin{defi}[Subreduced products of Quantitative Algebras]
	Let $\str{\A}{i}{I}$ be an indexed family of $\Omega$-quantitative algebras and $F$ a proper filter of $I$.  A \textit{subreduced product} of this family induced by $F$ is a quantitative algebra $\A$ s.t. $$\mathbb F\A\leq\pd_{i\in I}(\mathbb F\A_i)|_F.$$	
\end{defi}

Let $\mathbb P_{SR}(\mathcal K)$ be the closure of the class $\mathcal K$ of quantitative algebras under subreduced products.  Now we can provide the analogue of Theorem \ref{quasivar} for quantitative algebras as a direct consequence of Theorem \ref{correspondence} , Theorem \ref{quasivar} and Theorem \ref{logicalequiv}.

\begin{thm}\label{QA:quasivar}
	Let $\mathcal K$ be a class of $\Omega$-quantitative algebras.  The following statements are equivalent.
	\begin{enumerate}
		\item $\mathcal K$ is a conditional equational class;
		\item $\mathcal K$ is closed under $\mathbb{I,~S}$ and $\mathbb P_{SR}$;
		\item $\mathcal K=\mathbb{ISP}_{SR}(\mathcal K_0)$ for some class $\mathcal K_0$ of $\Omega$-quantitative algebras.
	\end{enumerate}
\end{thm}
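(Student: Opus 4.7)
The plan is to transport the quasivariety theorem for quantitative first-order structures (Theorem \ref{quasivar}) through the categorical isomorphism $\mathbb F \dashv \mathbb G$ of Theorem \ref{correspondence}, bridging the algebraic formulation and the model-theoretic one via Theorem \ref{logicalequiv}. Since that latter already tells us that a conditional quantitative equation holds in $\A$ iff the associated universal Horn formula holds in $\mathbb F \A$, the whole argument will reduce to checking that the three closure operators $\mathbb I$, $\mathbb S$, $\mathbb P_{SR}$ on classes of QAs correspond, under $\mathbb F$, to the homonymous operators on classes of QFOs.

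First I would verify this correspondence of closure operators. Isomorphism closure is automatic from $\mathbb F$, $\mathbb G$ being mutually inverse functors. Subobject closure is covered by parts I and III of Lemma \ref{subobjandprod}, which say that both $\mathbb F$ and $\mathbb G$ preserve subobjects. Closure under subreduced products is essentially by definition: a subreduced product of QAs was defined as a QA $\A$ with $\mathbb F\A \leq \pd_{i\in I}(\mathbb F \A_i)|_F$, so $\mathcal K$ is $\mathbb P_{SR}$-closed (in the QA sense) iff $\mathbb F(\mathcal K)$ is $\mathbb P_{SR}$-closed in the sense used in Theorem \ref{quasivar}. Together these give that condition (2) for $\mathcal K$ is equivalent to condition (2) of Theorem \ref{quasivar} for $\mathbb F(\mathcal K)$, and condition (3) for $\mathcal K$ matches (3) of Theorem \ref{quasivar} for $\mathbb F(\mathcal K)$.

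Next I would use Theorem \ref{logicalequiv} to show that $\mathcal K$ is a conditional equational class of QAs iff $\mathbb F(\mathcal K)$ is a universal Horn class of QFOs. The forward direction replaces each axiom $\{\phi_i \mid i \in I\}\vdash \psi$ by the universal Horn sentence $\forall \vec x\,(\bigwedge_i \phi_i \to \psi)$ and invokes part II of Theorem \ref{logicalequiv} to see that the class of models is preserved; the converse is symmetric, using part I and the fact that every universal Horn formula can be put in the prenex shape to which Theorem \ref{logicalequiv} applies.

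Combining the two translations, the three statements about $\mathcal K$ match, clause by clause, the three statements about $\mathbb F(\mathcal K)$ in Theorem \ref{quasivar}, and the chain of equivalences (1)$\Leftrightarrow$(2)$\Leftrightarrow$(3) then follows directly from the QFO version. The one point that needs care is the matching of $\mathbb P_{SR}$ on the two sides, since this requires the interplay of subobjects and reduced products to be compatible with $\mathbb F/\mathbb G$; but this compatibility is precisely what the definition of subreduced product of QAs is engineered to deliver, once Lemma \ref{subobjandprod} is in hand. The heavy lifting on the model-theoretic side, in particular the non-trivial direction (3)$\Rightarrow$(1), is already done by Theorem \ref{quasivar}, so the QA-level argument is essentially a transport of structure.
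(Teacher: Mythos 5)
Your proposal is correct and follows exactly the route the paper takes: the paper states Theorem \ref{QA:quasivar} as a direct consequence of Theorem \ref{correspondence}, Theorem \ref{quasivar} and Theorem \ref{logicalequiv}, and your write-up is simply a careful expansion of that transport-of-structure argument (matching the closure operators via Lemma \ref{subobjandprod} and the definition of subreduced products, and matching axiomatizability via Theorem \ref{logicalequiv}). The only cosmetic slip is writing the functor pair as an adjunction rather than the isomorphism of categories it actually is, which does not affect the argument.
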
 

%\begin{proof}
%	Direct consequence of Theorem \ref{quasivar}, given the results stated 		in Theorem \ref{correspondence} and Theorem . 	\end{proof}

%++++++++++++++++++++++++++++++++++++++++++++++++++++++++++++++++++++++++++++++++++++++

\subsection{Going further: Complete Quantitative Algebras}

The proof pattern that we developed to prove the quasivariety theorem for
QFOs, Theorem \ref{quasivar}, is actually more general and it could be used
to provide similar theorems for other classes of quantitative algebras.  In
\cite{Mardare16} we have shown that the class of quantitative algebras
defined over complete metric spaces plays a central in the theory of
quantitative algebras.  For this reason we will briefly show how a
quasivariety theorem could be done for complete metric spaces.

We call a quantitative algebra over a complete metric space a
\textit{complete quantitative algebra.}

If we follow the intuition behind Theorem \ref{correspondence}, we will
discover that we can define the concept of complete quantitative
first-order structure as being a quantitative first-order structure for
which the corresponding quantitative algebra through the functor $\mathbb
G$ is a complete quantitative algebra.  In fact, the completeness condition
can be encoded by an infinitary axiom to be added to the conditions (1)-(6)
in Definition \ref{qfo}, namely the axiom that requires that any Cauchy
sequence has a limit.  Let us call it the \textit{Cauchy condition}. 

We will be then able to prove that the category of $\Omega$-complete
quantitative algebras is isomorphic to the category of $\Omega$-complete
quantitative first-order structures. 

Further we can define, given a class $\mathbb M$ of $\Omega$-complete
quantitative first-order structures, the concept of complete-subreduced
product: given an indexed family $\str{\M}{i}{I}$ of $\Omega$-complete
quantitative first-order structures, a complete-subreduced product is any
$\Omega$-complete quantitative first-order structure that is a subobject of
the reduced product $\pd_{i\in I}\M_i|_F$ for some proper filter $F$ of
$I$. 

With this in hand, one can redo the proof of Theorem \ref{quasivar} in
these new settings and should obtain a quasivariety theorem for complete
QFOs.

%++++++++++++++++++++++++++++++++++++++++++++++++++++++++++++++++++++++++++++++++++++++

\section{Conclusions}

In this paper we have established the fundamental results on the
axiomatizability of classes of quantitative algebras by equations,
conditional equations and Horn clauses.  These results required substantial
new techniques.  We have not put this work into a fully categorical
framework such as described in~\cite{Barr94,Adamek98,Adamek10,Manes12}.  We
are actively working on understanding these
connections and also the connections with enriched Lawvere theories. There 	is also much to understand when looking at other approaches to quantitative 	reasoning, for example the work of Jacobs and his group~\cite{Cho15}.

%+++++++++++++++++++++++++++++++++++++++++++++++++++++++++++++++++++++++++++++++++++++++++++++++++++++

%\newpage

\section*{Acknowledgements}
Prakash Panangaden is supported by the Natural Science and Engineering
Research Council of Canada, Radu Mardare is supported by the Project
4181-00360 %from The 
of the Danish Council for Independent Research.  We thank
Giorgio Bacci, Florence Clerc, Robert Furber and Dexter Kozen for useful discussions.  We thank the Simons Institute
for hosting the Fall 2016  program on Logical Structures in Computation 
%during the Fall of 2016 
which we all attended at various times, benefitting %and benefited 
from the
stimulating atmosphere there.

\bibliographystyle{alpha}
\bibliography{main}

\end{document}